\newcommand{\remove}[1]{}
\newcommand{\ce}{\mathrm{e}}
\newcommand{\gen}{\mathcal{G}}
\newcommand{\Oh}{O}
\renewcommand{\deg}{\mathrm{deg}}
\newcommand{\eps}{\epsilon}
\newcommand{\vol}{\operatorname{vol}}
\newcommand{\PRG}{\textsf{PRGs}}
\newcommand{\reg}{\mathrm{Reg}}
\renewcommand{\leq}{\leqslant}
\renewcommand{\geq}{\geqslant}
\renewcommand{\le}{\leqslant}
\renewcommand{\ge}{\geqslant}
\newcommand{\thmref}[1]{Theorem~\ref{thm:#1}}
\newcommand{\lemref}[1]{Lemma~\ref{lem:#1}}
\newcommand{\defref}[1]{Definition~\ref{def:#1}}
\newcommand{\secref}[1]{Section~\ref{sec:#1}}
\newcommand{\eq}[1]{\eqref{eq:#1}}
\newcommand{\Pro}[1]{\mathbf{Pr} \left[\,#1\,\right]}
\newcommand{\Prob}[2]{\mathbf{Pr}_{#1} \left[\,#2\,\right]}
\newcommand{\Ex}[1]{\mathbf{E} \left[\,#1\,\right]}
\newcommand{\EXX}[2]{\mathbf{E}_{#1} \left[\,#2\,\right]}
\newcommand{\Cov}[1]{\mathbf{Cov} \left[\,#1\,\right]}
\renewcommand{\tilde}{\widetilde}
\renewcommand{\epsilon}{\varepsilon}
\newcommand{\mylemma}[2]{\begin{lem}\label{lem:#1}#2\end{lem}}
\newcommand{\againlemma}[2]{\noindent\textbf{Lemma~\ref{lem:#1}}
    (from page \pageref{lem:#1})\textbf{.}\emph{#2}}
\newtheorem{thm}{Theorem}  
\newtheorem{fact}[thm]{Fact}
\newtheorem{lem}[thm]{Lemma}
\newtheorem{cor}[thm]{Corollary}
\newtheorem{rem}[thm]{Remark}
\newtheorem{pro}[thm]{Proposition}
\newtheorem{protocol}{Protocol}
\newtheorem{defi}[thm]{Definition}
\renewcommand{\tilde}{\widetilde}
\numberwithin{thm}{section}
\numberwithin{equation}{section}
\renewcommand{\vec}[1]{\boldsymbol{\mathbf{#1}}}
\newcommand{\mat}[1]{\boldsymbol{\mathbf{#1}}}
\numberwithin{thm}{section}
\title{{\bf Gossip vs. Markov Chains, and Randomness-Efficient Rumor Spreading}}
\author{Zeyu Guo\footnote{This work is supported by
NSF grant CCF-1116111.  Part of this work was done while visiting Max Planck Institute for Informatics.}\\
California Institute of Technology\\ Pasadena, USA\\
\texttt{zguo@caltech.edu}
\and
He Sun\footnote{This work has partially been funded by the Cluster of Excellence ``Multimodal Computing and Interaction" within the Excellence Initiative of the German Federal Government. Part of this work was done while visiting California Institute of Technology. }\\
Max Planck Institute for Informatics\\
Saarbr\"{u}cken, Germany\\
\texttt{hsun@mpi-inf.mpg.de}
}
\date{}
\begin{document}

\thispagestyle{empty}

\maketitle

\begin{abstract}

We study gossip algorithms for the rumor spreading problem which asks one node to deliver a rumor to all nodes in an unknown network. We present the first protocol for \emph{any} expander graph $G$ with $n$ nodes such that, the protocol informs every node in $O(\log n)$ rounds with high probability, and uses $\tilde{O}(\log n)$ random bits in total. The runtime of our protocol is tight, and the randomness requirement of $\tilde{O}(\log n)$ random bits almost matches the lower bound of $\Omega(\log n)$ random bits for dense graphs. We further show that, for many graph families, polylogarithmic number of random bits in total suffice to spread the rumor in $O(\mathrm{poly}\log n)$ rounds.
These results together give us an almost complete understanding of the randomness requirement of this fundamental gossip process.

Our analysis relies on unexpectedly tight connections among gossip processes, Markov chains, and branching programs. First, we establish a connection between rumor spreading processes and Markov chains, which is used to approximate the rumor spreading time by the mixing time of  Markov chains. Second, we show a reduction from rumor spreading processes to branching programs, and this reduction provides a general framework to derandomize  gossip processes.
In addition to designing rumor spreading protocols, these novel techniques may have applications in studying parallel and multiple random walks, and  randomness complexity of distributed algorithms.

\vspace{1em}

\textbf{Keywords:} distributed computing, rumor spreading, Markov chains, randomness complexity, branching programs
\end{abstract}

\thispagestyle{empty}

\setcounter{page}{0}

\newpage

\section{Introduction\label{sec:IntroAbs}}

Gossip algorithms is one of the most important communication primitives in large $\mbox{networks}$, and  has been studied under different names such as rumor spreading, information
dissemination, or broadcasting. Efficient gossip algorithms for information spreading  have wide applications in  failure detection~\cite{Renesse1998}, resource discovery~\cite{Harchol-Balter1999}, replicated database systems~\cite{DGH+87,FPRU90}, and modeling the spread
of computer viruses~\cite{conf/soda/BergerBCS05}. Besides computer science,
the dynamics of such processes in social networks also constitutes a research topic in economics and sociology.

The simplest and widely studied form of gossip algorithms is the so-called \emph{push model} of rumor spreading.  Initially, a message, called
 \emph{a rumor}, is placed on an arbitrary node of an unknown network  with $n$ nodes. In subsequent synchronous rounds, every node that knows the rumor picks a neighbor uniformly at random and sends the rumor to the chosen neighbor. This process continues until every node gets the rumor.
 It was shown that this  simple protocol is very efficient on several network topologies~\cite{G11,FPRU90,ES09,ES07}.
 In particular, its \emph{runtime}, the number of rounds required until every node gets the rumor with high probability, is logarithmic in the number of nodes in the graph. Graphs satisfying this property range from complete graphs, hypercubes, Erd{\H o}s-R\'enyi random graphs, and ``quasi-regular" expanders~(i.e.,~expander graphs for which the ratio between the maximum and minimum degree is constant).  In addition to its efficiency, the protocol is local~(i.e.,~no knowledge of global graph structure is needed), simple,  and can tolerate  link failures. More recently, several variations of information spreading protocols have been proposed to allow information to spread efficiently on networks with weak expansion properties~\cite{conf/soda/Censor-HillelS11},
 arbitrary networks~\cite{conf/stoc/Censor-HillelHKM12},
  and dynamic networks~\cite{conf/soda/DuttaPRSV13}.

Most of these algorithms are inherently randomized in both their design and analysis in that they crucially rely on  choosing neighbors \emph{independently and uniformly at random} in each round, i.e., we assume that  every node of the graph has access to a random source of unbiased and independent coins.
 However, it is not known how to physically realize this abstraction in the real world and, from a theoretical point of view,
  it is not clear if this randomization is essential for efficiently disseminating the rumor.
 Hence the randomness requirement, the number of random bits used in total in order to spread the rumor efficiently, becomes a key measurement to evaluate rumor spreading protocols. One of the most studied questions concerns the randomness requirement: how many random bits are sufficient to efficiently spread a rumor to all nodes in a graph? While for any graph with $n$ nodes, the above-mentioned \emph{fully-random} push protocol requires $O(T\cdot n\log n)$ random bits for spreading a rumor within $T$ rounds, it is not difficult to show that for any graph $G$ of $n$ nodes, there is a protocol which uses $3\log n$ random bits in total, and whose runtime is as fast as the standard fully-random protocol~(cf. Corollary~\ref{thm:existence}).
However, the explicit construction of such protocols is more complicated, and a long line of research
 has been devoted to finding randomness-efficient protocols, see \cite{DFS09,GW11,conf/stacs/GSSW12} for instance.

\subsection{Our Results\label{sec:results}}

In this paper we establish a novel reduction from the problem of designing rumor spreading protocols of low randomness complexity to the problem of constructing pseudo-random generators~(\PRG) for branching programs.  To the best of our knowledge, this reduction gives the first application of the model of branching programs in the area of distributed computing and also provides a powerful tool for designing gossip algorithms.

At a high level, the connection between gossip processes and branching programs  is natural because (1) random walks over branching programs resemble the rumor spreading process where nodes send messages to random neighbors, and (2) in a rumor spreading protocol, each node has access to only its own
list of neighbors, and is oblivious to the structure of the network. This is an analogue of {\em oblivious derandomization} achieved by \PRG. However,  rumor spreading appears much more complicated
than small-space computation due to the following facts:  (1) In the rumor spreading process, rumors are ``duplicated'' every round, although every ``existing'' rumor viewed individually performs a random walk. Hence, instead of considering every single random walk performed by any fixed rumor, we need to study the  dynamics of the whole rumor spreading process.
 (2)
 The state of the process at some time essentially depends on the past behavior of all nodes and is by no means computable in small space. Indeed, even knowing if a single node $u$ gets the rumor at some round requires knowing the set of its neighbors having the rumor in the previous rounds, and may require $\deg(u)=\Theta(n)$ bits for dense graphs. For these reasons, this
 connection to small-space computation is delicate and not obvious.

Surprisingly, we show that such a reduction from designing rumor spreading protocols to constructing \PRG\ for branching programs exists. Hence the question of designing randomness-efficient rumor spreading protocols  is now exposed to the numerous techniques used in \textsf{PRG} constructions for small-space computation.  In particular,
\PRG\ with optimal parameters yield protocols whose randomness complexity matches the lower bound or the best known upper bound of existential results  from the probabilistic method (cf. Theorem \ref{cor_reduction}).
Our result is as follows:


\begin{thm}[\textbf{Main Result}]\label{thm:mainresult1}
Let $G$ be a graph with $n$ nodes, spectral gap $\alpha\in(0,1)$ and irregularity $\beta\triangleq\Delta/\delta$. Then there is an explicit protocol using $O((\log (1/\alpha)+\log\beta)\cdot\log n)+\tilde{O}(\log n)$ random bits such that with high probability all nodes get the rumor in $T=O(C \log n)$ rounds,
where $C=(1/\alpha)\cdot\beta^2\max\{1, 1/(\alpha\cdot\Delta^{0.499})\}$.
\end{thm}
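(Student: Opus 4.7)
The plan is to split the proof into two essentially orthogonal components. First, one shows that the truly random push protocol already finishes in $T = O(C\log n)$ rounds with high probability on any graph with spectral gap $\alpha$ and irregularity $\beta$. Second, one establishes a general reduction that lets a pseudo-random generator (PRG) for oblivious branching programs replace the random tape of the push protocol without hurting the success probability by more than $1/\poly(n)$; instantiating a suitable PRG then yields both the explicit construction and the claimed seed length.

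For the runtime analysis, I would use the Markov chain connection advertised in the introduction: view the rumor spreading process as coupled random walks and compare its progress to the mixing time of a Markov chain on $G$ whose spectral gap is essentially $\alpha$. The $\beta^{2}$ factor reflects the discrepancy between the uniform-over-neighbors sending distribution in the push protocol and the stationary distribution of the natural walk on $G$, while the improvement $\max\{1, 1/(\alpha\Delta^{0.499})\}$ comes from a Chernoff-type concentration that becomes available once $\Delta$ is polynomially large, corresponding to the fact that many informed neighbors simultaneously pushing into the same region behave much more regularly than a single random walk.

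For the derandomization, I would realize one round of the protocol as a layered oblivious branching program: each informed node reads $O(\log n)$ random bits to pick a neighbor, so one round is a branching program of length $\tilde{O}(n)$, and concatenating $T$ rounds gives total length $L = \tilde{O}(Cn\log n)$. The crucial step is to argue that, for the purpose of tracking the event ``all nodes informed by round $T$,'' the branching program state can be compressed to width $W = \poly(n, 1/\alpha, \beta)$ via the Markov chain statistic from the first step, rather than tracking the full $2^{n}$ subsets of informed nodes. Applying Nisan's generator with error $\epsilon = 1/\poly(n)$ then produces a seed of length $O(\log L\cdot\log(WL/\epsilon))$, which unpacks to $\tilde{O}(\log n) + O((\log(1/\alpha)+\log\beta)\log n)$, matching the bound in the statement.

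The main obstacle is the width bound in the branching program reduction. The honest state of the protocol at any intermediate point is the informed subset, and a $2^{n}$-width branching program cannot be fooled by a polylogarithmic seed. The entire technical content sits in showing that the Markov chain abstraction is faithful not merely in expectation but as a small-space computation on the protocol's random tape: that a compressed progress potential drives the completion event and evolves in width $\poly(n,1/\alpha,\beta)$. Once this reduction is in place, the runtime bound from the first step and the PRG accounting are both by-now standard, so the entire difficulty of the theorem is concentrated in making the Markov chain comparison effective inside a branching program.
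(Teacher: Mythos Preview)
Your high-level decomposition into (i) a runtime bound for the truly random protocol and (ii) a derandomization via a branching-program PRG is correct, and you correctly locate the difficulty in the width bound. But your proposed resolution --- compressing the informed-set state into a ``progress potential'' that evolves in width $\poly(n,1/\alpha,\beta)$ --- is not what the paper does, and it is far from clear such a compression exists. The paper never builds a branching program that tracks the global event ``all nodes informed''. Instead, for each fixed target $w$, it lower-bounds $\Pr[w\text{ informed in }T\text{ rounds}]$ via Cauchy--Schwarz by a ratio of first and second moments of certain \emph{random walks}: forward walks from the source $s$, and \emph{reversed} walks from $w$ realized by a ``simulate pull by push'' device. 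These moments are entries of powers of explicit Markov chains on $V$ and on $V\times V$ (the latter a lazy Doeblin coupling), and are therefore computable by branching programs of width $n$ and $n^{2}$ that track \emph{one or two individual walks}, not the protocol's global state. The PRG need only fool these walk-tracking programs, which is a far weaker requirement than fooling the full rumor process.

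There is also a structural point you are missing about how the randomness is laid out. In a single round all $n$ nodes consult the tape, so the branching program you describe would have length $\Theta(n)$ per round and $\Theta(nT)$ overall, costing an extra $\log n$ in the seed. The paper avoids this with a two-level construction: an inner PRG for combinatorial rectangles in $[m]^{n}$ compresses all $n$ per-round choices into a seed of length $\ell=\tilde{O}(\log n)$, and an outer INW-type PRG for $(T/2,\,n^{2},\,2^{\ell})$-branching programs produces the $T/2$ inner seeds. The outer program has length $T/2$, not $\tilde{O}(nT)$, and this is what makes the final seed $O((\log(1/\alpha)+\log\beta)\log n)+\tilde{O}(\log n)$. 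Finally, the factor $\max\{1,\,1/(\alpha\Delta^{0.499})\}$ is not a Chernoff artifact; it comes from the constraint $\gamma\le O(\alpha\Delta^{1/2-c})$ required in the $\ell_{2}$-mixing analysis of the Doeblin-coupling chain that controls the second moments.
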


\thmref{mainresult1} implies that, for \emph{any} expander graph $G$ with $n$ nodes,  $\alpha=\Theta(1)$ and irregularity $\beta=O(1)$, the protocol finishes in $O(\log n)$ rounds and
 uses $\tilde{O}(\log n)$ random bits in total. Note that any protocol needs at least $\Omega(\log n)$ rounds to spread the rumor to all nodes, hence our runtime for expander graphs is tight. For the randomness complexity, our result improves the previous best bound of $O(\log^2 n)$ random bits~\cite{conf/stacs/GSSW12}. Since for any expander graph with minimum degree $\delta=n^{\Theta(1)}$, any protocol that finishes in $O(\log n)$ rounds with high probability needs at least $\Omega(\log n)$ random bits~(cf.~\thmref{lb_push}), our bound  is almost tight.

We further study the so-called \emph{averaging process}, which
is a generalization of rumor spreading process and can be considered as the random matching model of load balancing with a certain initial load vector~(cf. \cite{conf/focs/SauerwaldS12,conf/stoc/FriedrichS09}).
We show that this general averaging process can be modeled by branching programs as well, which leads to an explicit averaging protocol. This approach implies the following result~(\thmref{ResultAssumption}) for the rumor spreading problem, and has independent interest in studying other distributed algorithms, e.g. quasi-random load balancing~\cite{journals/siamcomp/FriedrichGS12}. Due to page limitation, we defer the formal discussion about the averaging process to \secref{SimplifiedProtocol}.

\begin{thm}\label{thm:ResultAssumption}
Let $G$ be a graph, $\mathsf{List}(u)$ be the adjacency list of node $u$, and $N(u)$ be the set of neighbors of $u$. We assume that each node $u$ knows the ID of its neighbors $v\in N(u)$, and its index in $\mathsf{List}(v)$ for any neighbor $v\in N(u)$.\footnote{We remark that similar assumptions are also made in other references, e.g.~\cite{conf/soda/Haeulper}, and one can deterministically
use $O(\Delta)$ preprocessing time to guarantee this assumption.}
Then there is an  explicit rumor spreading protocol using $O((\log(1/\alpha)+\log\beta+\log\log n)\cdot\log n)$ random bits, such that with high probability
all nodes get the rumor  in $T=O((1/\alpha)\cdot\beta^2\log n)$ rounds.
\end{thm}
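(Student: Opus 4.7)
The plan is to instantiate the reduction from rumor spreading to branching programs through the averaging-process framework from \secref{SimplifiedProtocol}, and then apply Nisan's pseudorandom generator for oblivious read-once branching programs. First, I would view the push protocol as a special case of an averaging process on $G$ in which each informed node forwards the rumor to a uniformly chosen neighbor. Under the given assumption that every $u$ knows its position in $\mathsf{List}(v)$ for each neighbor $v\in N(u)$, all nodes can map a shared round-random string consistently to outgoing edges without any extra communication or preprocessing randomness, so the whole process is driven by a single global seed that is then compressed by the PRG. This is what eliminates the additive $\tilde{O}(\log n)$ overhead (used in \thmref{mainresult1} to set up a coordinated edge labelling) from the final seed length.

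Second, for any fixed target node $u$, I would construct an oblivious read-once branching program $B_u$ whose input is the concatenation of the $T = O((1/\alpha)\cdot\beta^2\log n)$ per-round strings and whose output indicates whether $u$ has been informed by time $T$. Using the Markov-chain connection established earlier in the paper, I would argue that a state of size $\operatorname{poly}(n)$ suffices: rather than tracking the entire informed set, one tracks a compressed representation of the indicator vector of informed nodes restricted to the $(T-t)$-step backward reachability cone of $u$, whose evolution depends only on the current round's random string. Applying Nisan's PRG (in the block-reading variant for multi-bit rounds) with error $1/n^2$ to fool this width-$\operatorname{poly}(n)$, length-$T$ program uses a seed of length $O(\log T\cdot\log(w\cdot bT/\epsilon)) = O((\log(1/\alpha)+\log\beta+\log\log n)\cdot\log n)$, where $b$ is the per-round bit count. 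A union bound over the $n$ choices of $u$ yields the claimed high-probability guarantee with the stated randomness complexity.

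The main obstacle, foreshadowed in \secref{IntroAbs}, is controlling the width of $B_u$. The na\"ive state space of the rumor spreading process has size $2^n$, and even deciding whether a single node is informed seemingly requires tracking $\Theta(n)$ bits on dense graphs. The resolution I expect to drive the proof is to replace the exact informed-set state by a compressed representation arising from the Markov-chain coupling: at each round the distribution over indicator vectors is approximated by a low-dimensional vector whose one-step evolution is a linear map parameterized by the round's random bits, and the event ``$u$ is informed'' is a monotone function of this compressed state. Making this approximation simultaneously tight enough to preserve the $O((1/\alpha)\beta^2\log n)$ runtime bound and computable by a small-width oblivious program is the technical heart of the proof, and is exactly where the indexing assumption is leveraged to avoid the additional $1/(\alpha\Delta^{0.499})$ slowdown present in \thmref{mainresult1}.
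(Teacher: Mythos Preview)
Your proposal has a genuine gap at the point you yourself flag as the ``technical heart'': you never actually say what the width-$\mathrm{poly}(n)$ branching program is. Your two concrete suggestions---tracking the informed set restricted to the $(T-t)$-step backward reachability cone of $u$, or tracking some unspecified ``low-dimensional vector'' approximating the indicator---do not work as stated. In an expander with $T=\Theta(\log n)$, the backward cone of any node has size $\Theta(n)$ after $O(\log n)$ steps, so the first idea gives width $2^{\Theta(n)}$. The second idea is too vague: you need the state to be an element of a set of size $\mathrm{poly}(n)$ whose transition is a deterministic function of the current state and the round's seed, and you never exhibit such a state.

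The paper's resolution is different from either of your suggestions. It does not build a branching program for the Boolean event ``$u$ is informed''. Instead it analyzes the averaging process (Protocol~\ref{pro_new2}) and bounds the scalar $\Ex{\|\vec{v}(T)\|_2^2}$, where $\vec{v}(t+1)=\vec{v}(t)\mat{M}(y_t)$ and each $\mat{M}(y)$ is a \emph{symmetric idempotent} doubly-stochastic matrix (because the indexing assumption lets each node detect whether it lies in a unique ``good pair'' and perform a symmetric average). The crucial step (Lemma~\ref{lem_bp_new}) writes $\mat{M}(y)=\tfrac{1}{2}\mat{I}+\tfrac{1}{2}\mat{P}(y)$ with $\mat{P}(y)$ a \emph{permutation} matrix; expanding $\|\vec{e}_u\prod_i \mat{M}(y_i)\|_2^2$ over lazy/non-lazy patterns turns it into an average, over fixed patterns $(c,c')$, of quantities each computable by tracking a single pair of nodes in $V\times V$---hence width $n^2$. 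This is where the indexing assumption is used: it is what makes the one-round averaging matrix a matching (so $\mat{P}(y)$ is a permutation), not merely what removes a $\tilde O(\log n)$ additive term from the seed. Your explanation of the assumption's role, and of how the $1/(\alpha\Delta^{0.499})$ factor disappears, is therefore off target as well.
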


Our third result is for general graph with conductance $\phi$. In contrast to \thmref{mainresult1} and
\thmref{ResultAssumption} that are based on branching programs, this result relies on the observation that the rumor spreading process enjoys
nice locality when the maximum degree is small.

\begin{thm}\label{thm:gen_graph_result}
Let $G$ be a graph with $n$ nodes, conductance $\phi$ and irregularity $\beta$. Then there is an explicit protocol using
 $O\big((1/\phi)\cdot \beta\cdot \log n \cdot(\log \log n +\log \Delta)\big)$ random bits in total, such that with high probability
all nodes get the rumor in $O((1/\phi)\cdot\beta\cdot \log n)$ rounds.
\end{thm}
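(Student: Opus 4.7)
The plan is to derandomize the classical fully-random push protocol round by round, without appealing to the branching-program machinery of \thmref{mainresult1} and \thmref{ResultAssumption}. I take as given that the fully-random push protocol informs every node in $T=O((1/\phi)\beta\log n)$ rounds with high probability on any graph with conductance $\phi$ and irregularity $\beta$, via the standard conductance-based analysis.

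First, I would replace the per-round random source with a small-seed generator. In each round $t$, each informed node $u$ needs a uniform index in $[d(u)]$, costing $\log d(u)\le\log\Delta$ bits under full independence. Instead, I would sample a single hash function $h_t\colon V\to[\Delta]$ from a $k$-wise $\delta$-almost independent family, and let $u$'s choice be $h_t(u)\bmod d(u)$. Using a construction whose sample-space index costs $O(\log\log n)$ bits, each $h_t$ can be specified by a seed of length $O(\log\log n+\log\Delta)$. Drawing independent seeds across the $T$ rounds gives total randomness $O(T\cdot(\log\log n+\log\Delta))$, matching the target bound.

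Second, I would exploit the locality of the push process, which is the observation alluded to in the statement. For any target node $v$ and time $t\le T$, the event $\mathcal{E}_{v,t}=\{v\text{ is informed by round }t\}$ depends only on the random choices of nodes in the ball $B(v,t)$, and the final guarantee only requires a union bound over the at most $nT$ such events. Hence the pseudorandom source need only preserve expansion statistics on polynomially many local events, not on exponentially many global configurations.

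Third, I would redo the one-round expansion lemma under this limited independence. The core statement of the conductance analysis is that if $S_t$ is the current informed set, then an $\Omega(\phi/\beta)$-fraction of $\partial S_t$ joins the informed set in the next round, proved via Chernoff on independent edge-activation indicators. I would replace this by a $k$-th moment tail bound under $k$-wise almost independence (in the spirit of Schmidt--Siegel--Srinivasan), choosing $k$ to give error $1/\mathrm{poly}(n)$, sufficient for a union bound over the $O(T)$ rounds and the $n$ target nodes. The slight bias introduced by the ``$\bmod d(u)$'' step is absorbed into the $\delta$-closeness parameter of the family.

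The principal difficulty is calibrating the generator so that the per-round seed is $O(\log\log n+\log\Delta)$ rather than $O(\log n)$, while the one-round expansion lemma still survives. The reduction from $\log n$ to $\log\log n$ must come directly from the locality observation: because each concentration event in the analysis involves only polynomially many contributing variables, a generator whose seed is dominated by $\log\log n$ (to index a polynomial-size sample space) plus $\log\Delta$ (for the per-node choice) should suffice. Making the generator construction, the expansion lemma under limited independence, and the union bound all line up at these parameters is where the main technical work lies.
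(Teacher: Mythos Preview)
Your high-level plan---derandomize round by round with a short per-round seed, then sum the seeds over $T=O((1/\phi)\beta\log n)$ rounds---is exactly the paper's strategy. The gap is in how you obtain a per-round seed of $O(\log\log n+\log\Delta)$.

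A $k$-wise (almost-)independent family of functions $h_t:V\to[\Delta]$ must distinguish the $n$ domain points, so its seed is at least $\Omega(\log n)$ bits even for $k=2$; there is no construction with seed $O(\log\log n+\log\Delta)$. Your ``locality via balls $B(v,t)$'' does not help here: the ball after $t=\Theta((1/\phi)\beta\log n)$ rounds can have $\Delta^{t}$ nodes, and more to the point the per-round growth event $|I_{t+1}\setminus I_t|\ge c\,e(I_t,U_t)/\Delta$ already involves up to $n$ contributing indicators in a single round. Moreover, if you insist on $1/\mathrm{poly}(n)$ tail per round via $k$-th moments, you need $k=\Theta(\log n)$, which forces seed $\Omega(\log^2 n)$ per round.

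The paper achieves the $O(\log\log n+\log\Delta)$ seed by a genuinely different device: a \emph{two-level hashing}. First an explicit unbalanced $(2,(1-\epsilon^2/4)D)$-expander $\Gamma:[n^c]\times[D]\to[M]$ with $D,M=(\Delta\log n)^{O(1)}$ maps node IDs into a set of size $(\Delta\log n)^{O(1)}$ so that, for a random $x\in[D]$, any fixed pair of IDs collides with probability at most $\epsilon/2$. Second, a pairwise independent generator $\gen:\{0,1\}^\ell\to[m]^M$ with $\ell=O(\log M+\log m)=O(\log\log n+\log\Delta)$ assigns each of the $M$ buckets a value in $[m]$. Node $u$ uses block $\Gamma(u,x)$ of $\gen$. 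The expander step is what lets the pairwise generator have only $(\Delta\log n)^{O(1)}$ outputs, which is the source of the $\log\log n$ term; without it you are stuck at $\log n$.

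The per-round analysis is then only \emph{second-moment}: one shows $|\mathbf{E}[X_{v\to u}]-1/\Delta|\le\epsilon$ and $\mathrm{Cov}(X_{v\to u},X_{v'\to u'})\le\epsilon$, and applies Cauchy--Schwarz to get that with \emph{constant} probability the informed set grows by a $\Theta(\phi/\beta)$ factor (or analogously in the other phases). The high-probability guarantee comes not from per-round concentration but from viewing the number of rounds per phase as a sum of $O((1/\phi)\beta\log n)$ independent geometric random variables and applying a Chernoff bound to that sum. So you should drop the $k$-th-moment plan, work with pairwise (almost-)independence per round, and add the unbalanced-expander hashing layer to shrink the domain before applying the pairwise generator.
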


The runtime in \thmref{gen_graph_result} matches the upper bound known in the truly random protocol, and is tight, in the sense that there are graphs with diameter $\Omega((1/\phi)\log n)$~\cite{CLP10b}. For the randomness requirement, our result improves the previous best one in \cite{conf/stacs/GSSW12}, which needs $O((1/\phi)\log^2 n)$ random bits in total and only holds for graphs with $\beta=O(1)$.

Our protocol takes advantage of the locality by using a ``two-level hashing'' construction: We use a family of objects called unbalanced expanders to hash the node IDs into a smaller space, and then apply the classical pairwise independent generators.
 This construction yields much smaller seed length than using pairwise independent generators alone. The protocol has the advantage of being very simple. Furthermore, a variant of this protocol using \PRG\ for combinatorial rectangles achieves the best possible runtime for strong expanders:

\begin{thm}\label{thm:expander_result}
Let $G$ be a graph such that $\Delta/\delta=1+o(1)$ and $\alpha=1-o(1)$. Then there is a protocol using  $O(\log n
\cdot(\log\log n+\log\Delta))$ random bits in total, such that with high probability all nodes get the rumor in
$\log n+\ln n+o(\log n)$ rounds.
\end{thm}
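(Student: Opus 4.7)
The plan is to build on the protocol of \thmref{gen_graph_result} with two modifications tailored to strong expanders. I retain the two-level hashing skeleton: an unbalanced expander $\Gamma\colon [n]\to [m]$ with $m=\polylog n$ compresses node identifiers, and in each round $t$ a hash $h_t\colon[m]\to[\Delta]$ determines each informed node $v$'s neighbor choice via the index $h_t(\Gamma(v))\bmod\deg(v)$. The modification is that, instead of drawing the $h_t$'s from $T$ independent pairwise-independent families, I draw the tuple $(h_1,\dots,h_T)$ from the output of a pseudorandom generator for combinatorial rectangles with $T$ independent hash slots. Using an Indyk-style PRG with error $\epsilon=n^{-\Omega(1)}$, the seed length becomes $O(\log n\cdot(\log\log n+\log\Delta))$, matching the target bound.

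The analysis has two stages. First I invoke the sharp runtime analysis of the truly-random push protocol on strong expanders: when $\alpha=1-o(1)$ and $\Delta/\delta=1+o(1)$, the process has a \emph{doubling phase} of about $\log n$ rounds during which the informed set roughly doubles each round (by a Chernoff/expander-mixing bound applied to the number of distinct targets selected by the current informed set), followed by a \emph{shrinking phase} of about $\ln n$ rounds during which the uninformed set contracts by a factor of $\approx 1/e$ per round, because every uninformed node $u$ has $\approx\deg(u)$ informed neighbors and each pushes to $u$ with probability $\approx 1/\deg(v)$. This gives a finishing time of $\log n+\ln n+o(\log n)$ w.h.p.\ under truly random bits, and applied to each potential ``last uninformed'' node $u$ it yields $\Pr[u\text{ uninformed at time }T]=o(1/n)$.

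Second, I show that this probability estimate carries over when truly random bits are replaced by the PRG output. Fix $u$. The event ``$u$ uninformed at time $T$'' factors across rounds: in round $t$, conditioned on the history, it depends only on the coordinates of $h_t$ indexed by $\Gamma(N(u)\cap I_t)$. Intersecting over $t\in\{1,\dots,T\}$ gives an event supported on disjoint blocks of the PRG output, i.e.\ a combinatorial rectangle. The rectangle PRG fools such an event up to additive error $n^{-\Omega(1)}$, so the bound on $\Pr[u\text{ uninformed at time }T]$ is preserved up to an additive $o(1/n)$; a union bound over $u\in V$ completes the argument.

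The hard part will be matching the additive $o(\log n)$ slack in the runtime rather than merely a constant-factor bound. This forces (i) PRG error $1/\poly(n)$ so as to survive a union bound over $n$ candidate last-informed nodes; (ii) a careful accounting of the expander-mixing error, which must be $o(1)$ per round to preserve the precise $1/e$-decay constant in the shrinking phase under the assumption $\alpha=1-o(1)$; and (iii) verifying that the rectangle factorization survives conditioning on past-round choices, which it does because past and future rounds occupy disjoint PRG coordinates and conditioning only restricts the realized $I_t$ without altering the per-round marginal of $h_t$.
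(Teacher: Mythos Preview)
Your second stage contains a genuine gap. The event ``$u$ is uninformed at time $T$'' is \emph{not} a combinatorial rectangle in $(h_1,\dots,h_T)$. A rectangle is an event $\bigcap_t\{h_t\in A_t\}$ for \emph{fixed} sets $A_t$, but here the round-$t$ test ``no $v\in N(u)\cap I_t$ pushes to $u$'' has $I_t$ determined by $h_1,\dots,h_{t-1}$. Conditioning on the history does turn the round-$t$ test into a fixed subset of the $h_t$-space, but you then have a different rectangle for each history, and the unconditional event is a union over histories of history-dependent rectangles --- i.e.\ a read-once branching program, not a rectangle. Your item (iii) asserts precisely the false step: that conditioning ``only restricts the realized $I_t$'' is true, but it does not make the \emph{unconditional} event a product test, and a rectangle PRG carries no guarantee against adaptive tests. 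A second, independent problem: your $\Gamma:[n]\to[m]$ is a fixed map with $m=\polylog n$, so nodes colliding under $\Gamma$ make identical choices in \emph{every} round, which can stall the process.

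The paper sidesteps both issues by spending \emph{fresh independent} randomness $(x_t,y_t)$ in each round; since $T=O(\log n)$ and the per-round seed is $O(\log\log n+\log\Delta)$, no cross-round PRG is needed to hit the target seed length. The within-round generator $\gen$ is required to be \emph{simultaneously} pairwise-independent and an $\epsilon'$-PRG for $\mathsf{CR}_{[m]^M}$ with $\epsilon'=2^{-\sqrt{\log\log n}}$. Pairwise independence (via Bonferroni, not Chernoff) drives the doubling phase to $\log n+o(\log n)$ rounds. The rectangle property is applied \emph{inside a single round}: for fixed $I_t$ and fixed $u$, the event $\bigwedge_{v\in N(u)\cap I_t}\{v\text{ does not push to }u\}$ is a genuine rectangle in the $M$ output coordinates of $\gen$, once the unbalanced expander $\Gamma(\cdot,x_t)$ (with fresh $x_t$ each round, and $K=\Delta$) has separated the $\le\Delta$ neighbors into distinct coordinates. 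This gives $\Pr[u\notin I_{t+1}\mid I_t]\le(1-1/\Delta)^{(1-o(1))|N(u)\cap I_t|}+o(1)$, whence $\mathbf{E}[|U_{t+1}|\mid U_t]\le(e^{-1}+o(1))|U_t|$; iterating and applying Markov's inequality finishes the shrinking phase in $\ln n+o(\log n)$ rounds.
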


The runtime in \thmref{expander_result} matches the precise runtime  for the truly random protocol~\cite{FPRU90,ES09,ES07}, and is known to be tight~\cite{ES09}. Moreover, our protocol uses $O(\log n\cdot(\log\log n+\log\Delta))$ random bits in total, in contrast to $\Omega(\log^3 n)$ random bits used for all previous protocols, e.g.~\cite{FH09,conf/stacs/GSSW12}.
 These four results~(\thmref{mainresult1}--\thmref{expander_result}), together with the existential proof~(Corollary~\ref{thm:existence}) and the lower
bound analysis~(\thmref{lb_push}), give us an almost complete understanding of the randomness
complexity of this fundamental gossip problem.

\begin{rem}
One common feature of our protocols is that all randomness is picked by the initial node having the rumor, and the whole rumor spreading process becomes deterministic once the random seed is picked. We remark that, through our protocol, the whole rumor spreading dynamics is encoded in this short random seed, and any node can recover the rumor spreading process once it receives the random seed. This feature may have applications in studying algebraic gossip algorithms, and other settings.
\end{rem}

\subsection{Techniques\label{sec:techniques}}

To derive the results above, we develop several new techniques for studying gossip processes. We highlight some of them in this subsection.

\paragraph{Approximation via Random Walks.}

The usual analyses for fast rumor spreading proceed by showing some measure (e.g. the volume of the set of informed/uninformed nodes) increases or decreases over time.
Our approach is fundamentally different from previous work. Roughly speaking, we approximate the rumor spreading process by a collection of random walks and then use the rapid mixing of the random walks to prove the property of fast rumor spreading. It turns out that the pieces of local information provided by these random walks give a surprisingly good control of the global behavior of rumor spreading, despite that the walks are complicated and highly correlated.

Formally, we approximate the rumor spreading process by various random walks, distinguished by whether the walks are lazy or non-lazy in each round.
Each walk is associated with a positive number called its \emph{weight}.
A node $u$ is informed if the total weights of random walks reaching $u$ is positive. By the Cauchy-Schwarz inequality, we lower bound the probability of this event in terms of the expectation of the total weights reaching $u$ as well as its second moment.


\paragraph{Analysis of Markov Chains.} With the weights chosen intelligently, the expectation and the second moment of total weights reaching a node are computed by certain Markov chains.
The expected total weights are computed by the chain
 $\mat{M}$ representing a lazy random walk in the graph.
 It follows from the rapid mixing of $\mat{M}$ that it can be well estimated using the stationary distribution of $\mat{M}$. The case for the second moments is more complicated as they correspond to a {\em non-reversible} chain $\mat{M}'$.
A key result we manage to show is that $\mat{M}'$ and $\mat{M}\otimes\mat{M}$ have very close stationary distributions and comparable mixing time. We remark that this result is interesting on its own since $\mat{M}'$ is a very natural Markov chain, closely related to the Doeblin coupling \cite{Lin02}.

\paragraph{Simulating Pull by Push.}

While a randomness-efficient protocols using a global seed can be easily implemented in the push model, the ``dual'' protocol in the pull model is not physically realizable,  as it is impossible for a node to perform random pulls before getting the seed.
Using the technique called {\em simulating pull by push}, we are able to employ the analysis for the pull model while actually using the push model. This is crucial in our analysis, since when most nodes already have the rumor, the random walks defined via push operations become too congested and correlated, whereas the ``reversed'' random walks using pull operations work well.

%

%

\subsection{Related Work}

There is a large amount of literature devoted to various aspects of rumor spreading.
The majority of research studies the rumor spreading time in terms of the graph properties, e.g. conductance~\cite{G11,CLP10b}, mixing time~\cite{BGPS06}, diameter~\cite{FPRU90} and degree~\cite{FPRU90}. For instance, the first explicit connection between randomized rumor spreading and graph expansion was established by \citet{MS08}, who proved that on any regular graph with conductance $\phi$, the protocol finishes in $O((1/\phi)\cdot\log n)$ rounds. More recent work
includes the study of rumor spreading in social networks~\cite{conf/soda/FountoulakisPS12,conf/stoc/DoerrFF11} and
dynamic graphs~\cite{conf/soda/DuttaPRSV13,conf/esa/ClementiCDFIPPS13}, and algebraic gossip algorithms~\cite{conf/stoc/Haeupler11}.

The study of determining and reducing the amount of randomness required for rumor spreading has been studied extensively in the past years. \citet{DFS08} proposed a \emph{quasi-random} version of the rumor spreading push protocol. In contrast to $O(n\log^2n)$ random bits that used in the standard push model, the quasi-random rumor spreading model uses $\Theta(n\log n)$ random bits, and has been shown to be efficient on several graph topologies~\cite{FH09,DFS09}. Further progress along this line include
\cite{GW11,conf/stacs/GSSW12}.
 Besides this, researchers also studied the question of designing randomness-efficient or deterministic protocols for similar problems.  For instance, \citet{conf/soda/Haeulper} presented one deterministic gossip algorithm for the $k$-local broadcast
and the global broadcast problem. However, the algorithms in \cite{conf/soda/Haeulper} require that  all nodes in the graph have unique identifiers~(UID), and every node knows
its own and the neighbors' UIDs. Hence the techniques developed there cannot be applied to our setting.

In addition to rumor spreading, the technique of  pseudorandomness was also studied in other settings of online algorithms, e.g., in the context of Local Computation Algorithms~(\textsf{LCA})~\cite{AlonRVX12}, and complexity analysis of information spreading in dynamic networks~\cite{conf/soda/DuttaPRSV13}.

\subsection{Notations}

Let $G=(V,E)$ be a connected, undirected, and simple graph with $n$ nodes. For any node $u$, the degree of $u$ is represented by $\deg(u)$. Let $\Delta,\delta$ and $d$ be the maximum, minimum and average degree of $G$, respectively, and call $\beta\triangleq\Delta/\delta$ the \emph{irregularity} of $G$.
We use $\mat{A}_G$ to express the adjacency matrix of $G$, and $\mat{N}_G\triangleq \mat{D}^{-1/2}\mat{A}_G\mat{D}^{-1/2}$, where $\mat{D}$ is the  $n\times n$ diagonal matrix defined by $\mat{D}_{uu}=\deg(u)$ for $u\in V[G]$.
Define the $n$ real eigenvalues of $\mat{N}_G$ by $1=\lambda_1\geq\cdots\geq \lambda_n\geq -1$, and let $\lambda_{\max}\triangleq\max\{\lambda_2,|\lambda_i|\}$.
The spectral gap $\alpha$ is defined by $\alpha\triangleq 1-\lambda_2$, whereas the absolute spectral gap is defined as $1-\lambda_{\max}$. For simplicity, we also use $\alpha$ to express the spectral expansion of a reversible Markov chain if the chain is clear from the context.

 By $\log x$ we denote the binary logarithm of $x$.
For any integer $m$, define $[m]\triangleq\{0,\ldots,m-1\}$.
With high probability stands for with probability $1-n^{-\Theta(1)}$.

\section{Gossip vs. Markov Chains\label{sec_expander}}

Let $G=(V,E)$ be an undirected and simple graph with $V[G]=[n]$.
We consider only $T'$-round protocols for $G$, in which nodes send rumors only for the first $T'$ rounds, and assume that $T'=O(n^c)$ for a constant $c>0$. Through this section, we  assume that each node has a unique identifier~(ID), and each node initially solely knows its own ID, which is from $0$ to $n^c$ for a constant $c$. Let $s$ be  the initial node having the rumor.
For simplicity, we assume the adjacency list of each node $u$ has length $\Delta$, and the last $\Delta-\deg(u)$ neighbors are $u$ itself, i.e. we add $\Delta-\deg(u)$ self-loops for ever node $u$. However,  we use $\deg(u)$ and $N(u)$ to represent the degree and the set of neighbors of $u$ respectively in the underlying simple graph.

\subsection{Preliminaries}

Given $d\in\mathbb{N}$ and a finite set $S=\prod_{i\in [d]}S_i$, define
$\mathsf{CR}_S\triangleq\left\{\prod_{i\in [d]} A_i: A_i\subseteq S_i\right\}$.
The members of $\mathsf{CR}_S$ are called {\em combinatorial rectangles in $S$} and $d$ is their {\em dimension}.
For $\epsilon>0$, $d\in\mathbb{N}$, and a finite set $S=\prod_{i\in [d]}S_i$, we call $\gen:\{0,1\}^{\ell}\to S$ an $\epsilon$-\textsf{PRG} for $\mathsf{CR}_S$ with seed length $\ell$ if  $
\left|
\Prob{x\in \{0,1\}^{\ell}}{\gen(x)\in A}
-|A|/|S|
\right|
\leq \epsilon
$ for any $A\in\mathsf{CR}_S$.

The second family of \PRG\ that we will use is \PRG\  for Branching Programs\footnote{See \defref{BranchingProgram} for the formal definition of branching programs.}. Let $\mathcal{B}$ be a branching program of length $L$, width $W$ and degree $D$. For $x=(x_1,\dots,x_L)\in [D]^L$ and a node $(s,0)$ on the first layer, define $\mathcal{B}(s,x)\in [W]$ such that the random walk that starts from $(s,0)$ and takes the edge with label $x_i$ at the $i$th step for $1\leq i\leq L$ finally arrives at $(\mathcal{B}(s,x), L)$. We call a function $\gen:\{0,1\}^{\ell}\to [D]^L$ an $\epsilon$-\textsf{PRG} for $(L,W,D)$-branching programs if for any $(L,W,D)$-branching program, and any node $(s,0)$ on the first layer, it holds that
$$
\sum_{u\in [W]}\left|\Prob{x\in\{0,1\}^\ell}{\mathcal{B}(s,\gen(x))=u}-\Prob{x\in [D]^L}{\mathcal{B}(s,x)=u}\right|\leq \epsilon.
$$


\subsection{Analysis of the Prototype Protocol}

In this subsection we
relate rumor spreading processes to Markov chains, and show how the mixing time of certain Markov chains relates to the rumor spreading time. We first analyze the following prototype of  rumor spreading  protocols, which includes the standard push protocol as a special case.

\begin{protocol}[Prototype of Rumor Spreading Protocols]\label{pro_prototype}
Let $\mathcal{D}$ be a distribution over the set of functions $f:[T]\times V[G]\to [\Delta]$.
Sample $f$ according to $\mathcal{D}$.
In the $i$th round, an informed node $u$ sends the message to its $f(i,u)$th neighbor in its adjacency list.
\end{protocol}

We are primarily interested in analyzing Protocol~\ref{pro_prototype} when $\mathcal{D}=\mathcal{U}$ is the uniform distribution, i.e. $f(i,u)$ are chosen from $[\Delta]$ independently and uniformly at random for all $i$ and $u$.

\paragraph{Approximation via Random Walks.}

To analyze the runtime of Protocol~\ref{pro_prototype},  we compare the process of rumor spreading with a random walk on a branching program.
 For random walks, a walk always stays at a single node throughout the process, although this node keeps changing. On the other hand, in the process of rumor spreading, each informed node $u$ randomly sends the rumor to one of its neighbors $v$ in each round, and then $u,v$ are both informed subsequently. So we may think of rumor spreading as many random walks in parallel: When node $u$ sends the rumor  to $v$,
 one random walk moves from $u$ to $v$ whereas another one stays at $u$.
 In order to characterize this behavior, we introduce the notion of forward and reversed random walks. For any round $i\in [T]$ and node $u\in V[G]$, denote  by $\tilde{f}(i,u)$ the $f(i,u)$th neighbor of $u$ in its adjacency list.

\begin{defi}[Forward random walks]\label{defi_walk}
Consider a random rumor spreading process in $T$ rounds on a graph $G$ using Protocol \ref{pro_prototype} determined by $f\sim\mathcal{D}=\mathcal{U}$.
A forward random walk of length $k\in [T]$ with pattern $S=(s_0,\dots,s_{k-1})\in\mathcal{C}_k\triangleq
\{\text{lazy}, \text{non-lazy}\}^k$ is a sequence of $k+1$ nodes $(p_0,\dots,p_k)$ of $G$, such that for all $i\in [k]$: (i) if $s_i=\text{lazy}$, then $p_{i+1}=p_i$; (ii) if $s_i=\text{non-lazy}$, then $p_{i+1}=\tilde{f}(i,p_i)$.
\end{defi}

We also define reversed random walks, tailored to the idea of simulating pull using push.
Roughly speaking, a reversed random walk takes a step from node $v$ to $u$ if $u$ is the
unique node pushing to $v$. For technical reasons, we introduce auxiliary random variables $r_{i,u}$ uniformly distributed over $[0,1]$ for each $i\in [T]$ and $u\in V[G]$ to equalize the probabilities of successful steps of reversed random walks made from different nodes. These random variables only appear in the analysis, not in the protocol constructions.
Then the reversed random walks are determined by the randomness $f\sim\mathcal{D}$ together with $r_{i,u}$, whereas the forward walks are solely determined by $f$.
See \defref{RevsersedRW} for the formal definition of reversed random walks.

For $k\in [T/4]$, $u,v\in V[G]$ and $S\in\mathcal{C}_k=\{\text{lazy}, \text{non-lazy}\}^k$,
let $X^S_{u,v}$ (resp. $Y^S_{u,v}$) be the indicator random variable of the event that the unique forward (resp. reversed) walk with pattern $S$ and initial node $u$ is at node $v$ in the $k$th round.
For $\gamma\in (0,1)$, let $\mathcal{D}_{\gamma,k}$ be the distribution over $\mathcal{C}_k$ where entries are independently chosen to be lazy with probability $1-\gamma$.

\newcommand{\lembound}{
For Protocol~\ref{pro_prototype} with $\mathcal{D}=\mathcal{U}$ and initial node $s$, any $0\leq k\leq T/4$, and $\gamma \in (0,1)$, a node $w$ is informed in $T$ rounds with probability at least
\begin{equation}\label{eq:lowerboundprob}
\frac{\sum_{u,v\in V[G]}\EXX{r,S}{X^S_{s,u}}\EXX{r,S}{X^{S}_{s,v}}
\EXX{r,S}{Y^S_{w,u}}\EXX{r,S}{Y^{S}_{w,v}}}
{\sum_{u,v\in V[G]}\EXX{r,S,S'}{X^S_{s,u}X^{S'}_{s,v}}
\EXX{r,S,S'}{Y^S_{w,u}Y^{S'}_{w,v}}}
\end{equation}
where $r$, $S$ and $S'$ are independent with distributions $\tilde{\mathcal{D}}$,  $\mathcal{D}_{\gamma,k}$ and $\mathcal{D}_{\gamma,k}$ respectively.}

We fix an arbitrary node $w\in V[G]$, and study the probability that node $w$ is informed in $T$ rounds. Clearly, if there exist a forward random walk $p$ from $s$ to some node $u$ and a reversed random walk $p'$ from $w$ to $u$, then the rumor is sent from $s$ to $u$ following $p$ and then from $u$ to $w$ following the reversal of $p'$.
Also note that the two walks exist   if and only if $X^S_{s,u}Y^{S'}_{w,u}>0$ for some $S$, $S'$ and $u$.
Therefore it holds for any $k\in[T/4]$ that
\begin{equation}\label{eq:lowerboundsingle}
\Pro{w \text{ receives the message in $T$ rounds}}\geq \Pro{\sum_{S,S'\in\mathcal{C}_k, u\in V[G]} {X}^S_{s,u}{Y}^{S'}_{w,u}>0},
\end{equation}
where the probability is taken over the randomness $f\sim\mathcal{D}$ and $r_{i,u}$.

We want to reduce the global event $\sum_{S,S'\in\mathcal{C}_k, u\in V[G]} {X}^S_{s,u}{Y}^{S'}_{w,u}>0$ to local events $X^S_{s,u}$ and $Y^{S'}_{w,u}$. By using Cauchy-Schwarz inequality, and linearity of expectation, we show that \eq{lowerboundsingle} is lower bounded by
\begin{equation}\label{eq:Expexpansion}
\frac{\sum_{u,v\in V[G]}\EXX{r,S}{X^S_{s,u}}\EXX{r,S}{X^{S}_{s,v}}
\EXX{r,S}{Y^S_{w,u}}\EXX{r,S}{Y^{S}_{w,v}}}
{\sum_{u,v\in V[G]}\EXX{r,S,S'}{X^S_{s,u}X^{S'}_{s,v}}
\EXX{r,S,S'}{Y^S_{w,u}Y^{S'}_{w,v}}}.
\end{equation}
Hence the runtime of Protocol~\ref{pro_prototype} can be derived by analyzing multiple random walks individually or pairwisely.
See \lemref{lembound} for detailed analysis.

\paragraph{Analysis using Markov Chains.}

We study the expectations in
\eq{Expexpansion} in terms of finite-state Markov chains.
For simplicity, we represent these Markov chains by stochastic matrices.
Recall that a stochastic matrix $\mat{M}''\in\mathbb{R}^{n\times n}\otimes\mathbb{R}^{n\times n}$ is a \emph{coupling} of $\mat{M},\mat{M}'\in\mathbb{R}^{n\times n}$ if (i)  $\sum_{x\in [n]}\mat{M}''_{(u,w)(v,x)}=\mat{M}_{u,v}$ for any $u,w,v\in [n]$, and (ii) $\sum_{v\in [n]}\mat{M}''_{(u,w)(v,x)}=\mat{M}'_{w,x}$ for any $u,w,x\in [n]$.

We  define the ``bi-lazy'' analogue of lazy Markov chains with respect to a coupling where the two chains choose to be lazy or non-lazy independently.

\begin{defi}
For $\gamma\in[0,1]$, let $\mathcal{L}_{\gamma}(\mat{M})\triangleq(1-\gamma)\mat{I}+\gamma\mat{M}$
be the lazy Markov chain.
\end{defi}

\begin{defi}[Lazy coupling]
Let $\mat{M}''$ be a coupling of $\mat{M},\mat{M}'\in\mathbb{R}^{n\times n}$.
For $\gamma,\gamma'\in [0,1]$, define
$
\mathcal{L}_{\gamma,\gamma'}(\mat{M}'')\triangleq(1-\gamma)(1-\gamma')(\mat{I}\otimes\mat{I})
+(1-\gamma)\gamma'(\mat{I}\otimes\mat{M}')
+\gamma(1-\gamma')(\mat{M}\otimes\mat{I})
+\gamma\gamma'\mat{M}''.
$
That is, $\mathcal{L}_{\gamma,\gamma'}(\mat{M}'')$ is  a coupling of $\mathcal{L}_\gamma(\mat{M})$ and $\mathcal{L}_{\gamma'}(\mat{M}')$.
\end{defi}

\begin{defi}[Doeblin coupling \cite{Lin02}]
Let $\mat{M}\in\mathbb{R}^{n\times n}$ be a stochastic matrix. The Doeblin coupling $\mathcal{Q}(\mat{M})$ of two copies of $\mat{M}$ is defined as
$$
\mathcal{Q}(\mat{M})_{(u,w)(v,x)}\triangleq\begin{cases}
(\mat{M}\otimes \mat{M})_{(u,w)(v,x)} & u\neq w,\\
\mat{M}_{uv} & u=w, v=x,\\
0 & u=w, v\neq x.
\end{cases}
$$
\end{defi}

Using the above definitions, we are able to characterize the expectations in \eqref{eq:Expexpansion} in terms of Markov chains. For instance, the first and the second moments $\EXX{r,S}{X^S_{u,v}}$ and $\EXX{r,S,S'}{X^S_{u,v}X^{S'}_{w,x}}$ about forward random walks are characterized by the chains $\mathcal{L}_{\gamma}\left(\mat{M}_{\reg(G)}\right)$ and
$\mathcal{L}_{\gamma,\gamma}\circ\mathcal{Q}\left(\mat{M}_{\reg(G)}\right)$
respectively, and similar results hold for reversed walks. Hence we reduce the problem of lower bounding \eqref{eq:lowerboundsingle} to the study of these Markov chains.

Notice that matrix $\mathcal{Q}(\mat{M})$ agrees with $\mat{M}\otimes \mat{M}$ except on the rows indexed by $(u,u)$, $u\in V[G]$. This is a manifestation of the fact that the ``non-lazy" steps from the same node made by two different forward/reversed random walks are not independent, i.e., every informed node can only send the rumor to one neighbor in each round. Despite this complication, we show that $\mathcal{Q}(\mat{M})$ is actually quite close to $\mat{M}\otimes \mat{M}$:

\newcommand{\lemmarkov}{
Let $r$, $S$ and $S'$ be independent with distributions $\tilde{\mathcal{D}}$ (induced by $\mathcal{D}=\mathcal{U}$), $\mathcal{D}_{\gamma,k}$ and $\mathcal{D}_{\gamma,k}$ respectively.
Then for stochastic matrices
$\mat{M}_1=\mathcal{L}_{\gamma}\left(\mat{M}_{\reg(G)}\right)$,
$\mat{M}_2=\mathcal{L}_{\gamma,\gamma}\circ\mathcal{Q}\left(\mat{M}_{\reg(G)}\right)$,
$\mat{M}_3=\mathcal{L}_{\gamma}\circ\mathcal{L}_{\gamma'}\left(\mat{M}_{\reg(G)}
\right)$,
$\mat{M}_4=\mathcal{L}_{\gamma,\gamma}\circ\mathcal{Q}
\circ\mathcal{L}_{\gamma'}\left(\mat{M}_{\reg(G)}\right)$,
$\gamma'\triangleq (1-1/\Delta)^{\Delta-1}$,
and any $u,v,w,x\in V[G]$, the following statements hold:
\begin{enumerate}
\item $\EXX{r,S}{X^S_{u,v}}
=\left\langle\vec{e}_u\mat{M}_1^k,\vec{e}_v\right\rangle$,
\item $\EXX{r,S,S'}{X^S_{u,v}X^{S'}_{w,x}}=\left\langle\vec{e}_{(u,w)}\mat{M}_2^k, \vec{e}_{(v,x)}\right\rangle$,
\item $\EXX{r,S}{Y^S_{u,v}}
=\left\langle\vec{e}_u\mat{M}_3^k,\vec{e}_v\right\rangle$,
 and
\item $\EXX{r,S,S'}{Y^S_{u,v}Y^{S'}_{w,x}}=\left\langle\vec{e}_{(u,w)}\mat{M}_4^k, \vec{e}_{(v,x)}\right\rangle$.
\end{enumerate}}

\newcommand{\lemregularization}{Suppose graph $G$ has spectral gap $\alpha$ and irregularity $\beta$. Then $\reg(G)$ has spectral gap at least $\beta^{-2}\alpha$.
}

\newcommand{\lemmixing}{
Suppose $\mat{M}\in\mathbb{R}^{n\times n}$ is a doubly-stochastic matrix with spectral gap $\alpha>0$, and suppose $\mat{M}_{uv}\leq \eta$ for any distinct $u,v\in V[G]$. Then
for any distribution $\vec{u}$ over $V[G]\times V[G]$, $k\in\mathbb{N}$, and $0\leq\gamma\leq\min\left\{1/3, \alpha \eta^{-1/2}/9\right\}$, we have
$$
\left\|\vec{u}\left(\mathcal{L}_{\gamma,\gamma}\circ\mathcal{Q}(\mat{M})\right)^k-\vec{\pi}\otimes \vec{\pi}\right\|_2
\leq (1-\gamma\alpha/2)^k + 2\sqrt{2}\gamma\alpha^{-1} n^{-3/2},
$$
where $\vec{\pi}$ denotes the uniform distribution over $V[G]$.
}
\mylemma{lemmixing}{\lemmixing}

One corollary of \lemref{lemmixing} states  that the stationary distribution of the Markov chain $\mathcal{L}_{\gamma,\gamma}
\circ\mathcal{Q}(\mat{M})$ is very close to $\vec{\pi}\circ\vec{\pi}$, and its mixing rate is comparable to that of $\mat{M}\otimes\mat{M}$ (see Corollary~\ref{cor_approx}).
Using the rapid mixing of $\mathcal{L}_{\gamma}\left(\mat{M}_{\reg(G)}\right)$ and $\mathcal{L}_{\gamma,\gamma}
\circ\mathcal{Q}(\mat{M})$ (and similar chains for reversed random walks), we obtain an upper bound of the runtime of Protocol~\ref{pro_prototype}, which holds for general graphs with spectral gap $\alpha$ and irregularity $\beta$. Our result in this subsection is summarized as follows:

\newcommand{\thmrandom}{
Suppose $G$ has spectral gap $\alpha$ and irregularity $\beta$. Using Protocol \ref{pro_prototype} with distribution $\mathcal{D}=\mathcal{U}$, any node gets the rumor in $T=O(C \log n)$ rounds with probability at least $1-O(n^{-2c})$ where $C=(1/\alpha)\cdot\beta^2\max\{1, 1/(\alpha\cdot\Delta^{0.5-c})\}$ and $c>0$ is an arbitrary small constant.}


\begin{thm}\label{thm:FullRandomness}
Suppose $G$ has spectral gap $\alpha$ and irregularity $\beta$. Using Protocol \ref{pro_prototype} with distribution $\mathcal{D}=\mathcal{U}$,  with high probability all nodes get the rumor in $T=O(C \log n)$ rounds, where $C=(1/\alpha)\cdot\beta^2\max\{1, 1/(\alpha\cdot\Delta^{0.499})\}$.
\end{thm}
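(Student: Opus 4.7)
The plan is to fix an arbitrary target node $w \in V[G]$, lower bound the probability that $w$ becomes informed within $T$ rounds by $1 - n^{-\Omega(1)}$, and then union bound over the $n$ choices of $w$. Starting from \lemref{lembound}, this single-node success probability is at least the ratio~\eqref{eq:lowerboundprob}, and by \lemref{lemmarkov} each of its four expectations equals an inner product between a basis vector and an iterate of one of the stochastic matrices $\mat{M}_1,\mat{M}_2,\mat{M}_3,\mat{M}_4$. The overall goal is to choose $k$ and $\gamma$ so that, pointwise or in $\ell_2$, both the numerator and the denominator of~\eqref{eq:lowerboundprob} are $(1\pm o(1))/n^2$, which forces the ratio to $1-o(1)$.

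For the numerator I would use the reversible chains $\mat{M}_1$ and $\mat{M}_3$, which are lazy walks on $\reg(G)$ (with a small extra lazy factor $\gamma'=(1-1/\Delta)^{\Delta-1}=\Theta(1)$ in the reversed case). \lemref{lemregularization} gives $\reg(G)$ spectral gap $\alpha' \geq \alpha/\beta^2$, so for $k=\Omega((\gamma\alpha')^{-1}\log n)$, standard spectral mixing yields $\langle \vec{e}_s\mat{M}_1^k,\vec{e}_u\rangle=(1\pm n^{-\Omega(1)})/n$ pointwise, and the analogous estimate holds for $\mat{M}_3$. Summing the four-fold product $\EXX{r,S}{X_{s,u}^S}\EXX{r,S}{X_{s,v}^S}\EXX{r,S}{Y_{w,u}^S}\EXX{r,S}{Y_{w,v}^S}$ over $u,v$ then gives $(1\pm o(1))/n^2$ for the numerator. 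The denominator is the more delicate part: it involves $\mat{M}_2$ and $\mat{M}_4$, which arise from the non-reversible Doeblin coupling $\mathcal{Q}(\mat{M}_{\reg(G)})$. Here I would invoke \lemref{lemmixing} (or its consequence Corollary~\ref{cor_approx}) with $\eta=O(1/\Delta)$, which forces
\[
\gamma=\Theta\!\bigl(\min\{1,\;\alpha'\Delta^{0.499}\}\bigr).
\]
The exponent $0.499$ rather than $1/2$ is precisely the slack needed to absorb the additive error $2\sqrt{2}\gamma\alpha'^{-1}n^{-3/2}$ in \lemref{lemmixing} into an $o(1/n)$ $\ell_2$ bound (using $\Delta\leq n$ and $\alpha'\leq 1$), which in turn suffices, after a Cauchy--Schwarz bookkeeping, to control the sum $\sum_{u,v}\EXX{r,S,S'}{X_{s,u}^S X_{s,v}^{S'}}\EXX{r,S,S'}{Y_{w,u}^S Y_{w,v}^{S'}}$ to within a $(1\pm o(1))$ factor of $1/n^2$.

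Setting $T=4k$ and unfolding the choice of $\gamma$ yields $T=\Theta((\gamma\alpha')^{-1}\log n)=O((\beta^2/\alpha)\max\{1,1/(\alpha\Delta^{0.499})\}\log n)=O(C\log n)$. Since the pointwise estimates above already give $1-n^{-\Omega(1)}$ for one fixed $w$, a union bound over $w$ closes the argument. I expect the main obstacle to be the denominator analysis: $\mathcal{Q}(\mat{M}_{\reg(G)})$ disagrees with $\mat{M}_{\reg(G)}\otimes\mat{M}_{\reg(G)}$ on exactly the diagonal rows $(u,u)$, so the chain is neither a product chain nor reversible, and classical spectral arguments do not apply directly; the $\ell_2$ comparison bundled in \lemref{lemmixing}, together with the careful choice of the $0.499$ exponent, is what makes the proof go through.
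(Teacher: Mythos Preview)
Your outline tracks the paper's argument closely: you correctly identify that the ratio in \lemref{lembound} reduces via \lemref{lemmarkov} to inner products governed by the reversible chains $\mat{M}_1,\mat{M}_3$ (handled by \lemref{lemregularization} and standard spectral mixing) and by the non-reversible coupled chains $\mat{M}_2,\mat{M}_4$ (handled by \lemref{lemmixing}). Your reading of the exponent $0.499$ is also correct: the constraint $\gamma=O(\alpha'\Delta^{0.499})$ is exactly what makes the additive term $2\sqrt{2}\gamma\alpha'^{-1}n^{-3/2}$ in \lemref{lemmixing} of order $n^{-1-c}$ with $c=0.001$, and this is what drives the denominator of \eqref{eq:lowerboundprob} to $(1+O(n^{-2c}))/n^2$.

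There is, however, a genuine gap in your final step. The per-node bound you obtain is $1-O(n^{-2c})$ with $c=0.001$, so the failure probability for a fixed target $w$ is only $O(n^{-0.002})$. This \emph{is} $n^{-\Omega(1)}$, but a union bound over the $n$ choices of $w$ then yields $n\cdot O(n^{-0.002})=O(n^{0.998})$, which is useless. You cannot fix this by enlarging $c$: pushing $2c>1$ forces $0.5-c<0$, which tightens the constraint on $\gamma$ to $\gamma=O(\alpha'/\Delta^{c-0.5})$ and blows up $T$ far beyond the claimed $O(C\log n)$. The paper closes this gap not by a direct union bound but by \emph{repeating the protocol $O(1)$ times}: running for $KT$ rounds with $K>1/(2c)$ and fresh randomness in each block (monotonicity of the informed set makes each block at least as effective as a fresh start from $s$) drives the per-node failure to $O(n^{-2Kc})$, after which the union bound over $n$ nodes succeeds. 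Since $K=O(1)$, the total number of rounds is still $O(C\log n)$.
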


We remark that our analysis above provides a fundamentally new approach to analyze the rumor spreading time of general graphs and, as shown in \thmref{FullRandomness}, the result
 is tight for certain graph families, e.g. $T=O(\log n)$ for any expander graph with $n$ nodes and $\beta=O(1)$.

\subsection{A Randomness-Efficient Protocol}

The discussion above relates rumor spreading processes to multiple random walks.
The transitions of these random walks from a fixed node only depend on local information and are characterized by combinatorial rectangles. Moreover the memoryless feature of random walks/Markov chains allow us to compute them in log-space, or branching programs with polynomial width. Using \PRG\ for combinatorial rectangles and those for branching programs, we obtain a distribution that is samplable with a short seed and  has almost the same performance as the distribution $\mathcal{D}=\mathcal{U}$ in Protocol~\ref{pro_prototype}. This gives Protocol~\ref{pro0} that  corresponds  to \thmref{mainresult1}.

\begin{protocol}\label{pro0}
Pick the following objects:
\begin{itemize}\itemsep -0.3pt
\item an explicit $\epsilon$-\textsf{PRG} $\gen=(\gen_0,\dots,\gen_{n-1}):\{0,1\}^{\ell}\to [m]^n$ for $\mathsf{CR}_{[m]^n}$
    with seed length $\ell$, and
\item an explicit $\epsilon'$-\textsf{PRG} $\gen'=(\gen'_0,\dots,\gen'_{T/2-1}):\{0,1\}^{\ell'}\to \left(\{0,1\}^\ell\right)^{T/2}$ for $(T/2,n^2,2^\ell)$-branching programs with seed length $\ell'$
\end{itemize}
where $\epsilon^{-1},\epsilon'^{-1},m=n^{\Theta(1)}$ are sufficiently large.

The initial node having the rumor independently chooses random strings $x,y\in\{0,1\}^{\ell'}$.
These random strings are appended with the rumor and sent to other nodes.
\begin{itemize}
\item In the $i$th round for $0\leq i < T/2$, an informed node $u$ sends the rumor to the neighbor with index $\gen_u(\gen'_i(x)) \bmod \Delta$ in its adjacency list.
\item In the $i$th round for $T/2\leq i<T$, let $j=\lfloor \frac{T-i-1}{2}\rfloor$. For $u\in V[G]$, let $(r_0,r_1)=\gen_u(\gen'_j(y)) \bmod \Delta^2 \in [\Delta]^2$.
Then $u$ sends the rumor to the $r_0$th neighbor if $i=T-1-2j$, and to the $r_1$th neighbor if $i=T-2-2j$.
\end{itemize}
\end{protocol}

Setting $C=(1/\alpha)\cdot\beta^2\max\{1, \alpha^{-1}/\Delta^{0.499}\}$, Protocol \ref{pro0} uses $2\ell'$ random bits, and with high probability informs all nodes in $T=O(C \log n)$ rounds.
As a consequence, we obtain the following reduction:
\begin{thm}\label{cor_reduction}
Given an explicit $\epsilon$-\textsf{PRG} for $\mathsf{CR}_{[m]^n}$
with seed length $\ell$ and an explicit $\epsilon'$-\textsf{PRG} for $(T/2,n^2,2^\ell)$-branching programs with seed length $\ell'$, where $\epsilon^{-1},\epsilon'^{-1},m=n^{\Theta(1)}$ are sufficiently large,
there exists an explicit protocol using $2\ell'$ random bits such that,  with high probability all nodes get the rumor in $T=O(C \log n)$ rounds.
In particular, given an explicit $\epsilon$-\textsf{PRG} for $(L,W,D)$-branching programs with seed length $O(\log n)$ where $L=\max\{T/2, n\}$, $W=n^2$, and $D,\epsilon^{-1}=n^{\Theta(1)}$ sufficiently large, there exists an explicit protocol using $O(\log n)$ random bits, and with high probability informs all nodes in $T=O(C \log n)$ rounds.\footnote{This follows from the simple observation that combinatorial rectangles in $[m]^n$ can be computed by $(n,2,m)$-branching programs.}
\end{thm}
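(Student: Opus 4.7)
The plan is to derandomize Protocol~\ref{pro_prototype} by replacing the uniform distribution $\mathcal{U}$ with the distribution prescribed in Protocol~\ref{pro0} (which composes the combinatorial-rectangle \textsf{PRG} $\gen$ with the branching-program \textsf{PRG} $\gen'$), and then to argue that the analysis behind \thmref{FullRandomness} still yields the same high-probability informed guarantee. The seed length is $2\ell'$ by inspection of the protocol: $x\in\{0,1\}^{\ell'}$ drives the $T/2$ forward rounds and an independent $y\in\{0,1\}^{\ell'}$ drives the $T/2$ simulated-pull rounds.

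The first step is to recall from \lemref{lembound} that the probability any target node $w$ is informed within $T$ rounds is lower-bounded by a rational function in the first and second moments $\EXX{r,S}{X^S_{s,u}}$, $\EXX{r,S,S'}{X^S_{s,u}X^{S'}_{s,v}}$ of the forward walks and their reversed analogues. By \lemref{lemmarkov}, each such expectation equals an entry $\langle \vec{e}_{\star}\mat{M}_j^k,\vec{e}_{\star}\rangle$ of one of the (pair-)walk chains $\mat{M}_1,\dots,\mat{M}_4$, whose state space has size at most $n^2$. Crucially, one step of any such chain is driven by a single round of per-node randomness in $[\Delta]^n$, and from the viewpoint of any fixed pair of nodes the transition depends only on two of those $n$ coordinates---a two-dimensional combinatorial-rectangle event. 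Hence the $k\le T/2$-step pair walk is simulated exactly by a branching program of length $T/2$, width $n^2$, and degree $2^\ell$ that reads an $\ell$-bit symbol per round, feeds it through $\gen$, and applies the two relevant output coordinates (modulo $\Delta$ in the forward phase, modulo $\Delta^2$ in the pull-simulation phase) to update the stored pair.

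The second step bounds the error introduced by the two PRGs. Within one round, the combinatorial-rectangle property of $\gen$ makes every two-coordinate transition probability $\epsilon$-close to its truly uniform value, with a negligible $O(\Delta^2/m)$ contribution from modular reduction provided $m$ is polynomially larger than $\Delta$. Across the $T/2$ layers of the pair-walk branching program, $\gen'$ fools every such chain to within $\epsilon'$, so each expectation appearing in \eqref{eq:lowerboundprob} is perturbed by at most $O(T\epsilon+\epsilon')=n^{-\Theta(1)}$. Substituting these perturbed values into \eqref{eq:lowerboundprob} preserves the informed-probability bound up to a factor $1-n^{-\Theta(1)}$, and the proof of \thmref{FullRandomness} then goes through unchanged, giving $T=O(C\log n)$ rounds with high probability.

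The step requiring the most care is verifying that the simulated-pull phase fits this framework. There each reversed-walk transition corresponds to two consecutive push rounds driven by a single seed $\gen'_j(y)$ via the $\Delta^2$-reduction of $\gen_u$, and one must check that the resulting one-step transition of the pair reversed-walk chain $\mat{M}_4=\mathcal{L}_{\gamma,\gamma}\circ\mathcal{Q}\circ\mathcal{L}_{\gamma'}(\mat{M}_{\reg(G)})$ remains a two-dimensional combinatorial-rectangle event on $[m]^n$ and fits in one layer of the branching program of width $n^2$. The ``In particular'' clause is then immediate: as the footnote observes, combinatorial rectangles in $[m]^n$ are $(n,2,m)$-branching programs, so one explicit PRG for $(\max\{T/2,n\},n^2,n^{\Theta(1)})$-branching programs with seed length $O(\log n)$ supplies both $\gen$ and $\gen'$, giving total randomness $2\ell'=O(\log n)$.
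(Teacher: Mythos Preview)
Your overall architecture matches the paper's: pass from $\mathcal{U}$ to the intermediate distribution $\mathcal{P}'$ (truly independent $\ell$-bit seeds fed through $\gen$) using the combinatorial-rectangle property, then from $\mathcal{P}'$ to $\mathcal{P}$ using the branching-program property of $\gen'$, and finally plug the perturbed moments back into \eqref{eq:lowerboundprob}. The paper does exactly this via Lemma~\ref{lem_markov2} and Lemma~\ref{lem_bp}, then repeats $O(1)$ times and union-bounds.

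There is, however, a concrete error in your treatment of the reversed walks. You assert that ``from the viewpoint of any fixed pair of nodes the transition depends only on two of those $n$ coordinates,'' and later that the one-step transition of $\mat{M}_4$ ``remains a two-dimensional combinatorial-rectangle event.'' This is false. A non-lazy reversed step from $u$ lands at $v$ precisely when $N_{i,u}^\vee=\{v\}$, i.e.\ when $u\in N_{i,v}$ \emph{and} $u\notin N_{i,v'}$ for every other $v'\in N(u)$. This event depends on the push choices of \emph{all} neighbors of $u$, so it is a combinatorial rectangle of dimension $\deg(u)\le\Delta$ in $\prod_{a\in N(u)}[\Delta^2]$; for the pair chain $\mat{M}_4$ the dimension is at most $|N(u)\cup N(w)|\le 2\Delta$. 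The paper establishes exactly this in the proof of Lemma~\ref{lem_markov2}, which is why the per-step $\ell_1$ error there is $O(\gamma\Delta^2(\epsilon+\Delta^3/m))$ rather than the $O(\epsilon)$ your two-coordinate picture would give.

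Fortunately this does not derail the argument, since $\gen$ is an $\epsilon$-\textsf{PRG} for all of $\mathsf{CR}_{[m]^n}$ (not merely dimension-two rectangles), and the extra polynomial-in-$\Delta$ factors are absorbed by choosing $\epsilon^{-1},m=n^{\Theta(1)}$ large enough. But you should correct the dimensionality claim and adjust your error accounting accordingly; as written, the sentence flagging this as ``the step requiring the most care'' asserts precisely the wrong thing about it.
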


Combining the reduction above with known explicit constructions of \PRG\ (Theorem \ref{thm_cr}, Theorem \ref{thm:inw}), we obtain Theorem \ref{thm:mainresult1}.

\begin{rem} We remark here that, by allowing
every node to have $O(\Delta)$ preprocessing time before the protocol starts, the rumor spreading time can be improved to $T=O((1/\alpha)\cdot\beta^2\log n)$, which corresponds to \thmref{ResultAssumption}. See \secref{SimplifiedProtocol} for formal discussions.
\end{rem}

\section{Two-Level Hashing Protocols\label{sec:general}}

In this section we present two protocols. Our protocols are based on pairwise independent generators and unbalanced expanders with near-optimal expansion. Here different rounds use different random bits. In contrast to $O(n\log n)$ random bits per round used in the truly random protocol, we show that  $O(\log\log n+\log \Delta)$ random bits per round suffice to spread the rumor efficiently on
general graphs $G$. In contrast to protocols in Section~\ref{sec_expander}, the protocols in this section do not need to assume that nodes have initial IDs, and we can combine the protocols with an ID distribution mechanism so that every node gets a unique ID once it gets the rumor.
Formally,  in round $0$ there is one arbitrary node having the rumor, and the ID of this node is set to be $0$.
We assume that node $0$ knows the maximum degree $\Delta$, and an upper bound $n'\triangleq n^{c}~(c\geq 1)$ of the number of nodes $n$.
Moveover, node $0$ chooses a binary string, called \emph{seed}, uniformly
at random, and the seed  is appended to the rumor. In subsequent rounds, whenever one node with ID $u$ sends the rumor to one of its neighbors in round $t$, it also sends a unique string consisting of the ID $u$, parameters $n', \Delta$, and current round number $t$. A node is \emph{uninformed} as long as it has not received a rumor. Once a node receives the first rumor from an informed node with ID $u$ in round $t$, it becomes \emph{informed} and gets a unique ID defined by $g_t(u)\triangleq 2^{t-1}+u$. If one node  becomes informed from multiple informed nodes, then
this node  chooses an arbitrary node with ID $u$ that informs it and uses $g_t(u)$ as its ID. It was shown in \cite{conf/stacs/GSSW12} that, through this protocol above, all informed nodes have different IDs, and all the IDs are in $[2^T]$ if the protocol finishes in $T$ rounds.

\subsection{Protocol For Graphs with Certain Conductance}

Our first protocol in this section corresponds to \thmref{gen_graph_result}, and holds for graphs with conductance $\phi$. Formally, for a graph $G$ of $n$ nodes, the \emph{conductance}  $\phi(G)$ of $G$ is defined by
\[
\phi(G)\triangleq\min_{S\subseteq V, 0<|S|<n}\frac{e(S,V\setminus S)}{ \min\{ \vol(S), \vol(V\setminus S) \}},
\]
where $\vol(S)\triangleq\sum_{u\in S}\deg(u)$ is the volume of $S$, and $e(S,T)\triangleq |\{ \{u,v\}: u\in S\mbox{ and } v\in T \}|$ is the number of edges between $S$ and $T$.
The formal description of our  protocol is as follows:

\begin{protocol}[Protocol for Graphs with Certain Conductance]\label{pro1}
Let $\epsilon=\Delta^{-\Theta(1)}$ be sufficiently small and $m=2^{\lceil\log(4/\epsilon)\rceil}$. Pick the following objects:
\begin{itemize}\itemsep -0.3pt

\item An explicit $(K, (1-\epsilon^2/4)D)$-expander $\Gamma: [n^c]\times [D] \to \bigsqcup_{i\in [D]}[M_i]$, where $K=2$, $D=\left((\log n)/{\epsilon}\right)^{O(1)}$ and $M_0=\dots=M_{D-1}=M\leq D$.

\item An explicit pairwise independent generator $\gen=(\gen_1,\dots,\gen_M):\{0,1\}^{\ell}\to [m]^M$, where $\ell=O(\log m+\log M)=O(\log\log n+\log\Delta)$.
\end{itemize}
These two objects $\gen$ and $\Gamma$ can be uniquely constructed from $n^c$ and $\Delta^{\Theta(1)}$, and hence are known to every informed node.

The initial node having the rumor chooses a random string $(s_1,\dots,s_{T})$ where every $s_i$ is of the form $(x_i,y_i)\in [D]\times\{0,1\}^{\ell}$.
This random string is appended with the rumor and sent to other nodes. Once one node gets the rumor, it gets the ID $u$. In the $i$th round, node $u$ computes $r=\Gamma(u,x_i)$ that is in $[M_u]$, the $u$th copy of $[M]$. Node $u$ computes
$y\triangleq\gen_r(y_i) \bmod \Delta$, and chooses the neighbor with index $y$ in its adjacency list to send the rumor if $y\leq \deg(u)$.
\end{protocol}

Protocol~\ref{pro1}  presents a nice ``two-level hashing" framework: The first level is based on a pairwise independent generator $\gen$. While the \textsf{PRG}-based protocol in \cite{conf/stacs/GSSW12} needs to generate $O(n)$ blocks and different nodes need to use different blocks, our protocol only needs $M=(\Delta\log n)^{O(1)}$ blocks and hence $O(\log\log n+\log \Delta)$ random bits suffice for this purpose. The second level uses unbalanced expanders to map the node with ID  $u\in[n^c]$ to  $r\in\left[\Delta^{O(1)}\right]$ by using $O(\log\log n+\log\Delta)$ random bits. After these, node $u$ uses the value of the $r$th block of $\gen$ to choose the neighbors. It is easy to see that every informed node $u$ only needs $O(\mathrm{poly}\log n)$ arithmetic operations per round in order to determine its neighbor.

\begin{figure}[hbt]
\centering
\begin{tikzpicture}[>=stealth, ->]
\tikzstyle{tmtape}=[draw, minimum width=1cm, minimum height=0.5cm]
\tikzstyle{comp}=[draw, thick, rounded corners=2mm, inner sep=0.3cm]

\begin{scope}[start chain=1 going right,node distance=-0.15mm]
    \node (b1) [on chain=1,tmtape] {};
    \node (b2) [on chain=1,tmtape] {};
    \node (mid_node) [on chain=1,tmtape, minimum width=1.5cm] {$\ldots$ };
    \node (b3) [on chain=1,tmtape, fill=black!20] {};
    \node (b4) [on chain=1,tmtape] {};
    \node [on chain=1, outer sep=0.2cm] {\small $(\Delta\log n)^{O(1)}$ blocks};
\end{scope}

\begin{scope}[node distance=0.5cm]
\node (prg) [above=of mid_node, comp, minimum width=2cm] {\textsf{PRG} $\mathcal{G}$};
\node (t1) [above=of prg] {\small $O(\log\log n+\log\Delta)$ random bits}; {selector};
\end{scope}

\begin{scope}[node distance=1cm]
\node (cond) [left=of b1, comp, minimum width=1cm] {$\Gamma$};
\node (t5) [below=of b3] {};
\end{scope}

\begin{scope}[node distance=0.5cm]
\node (t2) [above=of cond, text width=3cm, text centered] {\small $O(\log\log n+\log\Delta)$ random bits};
\node (t3) [left=of cond, text width=1.5cm, text centered] {\small node ID $u\in [n^c]$};
\node (t4) [below=of cond] {\small index $r$};
\end{scope}

\draw (t1) -- (prg);
\draw (t2) -- (cond);
\draw (t3) -- (cond);
\draw (cond) -- (t4);
\draw (b3) -- (t5) node [left, near end] {\small $r$th block};
\draw (prg.230) -- (b1.north);
\draw (prg.250) -- (b2.north);
\draw (prg.270) -- (mid_node);
\draw (prg.290) -- (b3.north);
\draw (prg.310) -- (b4.north);
\end{tikzpicture}

\caption{Illustration of the protocol for general graphs.
Every node $u$ uses an unbalanced expander $\Gamma$ to generate an index $r$, and uses the $r$th block of \textsf{PRG} $\gen$ to choose a neighbor to send the rumor.
\label{p1}}

\end{figure}
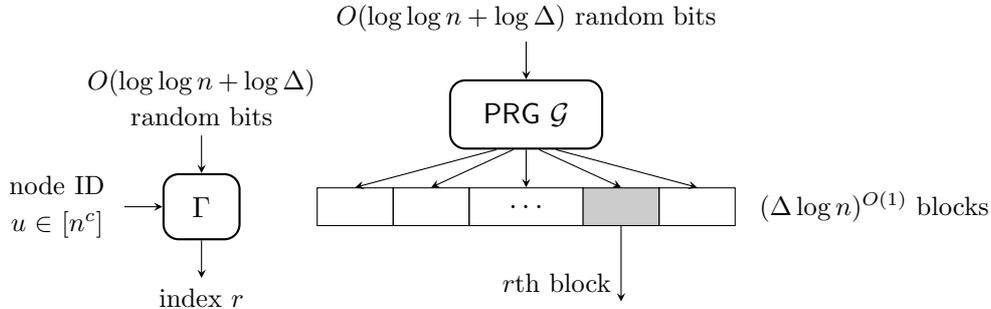

\begin{pro}\label{pro:protocol1}
Assume that Protocol~\ref{pro1} finishes in $T$ rounds. Then it uses $O(T\cdot(\log\log n+\log\Delta))$ random bits in total.
\end{pro}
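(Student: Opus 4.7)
The proposition is a bookkeeping statement about the seed length of Protocol~\ref{pro1}, so the plan is to simply enumerate the sources of randomness and bound each one using the explicit parameter choices made in the protocol. No probabilistic argument is needed; the proof reduces to arithmetic on the parameters $D$, $M$, $m$, and $\epsilon$.

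First I would recall that, by inspecting Protocol~\ref{pro1}, the only randomness is the single seed $(s_1,\dots,s_T)$ sampled once by the initial node, and that each round-$i$ piece $s_i = (x_i, y_i)$ lies in $[D]\times\{0,1\}^{\ell}$. Hence the total number of random bits used is exactly
\begin{equation*}
T \cdot \bigl(\lceil \log D \rceil + \ell\bigr).
\end{equation*}
It therefore suffices to show that each of $\log D$ and $\ell$ is $O(\log\log n + \log \Delta)$.

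Next I would bound the two quantities using the parameter choices. Since $\epsilon = \Delta^{-\Theta(1)}$, we have $m = 2^{\lceil\log(4/\epsilon)\rceil} = \Delta^{O(1)}$, and the protocol sets $D = ((\log n)/\epsilon)^{O(1)} = (\Delta \log n)^{O(1)}$; taking logarithms gives $\log D = O(\log\log n + \log \Delta)$. Since $M \le D$, the pairwise independent generator's seed length satisfies $\ell = O(\log m + \log M) = O(\log \Delta) + O(\log\log n + \log \Delta) = O(\log\log n + \log \Delta)$, as already noted in the description of $\gen$ in the protocol. Plugging these two bounds back into the expression above yields the claimed $O(T \cdot (\log\log n + \log \Delta))$ total.

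There is essentially no obstacle: the argument is a direct consequence of how the parameters were chosen. The only mildly subtle point is keeping track of the fact that $\epsilon$ is polynomially small in $\Delta$ (not in $n$), so that the $\log(1/\epsilon)$ factors contribute $O(\log \Delta)$ rather than $O(\log n)$; this is precisely what makes the per-round cost $O(\log\log n + \log \Delta)$ instead of something larger. Once this is observed, the proof is complete in a few lines.
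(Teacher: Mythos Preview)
Your proposal is correct and matches the paper's approach: the paper does not give a separate proof of this proposition, treating it as immediate from the parameter choices spelled out in the description of Protocol~\ref{pro1} (in particular the line $\ell=O(\log m+\log M)=O(\log\log n+\log\Delta)$ and $D=((\log n)/\epsilon)^{O(1)}$ with $\epsilon=\Delta^{-\Theta(1)}$). Your bookkeeping argument is exactly the intended justification.
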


\begin{rem}
Using the explicit constructions of unbalanced expanders in \cite{GUV09} and pairwise independent generators in \cite{CW79}, our protocol is very simple and can be described as follows:
Assign each node with ID $u\in[n^c]$ with a distinct polynomial $p_u$ of degree at most $\lceil c\log_q n\rceil$ over a finite field $\mathbb{F}_q$ of size $q=\left(\Delta\log n\right)^{\Theta(1)}$.
The protocol then uses the random string $(s_1,\dots,s_{T})$ where every $s_i$ is of the form $(x_i,a_i,b_i)\in \mathbb{F}_q^3$. Then node $u$ computes $z=a_i\cdot p_u(x_i)+b_i$ (over $\mathbb{F}_q$) in the $i$th round, and chooses the neighbor with index $\left(z\bmod \deg(u)\right)$ in its adjacency list to send the rumor.
\end{rem}

\subsection{Protocol For Strong Expander Graphs}

In this subsection we present one protocol for strong expander graphs, and prove \thmref{expander_result}.

Let $\mathcal{G}=\{G\}_i$ be a family of graphs. We call $\mathcal{G}$ a family of  \emph{strong expander graphs} if every $G_i$ in $\mathcal{G}$ has spectra gap $\alpha=1-o(1)$, and irregularity $\beta=1+o(1)$. This graph family includes several interesting graphs, e.g. Ramanujan graphs,  complete graphs, random graphs $G(n,p)$ with $p=\omega(\log n/ n)$, and random $d$-regular graph where $d$ is any increasing function of $n$.
The formal description of our  protocol is as follows:

\begin{protocol}[Protocol for Strong Expander Graphs]\label{pro_PRG}
Let $\epsilon=\Delta^{-\Theta(1)}$ be sufficiently small, $\epsilon'=2^{-\sqrt{\log\log n}}$, and $m=\Theta((\log n)/\epsilon)$ a power of 2. Pick the following objects:
\begin{itemize}\itemsep -0.3pt
\item An explicit $(\mathord{\leq} K, (1-\epsilon^2/4)D)$-expander $\Gamma: [n^c]\times [D] \to \bigsqcup_{i\in [D]}[M_i]$, where $K=\Delta$, $D=\left(({\log n})/{\epsilon}\right)^{O(1)}$ and $M_0=\dots=M_{D-1}=M\leq \max\{D,\Delta^{O(1)}\}$.

\item An explicit function $\gen=(\gen_1,\dots,\gen_M):\{0,1\}^{\ell}\to [m]^M$ that is both a pairwise independent generator and an $\epsilon'$-\textsf{PRG} for $\mathsf{CR}_{[m]^M}$, where $\ell=O(\log m+\log M)+\tilde{O}(\log(1/\epsilon'))=O(\log\log n+\log\Delta)$.
\end{itemize}
These two objects $\gen$ and $\Gamma$ can be uniquely constructed from $n^c$ and $\Delta^{\Theta(1)}$, and hence are known to every informed node.

The initial node having the rumor chooses a random string $(s_1,\dots,s_{T})$ where every $s_i$ is of the form $(x_i,y_i)\in [D]\times\{0,1\}^{\ell}$.
This random string  is appended with the rumor and sent to other nodes. Once one node gets the rumor, it gets the ID $u$. In the $i$th round, node $u$ computes $r=\Gamma(u,x_i)$ that is in $[M_u]$, the $u$th copy of $[M]$. It then chooses the neighbor with index $\gen_r(y_i) \bmod \deg(u)$ in its adjacency list to send the rumor.
\end{protocol}

\begin{pro}
Assume that Protocol~\ref{pro_PRG} finishes in $T$ rounds. Then it uses $O(T\cdot(\log\log n+\log\Delta))$ random bits in total.
\end{pro}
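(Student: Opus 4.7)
The plan is to read off the total randomness directly from the description of Protocol~\ref{pro_PRG}, since this is a bookkeeping statement rather than a genuine analysis. The only randomness used by the protocol is the string $(s_1,\dots,s_T)$ chosen by the initial node, and each $s_i$ has the form $(x_i,y_i)\in[D]\times\{0,1\}^\ell$. Hence each round consumes exactly $\lceil\log D\rceil+\ell$ random bits, and the total is $T\cdot(\lceil\log D\rceil+\ell)$.

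Next I would plug in the parameters. Since $\epsilon=\Delta^{-\Theta(1)}$, we have $\log(1/\epsilon)=O(\log\Delta)$, and therefore
\[
\log D = O\bigl(\log((\log n)/\epsilon)\bigr) = O(\log\log n+\log\Delta).
\]
For the seed length $\ell$, the protocol statement already guarantees $\ell=O(\log\log n+\log\Delta)$; to double check, one verifies that $\log m=O(\log\log n+\log\Delta)$, $\log M=O(\log\log n+\log\Delta)$, and $\widetilde O(\log(1/\epsilon'))=\widetilde O(\sqrt{\log\log n})$ all fit inside this bound (using $\epsilon'=2^{-\sqrt{\log\log n}}$ and $M\le\max\{D,\Delta^{O(1)}\}$).

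Combining, each round consumes $O(\log\log n+\log\Delta)$ random bits, so the total randomness over $T$ rounds is $O\bigl(T\cdot(\log\log n+\log\Delta)\bigr)$, as claimed. The only thing worth verifying carefully is that the explicit pairwise independent, combinatorial-rectangle $\epsilon'$-fooling generator $\gen$ can indeed be realized with seed length $O(\log m+\log M)+\widetilde O(\log(1/\epsilon'))$; this is exactly the parameter regime guaranteed by the standard constructions invoked elsewhere in the paper (e.g.\ the combinatorial-rectangle PRG combined with a pairwise-independent hash), so no additional work is needed here.

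There is no real obstacle: the statement is a direct consequence of the definition of the protocol and the parameter settings, with no probabilistic or combinatorial content. The only minor subtlety is keeping track of the $\widetilde O(\log(1/\epsilon'))$ contribution to $\ell$ and confirming it is absorbed into $O(\log\log n+\log\Delta)$.
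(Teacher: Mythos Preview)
Your proposal is correct. The paper states this proposition without proof, since it follows directly from the protocol description and the parameter settings; your bookkeeping argument is exactly the intended verification, and your parameter checks for $\log D$, $\log m$, $\log M$, and $\tilde O(\log(1/\epsilon'))$ are all accurate.
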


\paragraph{Acknowledgement.}
We are grateful to  Chris Umans for many hours of stimulating discussion and improving the presentation of the paper. We
would like to thank Luca Trevisan and Avi Wigderson for helpful discussion about our work.

\bibliographystyle{plain}
\bibliography{pseudoRS}
\appendix

\section{Notations \& Useful Lemmas}

In this section we list all notations used in the paper.
Let $G=(V,E)$ be a connected, undirected, and simple graph with $n$ nodes. For any node $u$, $\deg(u)$ stands for the degree of $u$. The  maximum, minimum, and average degree of $G$ are represented by $\Delta$, $\delta$, and $d$. Let $\beta\triangleq\Delta/\delta$ be the \emph{irregularity} of graph $G$. The set of neighbors of an node $u$ is represented by $N(u)$. Moreover, for any set $S\subseteq V$,
 let
 $N(S)\triangleq\bigcup_{u\in S} N(u)$, and $\mathrm{vol}(S)\triangleq\sum_{u\in S}\deg(u)$. For any set $S, T\subseteq V$, we define $E(S,T)\triangleq\{ \{u,v\}: u\in S\mbox{~ and ~} v\in T \}$ and $e(S,T)\triangleq|E(S,T)|$.

 We use $\mat{A}_G$ to express the adjacency matrix of $G$. Let  $\mat{D}$  the $n\times n$ diagonal matrix defined by $\mat{D}_{uu}=\deg(u)$ for $u\in V[G]$.
Let $\mat{M}_G=\mat{D}^{-1}\mat{A}_G$ be the transition matrix for the random walk over $G$, and $
\mat{N}_G\triangleq \mat{D}^{-1/2}\mat{A}_G\mat{D}^{-1/2}$.
Define the $n$ real eigenvalues of $\mat{N}_G$ by $1=\lambda_1\geq\cdots\geq \lambda_n\geq -1$, and let $\lambda_{\max}\triangleq\max\{ \lambda_2, |\lambda_i|\}$. The spectral gap $\alpha$ is defined by $\alpha\triangleq 1-\lambda_2$, whereas the absolute spectral gap is defined as $1-\lambda_{\max}$. For simplicity, we also use $\alpha$ to express the spectral expansion of a reversible Markov chain if the chain is clear from the context.

For $m\in\mathbb{N}$, vector $\vec{u}\in \mathbb{R}^m$ and real number $p\geq 1$, define the $\ell_p$-norm $\|\vec{u}\|_p=\left(\sum_{i=1}^m |\vec{u}_i|^p\right)^{1/p}$. In addition, we define $\|\vec{u}\|_\infty=\max_{1\leq i\leq m}|\vec{u}_i|$.
The inner product of two vectors $\vec{u},\vec{v}\in\mathbb{R}^{m}$ is $\langle\vec{u},\vec{v}\rangle=\sum_{i=1}^m\vec{u}_i\vec{v}_i$.
We write $\vec{1}_m$ for the vector in $\mathbb{R}^m$ having ones in all entries, or simply $\vec{1}$ if the dimension is clear from the context. Similarly write $\vec{0}_m$ or $\vec{0}$ for the zero vector.
Let $\vec{e}_i$ be the vector that has an one in the $i$th entry and zero elsewhere.
Write $\mat{I}_m$ or $\mat{I}$ for the $m\times m$ identity matrix.
For a matrix $\mat{M}\in\mathbb{R}^{m\times m'}$, we use $\mat{M}_{ij}$ to denote the entry on $\mat{M}$'s $i$th row and $j$th column.
For $p\in [1,\infty) \cup\{\infty\}$, define $$
\|\mat{M}\|_p=\sup_{\vec{u}\in\mathbb{R}^m\setminus\{\vec{0}\}}\frac{\|\vec{u}\mat{M}\|_p}{\|\vec{u}\|_p}.
$$
It is easy to show that $\|\mat{M}\|_1$ equals the maximum of the $\ell_1$-norms of the rows of $\mat{M}$.
And $\|\mat{M}\|_\infty$ equals the maximum of the $\ell_1$-norms of the columns of $\mat{M}$, or equivalently $\|\mat{M}^\intercal\|_1$.
We say a square matrix $\mat{M}$ is stochastic if all of its entries are non-negative and all of its rows have $\ell_1$-norm 1. Clearly if $\mat{M}$ is stochastic, then $\|\mat{M}\|_1=1$. We say $\mat{M}$ is doubly-stochastic if both $\mat{M}$ and $\mat{M}^\intercal$ are stochastic.

 By $\log x$ we denote the binary logarithm of $x$.
For any integer $m$, define $[m]\triangleq\{0,\ldots,m-1\}$.
The disjoint union of a family of sets $\{A_i: i\in I\}$ indexed by $I$ is denoted by $\bigsqcup_{i\in I} A_i\triangleq\bigcup_{i\in I}\{(x,i):x\in A_i\}$.
With high probability stands for with probability $1-n^{-\Theta(1)}$.

\begin{lem}\label{lem:geometric}
Fix any $0< p <1$ and let $X_1,\ldots,X_n$ be
independent geometric random variables on $\mathbb{N}$ with $\Pro{X_i = k} =
(1-p)^{k-1} p$ for every $k \in \mathbb{N}$. Let $X = \sum_{i=1}^n
X_i$, and $\mu=\Ex{X}$. Then it holds for all $\beta > 0$ that
\[
  \Pro { X \ge  (1+\beta) \mu } \le \mathrm{e}^{-n\beta^2/(2 (1+\beta))}.
\]
\end{lem}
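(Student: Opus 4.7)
The plan is to prove Lemma~\ref{lem:geometric} via the standard coupling between a sum of i.i.d.\ geometric random variables and a sequence of Bernoulli trials, which reduces the claim to a Chernoff lower-tail bound on a binomial.

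First I would set up the coupling. Let $B_1,B_2,\ldots$ be i.i.d.\ Bernoulli$(p)$ random variables, and let $T_j$ denote the index of the $j$th success. Define $X_1 = T_1$ and $X_i = T_i - T_{i-1}$ for $i\geq 2$. Then $X_1,\ldots,X_n$ are independent geometrics with the prescribed distribution, and $X=\sum_{i=1}^n X_i = T_n$ is the waiting time for the $n$th success. Since $B(m,p):=\sum_{j=1}^m B_j$ is the number of successes among the first $m$ trials, the event $\{X\ge m\}$ coincides with $\{B(m-1,p)\le n-1\}$. Hence
\[
\Pro{X \ge (1+\beta)\mu} = \Pro{B(k,p) \le n-1} \le \Pro{B(k,p) \le n},
\]
where $k=\lceil (1+\beta)\mu\rceil-1$, and one checks that $kp$ lies in the interval $\bigl((1+\beta)n-1,\,(1+\beta)n\bigr]$ using $\mu=n/p$.

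Next I would invoke the standard Chernoff bound for the lower tail of a binomial: for any $\delta\in(0,1]$ and any positive integer $k$,
\[
\Pro{B(k,p) \le (1-\delta)kp} \le \mathrm{e}^{-\delta^2 kp/2}.
\]
I would choose $\delta$ so that $(1-\delta)kp = n$, i.e.\ $\delta = 1 - n/(kp)$. Plugging $kp\approx (1+\beta)n$ gives $\delta \approx \beta/(1+\beta)$ and $\delta^2 kp \approx \beta^2 n/(1+\beta)$, which after dividing by $2$ yields the target exponent $n\beta^2/(2(1+\beta))$.

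The only real obstacle is handling the small slack that arises from $(1+\beta)\mu$ and $kp$ not being exact integer multiples of anything; I would resolve this by working throughout with the integer $k=\lceil (1+\beta)\mu\rceil - 1$ and verifying that the rounding losses are absorbed into the inequality $\Pro{B(k,p)\le n-1}\le \Pro{B(k,p)\le n}$, which provides enough room to convert $n$ into $(1-\delta)kp$ for $\delta=\beta/(1+\beta)$ (using that the right-hand side of the Chernoff bound is monotone in the threshold). Once this bookkeeping is done, substituting back into the Chernoff estimate yields
\[
\Pro{X \ge (1+\beta)\mu} \le \mathrm{e}^{-n\beta^2/(2(1+\beta))},
\]
as required. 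I expect the proof to be short, with the calibration of $\delta$ and $k$ being the only step that needs care.
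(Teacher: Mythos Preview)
The paper does not actually supply a proof of Lemma~\ref{lem:geometric}; it is stated in the ``Notations \& Useful Lemmas'' appendix as a standard concentration bound and used as a black box. So there is no proof in the paper to compare against.

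Your coupling-to-Bernoulli approach is a correct and standard route to this inequality. One point in your bookkeeping does not work as written, though it is easily repaired. You propose to pass from $\Pro{B(k,p)\le n-1}$ to $\Pro{B(k,p)\le n}$ and then apply the lower-tail Chernoff bound with the ``ideal'' value $\delta=\beta/(1+\beta)$. But with $k=\lceil(1+\beta)\mu\rceil-1$ one always has $kp<(1+\beta)n$, hence $(1-\delta)kp=kp/(1+\beta)<n$, so $n$ is \emph{not} below the Chernoff threshold and the monotonicity you invoke runs the wrong way. The clean fix is to skip the step to $n$ and set $\delta$ so that $(1-\delta)kp=n-1$, i.e.\ $\delta=1-(n-1)/(kp)$. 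Since $kp\ge(1+\beta)n-p$ one gets $kp-(n-1)\ge\beta n+(1-p)>\beta n$, while $kp<(1+\beta)n$, and therefore
\[
\delta^{2}kp=\frac{(kp-n+1)^{2}}{kp}>\frac{(\beta n)^{2}}{(1+\beta)n}=\frac{\beta^{2}n}{1+\beta},
\]
which plugged into $\Pro{B(k,p)\le(1-\delta)kp}\le\mathrm{e}^{-\delta^{2}kp/2}$ gives exactly the claimed bound. With this adjustment your sketch goes through.
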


\begin{fact}[\cite{SL97}] \label{fact_spectralgap}
The spectral gap of a graph $G$ satisfies
$$
\alpha=\inf_{\vec{u}\not\parallel\vec{1}}\frac{\mathcal{E}_{\vec{\pi},G}(\vec{u},\vec{u})}{\mathrm{Var}_{\vec{\pi}}(\vec{u})}
$$
where $\pi$ is the stationary distribution of $\mat{M}_G$, and the quantities
$$
\mathrm{Var}_{\vec{\pi}}(\vec{u})=\frac{1}{2}\sum_{u,v\in V[G]} \vec{\pi}_u\vec{\pi}_v(\vec{u}_u-\vec{u}_v)^2,
\qquad
\mathcal{E}_{\vec{\pi},G}(\vec{u},\vec{u})=\frac{1}{2}\sum_{u,v\in V[G]} \vec{\pi}_u(\mat{M}_G)_{uv}(\vec{u}_u-\vec{u}_v)^2
$$
are known as the global variance and the local variance (or Dirichlet form) of $\vec{u}$ respectively.
\end{fact}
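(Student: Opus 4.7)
The plan is to rewrite both the numerator (Dirichlet form) and the denominator (variance) as quadratic forms in a suitable inner product, and then invoke the Courant--Fischer min-max theorem for self-adjoint operators. Let $\langle \vec{u}, \vec{v}\rangle_{\vec{\pi}}\triangleq\sum_u \vec{\pi}_u \vec{u}_u \vec{v}_u$ denote the $\vec{\pi}$-weighted inner product, and let $\vec{1}$ be the constant all-ones vector. Since $\vec{\pi}_u=\deg(u)/(2|E|)$ and $(\mat{M}_G)_{uv}=(\mat{A}_G)_{uv}/\deg(u)$, the detailed balance identity $\vec{\pi}_u(\mat{M}_G)_{uv}=(\mat{A}_G)_{uv}/(2|E|)=\vec{\pi}_v(\mat{M}_G)_{vu}$ holds; hence $\mat{M}_G$ is reversible and $\mat{I}-\mat{M}_G$ is self-adjoint with respect to $\langle\cdot,\cdot\rangle_{\vec{\pi}}$.

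First I would expand each squared difference into an inner-product expression. For the variance, using $\sum_v\vec{\pi}_v=1$, direct algebra yields
\[
\mathrm{Var}_{\vec{\pi}}(\vec{u}) \;=\; \sum_u \vec{\pi}_u \vec{u}_u^2 - \Big(\sum_u \vec{\pi}_u \vec{u}_u\Big)^2 \;=\; \|\vec{u}-\bar{u}\vec{1}\|_{\vec{\pi}}^2,
\]
where $\bar{u}\triangleq\langle\vec{u},\vec{1}\rangle_{\vec{\pi}}$. For the Dirichlet form, expanding the square and then using reversibility to symmetrize the cross term gives
\[
\mathcal{E}_{\vec{\pi},G}(\vec{u},\vec{u}) \;=\; \sum_u \vec{\pi}_u \vec{u}_u^2 - \sum_{u,v} \vec{\pi}_u (\mat{M}_G)_{uv} \vec{u}_u \vec{u}_v \;=\; \langle \vec{u}, (\mat{I}-\mat{M}_G)\vec{u}\rangle_{\vec{\pi}}.
\]

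Next I would invoke the similarity $\mat{M}_G=\mat{D}^{-1/2}\mat{N}_G\mat{D}^{1/2}$, which says that $\mat{M}_G$ and the symmetric matrix $\mat{N}_G$ share the spectrum $1=\lambda_1\geq\lambda_2\geq\cdots\geq\lambda_n$, and that $\vec{1}$ is a right eigenvector of $\mat{M}_G$ for the eigenvalue $1$. For any $\vec{u}\not\parallel\vec{1}$, orthogonally decompose $\vec{u}=\bar{u}\vec{1}+\vec{u}^\perp$ with $\vec{u}^\perp\neq\vec{0}$ and $\langle\vec{u}^\perp,\vec{1}\rangle_{\vec{\pi}}=0$. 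Because $(\mat{I}-\mat{M}_G)\vec{1}=\vec{0}$ and $\mat{I}-\mat{M}_G$ is $\vec{\pi}$-self-adjoint, both the Dirichlet form and the variance depend only on $\vec{u}^\perp$, so the Rayleigh quotient simplifies to
\[
\frac{\mathcal{E}_{\vec{\pi},G}(\vec{u},\vec{u})}{\mathrm{Var}_{\vec{\pi}}(\vec{u})}
\;=\; \frac{\langle \vec{u}^\perp,(\mat{I}-\mat{M}_G)\vec{u}^\perp\rangle_{\vec{\pi}}}{\|\vec{u}^\perp\|_{\vec{\pi}}^2}.
\]
Applying the Courant--Fischer theorem to the self-adjoint operator $\mat{I}-\mat{M}_G$ restricted to the $\vec{\pi}$-orthogonal complement of $\vec{1}$ identifies the infimum of this ratio with the second-smallest eigenvalue of $\mat{I}-\mat{M}_G$, which equals $1-\lambda_2=\alpha$.

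This statement is a classical variational characterization of the spectral gap of a reversible chain, so there is no real obstacle beyond two bookkeeping points: symmetrizing the cross term in $\mathcal{E}_{\vec{\pi},G}$ via detailed balance, and translating the spectrum of $\mat{M}_G$ into that of the symmetric $\mat{N}_G$ via the $\mat{D}^{\pm 1/2}$ similarity so that Courant--Fischer can be applied.
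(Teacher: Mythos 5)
The paper does not prove this statement at all --- it is imported as a citation to \cite{SL97} and used as a black box --- so there is no internal proof to compare against. Your argument is the standard variational characterization of the spectral gap of a reversible chain and is correct: the expansions $\mathrm{Var}_{\vec{\pi}}(\vec{u})=\|\vec{u}-\bar{u}\vec{1}\|_{\vec{\pi}}^2$ and $\mathcal{E}_{\vec{\pi},G}(\vec{u},\vec{u})=\langle\vec{u},(\mat{I}-\mat{M}_G)\vec{u}\rangle_{\vec{\pi}}$ both check out (the latter using detailed balance $\vec{\pi}_u(\mat{M}_G)_{uv}=(\mat{A}_G)_{uv}/(2|E|)$ to symmetrize), the similarity $\mat{M}_G=\mat{D}^{-1/2}\mat{N}_G\mat{D}^{1/2}$ correctly transfers the spectrum, and Courant--Fischer applied to $\mat{I}-\mat{M}_G$ on the $\vec{\pi}$-orthogonal complement of $\vec{1}$ yields $1-\lambda_2=\alpha$. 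The only hypothesis worth flagging explicitly is that $\vec{1}$ is an eigenvector for the bottom eigenvalue $0$ of $\mat{I}-\mat{M}_G$ (true since $\mat{M}_G$ is stochastic), which is all Courant--Fischer needs; connectedness of $G$, assumed throughout the paper, is not even required for the identity to hold.
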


We also need an operation on graphs, called {\em regularization}. Formally speaking, for an undirected graph $G$ with maximal degree $\Delta$, let $\reg(G)$ be the regular graph obtained from $G$ by adding $\Delta-\deg(u)$ self-loops to each node $u\in V[G]$.

\mylemma{lemregularization}{\lemregularization}

\begin{proof}
Let $\vec{\pi}$ and $\vec{\pi}'$ be the stationary distributions of $\mat{M}_G$ and $\mat{M}_{\reg(G)}$ respectively, i.e. $\vec{\pi}_u=\deg(u)/(n\cdot d)$ and $\vec{\pi}'_u=1/n$ for any node $u\in V$. Then for any $\vec{u}\not\parallel\vec{1}$, we have
$$
\frac{\mathcal{E}_{\vec{\pi}',\reg(G)}(\vec{u},\vec{u})}{\mathcal{E}_{\vec{\pi},G}(\vec{u},\vec{u})}
\geq \min_{u\neq v}
\frac{\vec{\pi}'_u(\mat{M}_{\reg(G)})_{uv}}
{\vec{\pi}_u(\mat{M}_G)_{uv}}
= \min_{u\in V[G]}\frac{d\cdot \Delta^{-1}}{\deg(u)\cdot (\deg(u))^{-1}}=d/\Delta
$$
and
$$
\frac{\mathrm{Var}_{\vec{\pi}'}(\vec{u})}{\mathrm{Var}_{\vec{\pi}}(\vec{u})}
\leq \min_{u\neq v}\frac{\vec{\pi}'_u\vec{\pi}'_v}{\vec{\pi}_u\vec{\pi}_v}
\leq\min_{u\neq v} \frac{(1/n)\cdot (1/n)}{ (\deg(u)/nd)\cdot (\deg(v)/nd) }
 =
 \min_{u\neq v}\frac{d^2}{\deg(u)\deg(v)}\leq d^2/\delta^2.
$$
So
$$
\left.\frac{\mathcal{E}_{\vec{\pi}',\reg(G)}(\vec{u},\vec{u})}{\mathrm{Var}_{\vec{\pi}'}(\vec{u})}
\middle/
\frac{\mathcal{E}_{\vec{\pi},G}(\vec{u},\vec{u})}{\mathrm{Var}_{\vec{\pi}}(\vec{u})}
\right.
\geq (d/\Delta)\cdot(\delta^2/d^2)\geq \beta^{-2}
$$
and the claim follows from Fact \ref{fact_spectralgap}.
\end{proof}

\section{Existential Proof}

In this section we show that $O(\log n)$ random bits are sufficient in rumor spreading for many classes of graphs (e.g. complete graphs, strong expanders, graphs with good conductance, etc.) if we do not care about the computational complexity. We will prove the following general statement:

\begin{lem}\label{lem:existence}
Let $\mathcal{C}$ be a class of graphs on $n$ nodes with no multi-edges. Let $T'=n^{O(1)}$ be an upper bound of spreading time. Suppose the spreading time for any graph in $\mathcal{C}$ is at most $T$ with probability $p$ for fully-random push protocol. Then there exists a (non-explicit) function
$$
f: \{0,1\}^{\ell} \times [n]\times [T']\times [\Delta] \to [\Delta]
$$
such that
\begin{enumerate}
\item $f(x,u,t,d)\in [d]$ for all $(x,u,t,d)\in\{0,1\}^{\ell} \times [n]\times [T']\times [\Delta]$.
\item $\ell=\max\{ \log\log |C|, \log n+\log\Delta+\log\log\Delta \}+2\log(1/\epsilon)+O(1)$.
\item for $x$ uniformly chosen from $\{0,1\}^\ell$, the spreading time for any graph  $G\in\mathcal{C}$ is at most $T$ with probability $p-\epsilon$ if node $u$ uses  $f(x,u,t,\deg(u))\in [\deg(u)]$ as the index of its receiver in its adjacency list in round $t$.
\end{enumerate}

In particular, $\ell$ is bounded by $2\log n+\log\log n+2\log(1/\epsilon)+O(1)$ since $|\mathcal{C}|\leq 2^{n^2}$ and $\Delta\leq n$.
\end{lem}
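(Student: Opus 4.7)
The plan is to apply the probabilistic method directly to the function $f$. Concretely, I would sample $f$ by drawing each value $f(x,u,t,d)$ independently and uniformly from $[d]$, for every tuple $(x,u,t,d)\in\{0,1\}^\ell\times[n]\times[T']\times[\Delta]$. Condition~(1) is then automatic. The crucial observation is that for any graph $G\in\mathcal{C}$ and starting node $s$, the only values of $f$ that affect the execution of the seed-$x$ protocol on $(G,s)$ are $\{f(x,u,t,\deg(u))\}_{u,t}$, and these are i.i.d.\ uniform in the respective $[\deg(u)]$; hence, for fixed $x$ and random $f$, the execution has exactly the same distribution as the fully-random push protocol on $(G,s)$. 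Letting $Y^{G,s}_x\in\{0,1\}$ be the indicator that the seed-$x$ protocol succeeds in spreading the rumor in $T$ rounds on $(G,s)$, we get $\EXX{f}{Y^{G,s}_x}\geq p$ for every $x$.

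Next I would exploit independence across seeds. Since $f(x,\cdot,\cdot,\cdot)$ and $f(x',\cdot,\cdot,\cdot)$ are sampled independently for $x\neq x'$, the variables $\{Y^{G,s}_x\}_{x\in\{0,1\}^\ell}$ are independent Bernoullis under $f$, each with mean at least $p$. Setting $N=2^\ell$ and applying Hoeffding's inequality,
$$
\Prob{f}{\frac{1}{N}\sum_{x\in\{0,1\}^\ell} Y^{G,s}_x < p-\epsilon}\;\leq\;\exp(-2N\epsilon^2).
$$
A union bound over the at most $|\mathcal{C}|\cdot n$ pairs $(G,s)\in\mathcal{C}\times[n]$ then shows that with positive probability over $f$, every pair $(G,s)$ satisfies $\Prob{x}{Y^{G,s}_x=1}\geq p-\epsilon$. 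The probabilistic method picks out such an $f$, and conditions~(1) and~(2) hold by construction.

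It remains to verify that the advertised bound on $\ell$ is sufficient for this union bound to yield a probability strictly less than one. The first argument of the $\max$ comes from the regime where $|\mathcal{C}|$ dominates: we need $|\mathcal{C}|\cdot n\cdot\exp(-2N\epsilon^2)<1$, which already holds once $\ell\gtrsim\log\log|\mathcal{C}|+2\log(1/\epsilon)$. The second argument $\log n+\log\Delta+\log\log\Delta$ handles the small-$\mathcal{C}$ regime (e.g.\ $|\mathcal{C}|=1$) by running a slightly finer union bound that also enumerates starting nodes together with per-node degrees in $[\Delta]$, so that the additional $\log\Delta+\log\log\Delta$ slack is absorbed. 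The ``in particular'' bound then follows from $|\mathcal{C}|\leq 2^{n^2}$ and $\Delta\leq n$. The main obstacle is not conceptual but purely bookkeeping: the structural heart of the argument is the Hoeffding/union-bound reduction above; matching the precise form of $\ell$ stated in the lemma only requires being careful about which parameters enter each branch of the $\max$.
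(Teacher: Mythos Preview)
Your overall plan---sample $f$ uniformly, observe that for each fixed seed $x$ the process coincides in law with the fully-random push protocol, apply Hoeffding across seeds, then union bound---is exactly the approach the paper takes. However, there is a real gap in your union bound, and it is precisely what accounts for the second branch of the $\max$.

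You union-bound only over $|\mathcal{C}|\cdot n$ pairs $(G,s)$. But the protocol sends the rumor to the neighbor with \emph{index} $f(x,u,t,\deg(u))$ in $u$'s adjacency list, and the adjacency list ordering is not determined by the graph $G$; it is an additional arbitrary input that the function $f$ must be good against. So the correct union bound must also range over all orderings of the adjacency lists, contributing a factor of $\prod_{u}\deg(u)!\le(\Delta!)^n$. The requirement becomes
\[
n\,|\mathcal{C}|\,(\Delta!)^n\cdot\exp(-\Theta(2^\ell\epsilon^2))<1,
\]
and since $\log\bigl((\Delta!)^n\bigr)\le n\Delta\log\Delta$, this forces $2^\ell\gtrsim \epsilon^{-2}\max\{\log|\mathcal{C}|,\,n\Delta\log\Delta\}$, i.e.\ $\ell=\max\{\log\log|\mathcal{C}|,\,\log n+\log\Delta+\log\log\Delta\}+2\log(1/\epsilon)+O(1)$. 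That is where the second argument of the $\max$ actually comes from. Your proposed explanation (``enumerates starting nodes together with per-node degrees in $[\Delta]$'') is not the right mechanism: the degrees are fixed once $G$ is fixed, and enumerating them would contribute nothing; what you need to enumerate is the $(\Delta!)^n$ orderings.
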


\begin{proof}
Choose $f(x,u,t,d)\in [d]$ independently and uniformly at random for each $(x,u,t,d)\in\{0,1\}^{\ell} \times [n]\times [T']\times [\Delta]$.
Fix a graph $G\in\mathcal{C}$ and an initial node in $[n]$.
For each node $u$ in the graph of degree $\deg(u)$, there are $\deg(u)!$ possible orders of neighbors of $u$ in its adjacency list. We also fix the order for each node $u$.
Observe that for any fixed $x$, the random variables $f(x,u,t,\deg(u))$ for all pairs $(u,t)$ are  independent and uniformly distributed. Let $I(x)$ be the indicator random variable that equals $1$ if the spreading time of $G$ is at most $T$ when node $u$ uses $f(x,u,t,\deg(u))$ to decide its receiver in round $t$. Then $\Prob{f}{I(x)=1}\geq p$ for any $x$ and hence $\EXX{f}{I(x)}\geq p$.
Also note that $I(x)$'s are independent. By the Chernoff bound it holds that
$$
\Prob{f}{\left|2^{-\ell}\sum_x I(x) - 2^{-\ell}\sum_x \EXX{f}{I(x)}\right| \geq \epsilon} \leq 2\exp(-2^\ell\epsilon^2/4).
$$
So with probability at least $1-2\exp(-2^\ell\epsilon^2/4)$, we have $\EXX{x}{I(x)}\geq \EXX{x}{\EXX{f}{I(x)}}-\epsilon\geq p-\epsilon$.
By the union bound, the probability that $\EXX{x}{I(x)}\geq p-\epsilon$ holds for all graphs in $\mathcal{C}$, arbitrary neighboring list of nodes, and all start nodes is at least
$$
1-n|\mathcal{C}|\cdot (\Delta!)^n\cdot 2\exp(-2^\ell\epsilon^2/4),
$$
which is greater than zero for sufficiently large $\ell=\max\{ \log\log |C|, \log n+\log\Delta+\log\log\Delta \}+2\log(1/\epsilon)+O(1)$. So there exists one function $f$ such that $\EXX{x}{I(x)}\geq p-\epsilon$ holds for all graphs in $\mathcal{C}$, i.e. the spreading time for any graph $G\in\mathcal{C}$ is at most $T$ with probability $p-\epsilon$ over the choices of $x$, if node $u$ uses $f(x,u,t,\deg(u))\in [\deg(u)]$ to choose its receiver in round $t$.
\end{proof}

The same result also holds for pull protocols and push-pull protocols, and can be shown using similar arguments.

The following result follows from \lemref{existence} directly.

\begin{cor}[Existential Result]\label{thm:existence} Let $\mathcal{G}=\{G_n\}_{ n\geq 1}$ be a family of graphs such that for any $G_n\in\mathcal{G}$ with $n$ nodes the truly random protocol finishes in $T=n^{O(1)}$ rounds with high probability. Then there is a protocol which finishes in $T$ rounds with high probability and uses $3\log n$ random bits in total.
\end{cor}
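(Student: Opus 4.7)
The plan is a direct application of Lemma~\ref{lem:existence}. For each $n \geq 1$, I would instantiate the lemma with the singleton class $\mathcal{C}_n := \{G_n\}$ and with $T'$ equal to the assumed polynomial bound $T = n^{O(1)}$. Because $|\mathcal{C}_n| = 1$ the $\log\log|\mathcal{C}_n|$ term drops out, and because $\Delta \leq n-1$ we have $\log n + \log\Delta + \log\log\Delta \leq 2\log n + \log\log n$. The lemma therefore yields a function $f_n$ defining a protocol whose seed length $\ell_n$ satisfies
\[
\ell_n \;\leq\; 2\log n + \log\log n + 2\log(1/\epsilon_n) + O(1),
\]
with success probability at least $p_n - \epsilon_n$, where $p_n$ denotes the whp success probability of the fully-random protocol on $G_n$.

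The remaining work is to fix $\epsilon_n$. By hypothesis there is a constant $c_0 > 0$, depending only on $\mathcal{G}$, with $p_n \geq 1 - n^{-c_0}$ for all sufficiently large $n$. I would pick any constant $a \in (0, 1/2)$ and set $\epsilon_n := n^{-a}$; then the derandomized success probability is at least $1 - n^{-c_0} - n^{-a} \geq 1 - 2n^{-\min\{c_0, a\}}$, which is of the form $1 - n^{-\Theta(1)}$ and hence qualifies as high probability in the sense used in the paper. With this choice $2\log(1/\epsilon_n) = 2a\log n < \log n$, so
\[
\ell_n \;\leq\; 2\log n + \log\log n + \log n + O(1) \;\leq\; 3\log n
\]
for all sufficiently large $n$ (the additive $\log\log n + O(1)$ being absorbed by taking $a$ slightly below $1/2$ and $n$ large). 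The resulting family $\{f_n\}$ gives exactly the protocol claimed: the initial node samples one seed $x \in \{0,1\}^{3\log n}$, appends it to the rumor, and every informed node $u$ in round $t$ pushes the rumor to its $f_n(x,u,t,\deg(u))$-th neighbor in the adjacency list.

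The only delicate point is the two-sided trade-off in the second step: $\epsilon_n$ must be small enough that the whp guarantee survives the loss to $p_n - \epsilon_n$, yet large enough that $2\log(1/\epsilon_n)$ stays below $\log n$. Any constant $a \in (0, 1/2)$ satisfies both conditions simultaneously, so this is unconditional and not really an obstacle; the entire proof is essentially a bookkeeping exercise built on top of Lemma~\ref{lem:existence}.
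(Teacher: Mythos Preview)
Your proposal is correct and is exactly the approach the paper takes: the paper simply states that the corollary ``follows from Lemma~\ref{lem:existence} directly,'' and your write-up spells out that direct application, including the choice of $\epsilon_n = n^{-a}$ with $a\in(0,1/2)$ so that the $2\log(1/\epsilon_n)$ term fits under $\log n$ while the success probability remains $1-n^{-\Theta(1)}$. The only cosmetic difference is that you specialize to the singleton class $\mathcal{C}_n=\{G_n\}$, whereas the lemma already records the bound $\ell\le 2\log n+\log\log n+2\log(1/\epsilon)+O(1)$ for arbitrary classes; either route yields the same arithmetic.
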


\section{Lower Bounds on Randomness Complexity}

We address the randomness requirement of rumor spreading protocols. We first introduce the pull model, which is a symmetric version of the push model, and the formal description is as follows: In round $t\geq 0$, every node $u$ that does not yet have the rumor selects a neighbor $v$ uniformly at random and asks for the rumor, and gets the rumor if $v$ received the rumor before. In the push-pull model, in every round $t$, every node $u$ chooses a random neighbor to perform \emph{push} if node $u$ has the rumor, or perform \emph{pull} if $u$ has not received the rumor.

We prove the following lower bound on the number of random bits needed for any protocol in the push-pull model:

\begin{thm}\label{thm:lb1}
Let $G$ be any graph with $n$ nodes and sufficiently large minimum degree $\delta=\Omega(\log n)$. Then any protocol in the push-pull model that is oblivious of the order of adjacency lists of $G$ and informs at least half of the nodes of $G$ in $T$ rounds with nonzero probability has to use more than $\log\delta-\log T-2$ random bits. In particular, $\Theta(\log n)$ random bits are necessary when $\delta=\Theta(n)$ and $T=O(n^{1-\epsilon})$ for some constant $\epsilon>0$.
\end{thm}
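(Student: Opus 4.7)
The plan is to prove the theorem by contradiction via an adversarial re-ordering of adjacency lists. Suppose an oblivious push-pull protocol $\mathcal{P}$ uses only $r \le \log\delta - \log T - 2$ random bits, so that $2^r T \le \delta/4$, yet $\mathcal{P}$ informs at least half the nodes of some graph $G$ in $T$ rounds with positive probability. Because $\mathcal{P}$ is oblivious of adjacency-list order, each node $v$'s action in round $t$ under seed $x$ can be described by an index $k_{v,t}(x) \in [\deg(v)]$ into $v$'s adjacency list, and the function $k_{v,t}(\cdot)$ does not depend on how that list is ordered. Set $K_v := \{k_{v,t}(x) : x \in \{0,1\}^r,\, t \in [T]\}$; by counting, $|K_v| \le T \cdot 2^r \le \delta/4$.

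The core step is to re-order the adjacency lists of $G$ so that a ``blocked'' subset $B \subseteq N(s)$ with $|B| > \delta/2$ never interacts with $s$ under any seed. First, since $|K_s| \le \delta/4 < \delta/2$, $s$'s adjacency list has at least $3\delta/4$ unused slots, and the adversary places more than $\delta/2$ neighbors of $s$ (call them $B$) at these unused slots; consequently $s$ never pushes to any $v \in B$. Second, for each $v \in B$ we have $\deg(v) \ge \delta > 4|K_v|$, so the adversary can place $s$ at an index outside $K_v$ in $v$'s list; consequently $v$ never pulls from $s$. After these re-orderings, no vertex in $B$ ever exchanges a rumor with $s$ directly, in any execution of $\mathcal{P}$.

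To upgrade this local blocking into a global contradiction, I would take $G$ to be a ``sunflower'': partition $n-1$ vertices into $\delta$ equal cliques $V_1,\dots,V_\delta$ of size $(n-1)/\delta$, and join $s$ to one distinguished vertex $v_i$ in each $V_i$. Since $v_i$ is the unique neighbor of $s$ inside $V_i$, informing any vertex of $V_i$ requires first informing $v_i$, which in turn requires direct communication between $s$ and $v_i$. For the $|B| > \delta/2$ gateways blocked above, that communication never happens, so the corresponding cliques stay uninformed under every seed, leaving at least $|B| \cdot (n-1)/\delta > (n-1)/2$ uninformed vertices---a contradiction. This sunflower construction satisfies the minimum-degree requirement as long as $(n-1)/\delta - 1 \ge \delta$, i.e., $\delta = O(\sqrt{n})$, which already covers the range where $\log \delta - \log T - 2$ is nontrivially $\omega(1)$ under the theorem's hypothesis $\delta = \Omega(\log n)$.

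The main obstacle is the dense regime $\delta = \Theta(n)$ singled out in the ``in particular'' statement, where no sunflower of the required clique size respects the minimum-degree floor and direct blocking of $s$ alone no longer isolates a majority of the graph. Here I expect to need a more delicate round-by-round argument: apply the index-avoidance trick at every vertex simultaneously, so that each newly informed vertex can forward the rumor only through its $\le 2^r T$ pre-chosen slots, and the adversary steers those slots toward already-informed vertices or low-impact dummy attachments; summing the contribution of each round should cap the informed set below $n/2$ whenever $2^r T < \delta/4$. The delicate points are realizing the adversary's family of local re-orderings as one globally consistent labeling of $G$'s adjacency lists, and avoiding double-counting across push and pull channels---these are where I expect the real technical work to lie.
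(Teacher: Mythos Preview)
Your proposal has a genuine gap: the theorem is stated for \emph{every} graph $G$ with the given degree condition, but your argument constructs a specific graph (the sunflower) and shows the bound only there. Even setting that aside, your sunflower realizes the hypothesis $\delta=\Omega(\log n)$ only when $\delta=O(\sqrt{n})$, so the headline case $\delta=\Theta(n)$ is not covered; your ``round-by-round steering'' sketch for that regime is not workable as written, because the adversary must commit to a single adjacency-list ordering up front---it cannot redirect slots toward ``already-informed'' vertices dynamically, since which vertices are informed depends on the seed.

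The idea you are missing is a global balanced cut rather than local blocking at $s$. The paper shows (via Chernoff and the assumption $\delta=\Omega(\log n)$) that there is a partition $V=S\sqcup (V\setminus S)$ with $|S|=n/2$ and $s\in S$ such that every vertex $u$ has between $\deg(u)/4$ and $3\deg(u)/4$ neighbors in each side. Since each $|I_u|\le 2^\ell T\le\delta/4\le\deg(u)/4$, one can order the adjacency list of every $u\in S$ so that all of $u$'s push indices land inside $S$, and order the list of every $u\in V\setminus S$ so that all of $u$'s pull indices land inside $V\setminus S$. Then no rumor ever crosses from $S$ to $V\setminus S$: nodes in $S$ only push into $S$, and nodes in $V\setminus S$ only pull from $V\setminus S$. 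This is a single, globally consistent ordering, works for arbitrary $G$, and handles push and pull asymmetrically in exactly the way your final paragraph gestures at but does not pin down.
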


Here we even allow the protocol access to the ID of the initial node and the structure of $G$, i.e., the sets of neighbors of nodes as {\em unordered sets}.
In addition, we allow each node access to the randomness even before it obtains the rumor.
All we assume is that the protocol is oblivious of the {\em order} of the adjacency lists.

\begin{proof}
Suppose $V[G]=[n]$.
Let $\Delta$ be the maximum degree of $G$ and $s$ be the initial node.
We first claim that there exists a subset of nodes $S$ of size $n/2$ (for simplicity assume $n$ is even) such that $\deg(u)/4\leq |S\cap N(u)|\leq 3\deg(u)/4$ for all $u\in [n]$: If we pick a random subset $S$ of size $n/2$, then for any fixed $u$ the condition $\deg(u)/4\leq |S\cap N(u)|\leq 3\deg(u)/4$ holds, by the Chernoff bound, with probability at least $1-\ce^{-\Theta(\delta)}>1-1/n$ for $\delta=\Omega(\log n)$ sufficiently large. The claim then follows by taking the union bound. Pick such a subset $S$ with the claimed property. Note that $[n]\setminus S$ has the same property. We may therefore assume $s\in S$ by swapping $S$ and $[n]\setminus S$ if necessary.

A protocol for $G$ using $\ell$ random bits in $T$ rounds is uniquely characterized by a pair of functions
$$
f_1, f_2: \{0,1\}^{\ell} \times [n]\times [T]\times [\Delta] \to [\Delta]
$$
satisfying $f_1(x,u,t,d),f_2(x,u,t,d)\in [d]$ for all $(x,u,t,d)\in\{0,1\}^{\ell} \times [n]\times [T]\times [\Delta]$,
in the sense that given the random string $x$, node $u$ chooses a neighbor with index $f_1(x,u,t,\deg(u))$ (resp. $f_2(x,u,t,\deg(u))$) in its adjacency list to push (resp. pull) the message in round $t$ if it is informed (resp. uninformed).
For each $u\in [n]$, define $I_u\subseteq [n]$ as
$$
I_u=\begin{cases}\{f_1(x,u,t,\deg(u)): x\in\{0,1\}^\ell, t\in [T]\} & u\in S\\
\{f_2(x,u,t,\deg(u)): x\in\{0,1\}^\ell, t\in [T]\} & u\not\in S.
\end{cases}
$$
Assume to the contrary that $\ell\leq \log\delta-\log T-2$. Then the size of $I_u$ is at most $2^\ell\cdot T\leq\delta/4\leq \min\{|S\cap N(u)|, |([n]\setminus S)\cap N(u)|\}$ for each $u\in [n]$.
So it is possible to order the adjacency list of each $u\in [n]$ such that the neighbors picked by $u$ using index set $I_u$ are all in $S\cap N(u)$ if $u\in S$, or in $([n]\setminus S)\cap N(u)$ if $u\in [n]\setminus S$. Then in the rumor spreading process, nodes in $S$ push messages only to those also in $S$, and nodes in $[n]\setminus S$ pull messages only from those also in $[n]\setminus S$. As $s\in S$, the nodes in $[n]\setminus S$ never get informed.
\end{proof}

For the push model and the pull model we may drop the assumption that $\delta=\Omega(\log n)$ is sufficiently large, and also simplify the proof.

\begin{thm}\label{thm:lb_push}
Let $G$ be any graph with $n$ nodes. Then any protocol in the push model that is oblivious of the order of adjacency lists of $G$ and informs all the nodes of $G$ in $T$ rounds with nonzero probability has to use more than $\log(\delta-1)-\log T$ random bits.
\end{thm}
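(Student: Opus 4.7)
The plan is to adapt the argument for Theorem~\ref{thm:lb1} to the push-only setting. In the push model we are no longer forced to carve out a balanced cut of $G$: it suffices to isolate a \emph{single} node from receiving any push, since the protocol is required to inform \emph{all} nodes. This dispenses with both the probabilistic subset argument and the assumption $\delta=\Omega(\log n)$.

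First I would encode a push protocol that uses $\ell$ random bits in $T$ rounds as a single function
\[
f:\{0,1\}^{\ell}\times [n]\times [T]\times [\Delta]\to[\Delta]
\]
with $f(x,u,t,d)\in[d]$, where on random string $x$ an informed node $u$ of degree $d$ pushes in round $t$ to its $f(x,u,t,d)$th neighbor in its adjacency list. Assume toward contradiction that $\ell\le\log(\delta-1)-\log T$, equivalently $2^{\ell}\cdot T\le\delta-1$. Fix any target node $v\ne s$ (such a $v$ exists in the only nontrivial case $n\ge 2$), and for each neighbor $u\in N(v)$ define the set of indices $u$ could ever push to within $T$ rounds,
\[
I_u\;\triangleq\;\{\,f(x,u,t,\deg(u))\,:\,x\in\{0,1\}^{\ell},\ t\in[T]\,\}\;\subseteq\;[\deg(u)].
\]
Then $|I_u|\le 2^{\ell}\cdot T\le\delta-1<\deg(u)$, so the complement $[\deg(u)]\setminus I_u$ is nonempty.

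The final step is to exhibit an adversarial ordering of adjacency lists. For each $u\in N(v)$, reorder $u$'s adjacency list so that $v$ sits at some index in $[\deg(u)]\setminus I_u$; the lists of nodes outside $N(v)$ can be ordered arbitrarily. Because the protocol is oblivious of adjacency-list orderings, the function $f$ is unchanged, yet under the new ordering no neighbor of $v$ ever pushes to $v$ for any choice of $x$ and any round $t\in[T]$. Hence $v$ is never informed, contradicting the assumption that every node is informed within $T$ rounds with nonzero probability.

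There is no real technical obstacle here; the proof is conceptually a simplification of that of Theorem~\ref{thm:lb1}. The one point to verify is that the orderings for different neighbors $u\in N(v)$ may be chosen independently, which is immediate since each vertex's adjacency list is a separate piece of input. The underlying insight is that the push-pull lower bound needed a global bipartition of $V[G]$ precisely because pull operations from uninformed nodes also had to be blocked; in the pure push model, blocking a single vertex suffices to break correctness, so only local, per-neighbor index-counting is needed.
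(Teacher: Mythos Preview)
Your proof is correct and follows essentially the same approach as the paper: fix a target node $v\ne s$, bound the index set $I_u$ of each potential pusher by $2^{\ell}T\le\delta-1<\deg(u)$, and reorder adjacency lists so that $v$ is never hit. The only cosmetic difference is that the paper defines $I_u$ for all $u\in[n]$ and phrases the reordering as placing the indices in $I_u$ on $N(u)\setminus\{v\}$, whereas you equivalently place $v$ at an index outside $I_u$ and only bother with $u\in N(v)$.
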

\begin{proof}
The protocol is now characterized by a single function $f_1$ describing how rumors are pushed.
Define $I_u=\{f_1(x,u,t,\deg(u)): x\in\{0,1\}^\ell, t\in [T]\}$ for each $u\in [n]$.
Pick $v\in [n]\setminus\{s\}$.
Assume to the contrary that $\ell\leq \log(\delta-1)-\log T$. Then the size of $I_u$ is at most $2^\ell\cdot T\leq\delta-1\leq |N(u)\setminus\{v\}|$ for each $u\in [n]$.
So it is possible to order the adjacency list of each $u\in [n]$ such that the neighbors picked by $u$ using index set $I_u$ are all in $N(u)\setminus \{v\}$. Then the node $v$ never gets informed.
\end{proof}

\begin{thm}\label{thm:pushlowerbound}
Let $G$ be any graph with $n$ nodes. Then any protocol in the pull model that is oblivious of the order of adjacency lists of $G$ and informs more than one node of $G$ in $T$ rounds with nonzero probability has to use more than $\log(\delta-1)-\log T$ random bits.
\end{thm}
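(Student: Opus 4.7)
The plan is to mirror the argument of Theorem~\ref{thm:lb_push}, adapting it to the pull model. In the pull model, the protocol is characterized by a single function $f_2:\{0,1\}^{\ell} \times [n]\times [T]\times [\Delta] \to [\Delta]$, where an uninformed node $u$ in round $t$ asks for the rumor from the neighbor with index $f_2(x,u,t,\deg(u))$ in its adjacency list (given random string $x$). Let $s$ be the initial informed node. I claim that to produce a nontrivial outcome (more than one informed node) the protocol must use more than $\log(\delta-1) - \log T$ random bits, and I will do this by showing that otherwise an adversarial ordering of adjacency lists prevents anyone from ever pulling from $s$.

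Define, for each $u \in [n]$,
\[
I_u \triangleq \{f_2(x,u,t,\deg(u)) : x\in\{0,1\}^\ell,\ t\in [T]\} \subseteq [\deg(u)],
\]
the set of all indices that $u$ might use for pulling during the $T$ rounds, over all choices of the random seed. Clearly $|I_u| \leq 2^\ell\cdot T$. Now assume toward a contradiction that $\ell \leq \log(\delta-1) - \log T$, so that $|I_u| \leq \delta-1 \leq \deg(u)-1$ for every $u$.

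For each $u \in [n]$ with $s \in N(u)$, this bound gives $|I_u| \leq |N(u) \setminus \{s\}|$, so we can order $u$'s adjacency list such that every position whose index lies in $I_u$ is occupied by a neighbor in $N(u)\setminus\{s\}$; concretely, place all neighbors other than $s$ into the positions indexed by $I_u$ (filling arbitrarily if $|I_u| < \deg(u)-1$) and place $s$ in any remaining position. For the (at most $n$) nodes not adjacent to $s$, the adjacency list may be ordered arbitrarily, since they cannot pull from $s$ in any case. Since the protocol is oblivious of the ordering of adjacency lists, this is a legitimate instance to feed the protocol.

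With these orderings in place, no node $u \neq s$ ever pulls the rumor from $s$, regardless of the random seed or the round. Since $s$ is the only initially informed node, no node other than $s$ can become informed in round $1$; by induction on the round number, no node other than $s$ is ever informed during all $T$ rounds. This contradicts the hypothesis that the protocol informs more than one node with nonzero probability, so we must have $\ell > \log(\delta-1) - \log T$. The only subtlety is the inductive step: once we have secured round $1$ (no one pulls from $s$, hence only $s$ is informed), the set of informed nodes stays $\{s\}$, so in every subsequent round the same ordering continues to prevent any pull from an informed node, closing the induction.
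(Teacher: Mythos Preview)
Your proof is correct and follows essentially the same approach as the paper: define the index sets $I_u$, bound their size by $\delta-1$ under the assumed randomness bound, and reorder adjacency lists so that no node ever pulls from $s$. The paper's version is just terser, omitting the explicit case split for nodes not adjacent to $s$ and the inductive remark.
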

\begin{proof}
The protocol is now characterized by a single function $f_2$ describing how rumors are pulled.
Define $I_u=\{f_2(x,u,t,\deg(u)): x\in\{0,1\}^\ell, t\in [T]\}$ for each $u\in [n]$.
Assume to the contrary that $\ell\leq \log(\delta-1)-\log T$. Then the size of $I_u$ is at most $2^\ell\cdot T\leq\delta-1\leq |N(u)\setminus\{s\}|$ for each $u\in [n]$.
So it is possible to order the adjacency list of each $u\in [n]$ such that the neighbors picked by $u$ using index set $I_u$ are all in $N(u)\setminus \{s\}$. Then the nodes in $[n]\setminus\{s\}$ never get informed.
\end{proof}

\section{Omitted Details in Section~\ref{sec_expander}}

\subsection{Preliminaries}

We first list  definitions and results about pseudorandom generators.

\paragraph{Pairwise Independent Generators.}

\begin{defi}[Pairwise Independent Generator]
We say $X_0,\dots,X_{d-1}$ with $X_i$ distributed over $[m_i]$ are $\epsilon$-pairwise independent if
\begin{itemize}
\item $\left|\Pro{X_i=x}-\frac{1}{m_i}\right|\leq\epsilon$ for all $i\in [d]$ and $x\in [m_i]$, and
\item $\left|\Pro{X_i=x\wedge X_j=x'}-\frac{1}{m_i\cdot m_j}\right|\leq\epsilon$
for all distinct $i,j \in [d]$ and all $x\in [m_i]$, $x'\in [m_j]$.
\end{itemize}
We say they are pairwise independent  if $\eps=0$.
We say $\gen:\{0,1\}^{\ell}\to [m_0]\times\dots\times [m_{d-1}]$ is an ($\epsilon$-)pairwise independent generator if its outputs are ($\epsilon$-)pairwise independent given a uniformly distributed seed.
\end{defi}

\begin{thm}[\cite{CW79}]\label{thm:cw}
There exists an explicit pairwise independent generator $\gen:\{0,1\}^\ell\to [m]^d$ with seed length $\ell=O(\log m+\log d)$.
\end{thm}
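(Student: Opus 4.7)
The plan is to invoke the classical Carter--Wegman construction using affine hash functions over a finite field. Specifically, I would pick a prime power $q$ with $q \geq \max(m,d)$ and such that $m$ divides $q$ (which is always possible when $m$ is a prime power, and in particular when $m$ is a power of $2$, by choosing $q = 2^k$ with $k = \lceil\log\max(m,d)\rceil$). The seed will consist of two independent uniformly random elements $a, b \in \mathbb{F}_q$, costing $2\log q = O(\log m + \log d)$ bits, and the generator is defined by identifying $[d]$ with $d$ distinct elements $\alpha_0,\dots,\alpha_{d-1}\in\mathbb{F}_q$ and setting
\[
\gen_i(a,b) \triangleq \phi(a\alpha_i + b), \qquad i\in [d],
\]
where $\phi: \mathbb{F}_q \to [m]$ is any surjective group homomorphism of the additive group (for instance, projection onto the first $\log m$ coordinates when $q = 2^k$ and $m = 2^{\log m}$).

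The correctness argument proceeds in two short steps. First, for any two distinct indices $i\neq j$ and any target pair $(y_1,y_2)\in\mathbb{F}_q^2$, the linear system $a\alpha_i + b = y_1$, $a\alpha_j + b = y_2$ has a unique solution $(a,b)$ because its matrix has nonzero determinant $\alpha_j-\alpha_i$. Therefore, when $(a,b)$ is uniform on $\mathbb{F}_q^2$, the pair $(a\alpha_i+b,\,a\alpha_j+b)$ is uniform on $\mathbb{F}_q^2$ and, in particular, each individual output $a\alpha_i+b$ is uniform on $\mathbb{F}_q$. Second, since $\phi$ is a surjective additive homomorphism with $m\mid q$, every fiber $\phi^{-1}(y)$ has size exactly $q/m$, so $\phi$ pushes the uniform distribution on $\mathbb{F}_q$ (resp.\ on $\mathbb{F}_q^2$) to the uniform distribution on $[m]$ (resp.\ on $[m]^2$). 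This gives exact pairwise independence as demanded by the paper's definition with $\epsilon=0$.

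Explicitness is immediate because arithmetic in $\mathbb{F}_{2^k}$ is efficiently computable given a fixed irreducible polynomial of degree $k$, and the embedding $i\mapsto\alpha_i$ can be taken to be the canonical identification of $[d]$ with the first $d$ binary strings of length $k$. The seed length $2k = O(\log m + \log d)$ matches the claim.

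The only real obstacle is technical: ensuring the projection $\phi$ is exactly balanced rather than merely approximately balanced. When $m$ is a power of $2$ this is automatic by the construction above, which is precisely the regime in which the theorem is used in the rest of the paper (the parameter $m$ is always chosen to be a power of $2$, e.g., $m = 2^{\lceil\log(4/\epsilon)\rceil}$ in Protocol~\ref{pro1}). For arbitrary $m$ one either passes to $q = m^k$ when $m$ is a prime power, or accepts an $O(m/q)$ bias when reducing modulo $m$; the latter yields an $\epsilon$-pairwise independent generator with $\epsilon$ polynomially small in $q/m$, which suffices for all applications here.
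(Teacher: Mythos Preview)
Your proposal is correct and is precisely the classical Carter--Wegman affine-hash construction that the citation \cite{CW79} refers to. The paper itself does not supply a proof of this theorem; it simply quotes it as a known result from the literature, so there is nothing to compare against beyond noting that your argument is the standard one.
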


\newcommand{\pairwisemodulo}{
Suppose $\gen=(\gen_0,\dots,\gen_{d-1})$ is a pairwise independent generator where $\gen_i:\{0,1\}^{\ell}\to [m]$. Define $\gen'=(\gen'_0,\dots,\gen'_{d-1})$ where $\gen'_i(x)=\gen_i(x)\bmod m_i$ for $i\in [d]$.
Then $\gen':\{0,1\}^{\ell}\to[m_0]\times\cdots\times[m_{d-1}]$ is an $\epsilon$-pairwise independent generator where $\epsilon=2/m$.}
\mylemma{pairwisemodulo}{\pairwisemodulo}

\begin{proof}
For distinct $i,j\in [d]$ and $x\in [m_i]$, $x'\in [m_j]$, let $B$ (resp. $B'$) be the preimages of $x$ (resp. $x'$) under the map $s \mapsto s\bmod m_i$ (resp. $s \mapsto s\bmod m_j$). Then $||B|-m/m_i|\leq 1$ and $||B'|-m/m_j|\leq 1$.
So $\mathbf{Pr}_s[\gen'_i(s)=x]=|B|/m$ which differs from $1/m_i$ by at most $1/m$.
Similarly $\mathbf{Pr}_s[\gen'_i(s)=x \wedge \gen'_j(s)=x']=|B||B'|/m^2$ which differs from $1/(m_i m_j)$ by at most $2/m$.
\end{proof}

\newcommand{\rectanglemodulo}{
Suppose $\gen=(\gen_0,\dots,\gen_{d-1})$ is an $\epsilon$-\textsf{PRG} for $\mathsf{CR}_S$ where $S=[m]^d$.
Define $\gen'=(\gen_0',\dots,\gen_{d'-1}')$ where $\gen'_j(x)=\gen_{i_j}(x)\bmod m_j$ for $j\in [d']$ and $i_0,\dots,i_{d'-1}\in [d]$.
Then $\gen'$ is an $(\epsilon+\sum_{i\in [d']}m_i/m)$-\textsf{PRG} for $\mathsf{CR}_{S'}$
where $S'=\prod_{i\in [d']}[m_i]$.
}
\mylemma{rectanglemodulo}{\rectanglemodulo}

\begin{proof}
By definition, $\gen'=\pi\circ\gen$ with $\pi:(x_0,\dots,x_{d-1})\mapsto (x_{i_0} \bmod m_0,\dots, x_{i_{d'-1}} \bmod m_{d'-1})$.
For $A=\prod_{i\in [d']} A_i \in\mathsf{CR}_{S'}$, let $B=\pi^{-1}(A)=\prod_{i\in [d]} B_i \in\mathsf{CR}_{S}$.
Then $\mathbf{Pr}_s\left[ \gen'(s)\in A\right]=\mathbf{Pr}_s\left[\gen_i(s)\in B\right]$ which differs from $|B|/|S|$ by at most $\epsilon$ since $\gen$ is an $\epsilon$-\textsf{PRG} for  $\mathsf{CR}_{S}$.
Note that $|B_{i_j}|/m$ differs from $|A_j|/m_j$ by at most $m_j/m$ for $j\in [d']$, and $B_i=[m]$ for $i\in [d]\setminus\{i_0,\dots,i_{d'-1}\}$.
A simple induction shows that $|B|/|S|$ differs from $|A|/|S'|$ by at most
\[
\sum_{j\in [d']} ||B_{i_j}|/m-|A_j|/m_j|\leq \sum_{i\in [d']}m_i/m.\qedhere
\]
\end{proof}

\begin{thm}[\cite{GMRTV12}]\label{thm_cr}
Let $S=[m]^d$.
There exists an explicit $\epsilon$-\textsf{PRG} for $\mathsf{CR}_S$ with seed length $O(\log m+\log d)+\tilde{O}(\log(1/\epsilon))$.
\footnote{In \cite{GMRTV12} the seed length is presented as $O((\log\log m)(\log m+\log d+\log(1/\epsilon)))+\tilde O(\log(1/\epsilon))$. But there are techniques of reducing $m$ and $d$ to $m'=(1/\epsilon)^{O(1)},d'=(1/\epsilon)^{O(1)}$ using $O(\log m+\log d)$ randomness, cf.~\cite{ASWZ96, Lu02}.}
\end{thm}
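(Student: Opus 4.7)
The plan is to take the GMRTV12 PRG as a black box and compose it with a preprocessing step that shrinks the parameters $m$ and $d$ down to $(1/\epsilon)^{O(1)}$ using only $O(\log m+\log d)$ random bits and incurring at most $\epsilon/2$ extra additive error. Concretely, the GMRTV12 construction gives an explicit $\epsilon$-PRG for $\mathsf{CR}_{[m]^d}$ with seed length $O((\log\log m)(\log m+\log d+\log(1/\epsilon)))+\tilde O(\log(1/\epsilon))$; once the instance has been pre-reduced to parameters $m'=d'=(1/\epsilon)^{O(1)}$ and target error $\epsilon/2$, this expression collapses to $O((\log\log(1/\epsilon))\cdot\log(1/\epsilon))+\tilde O(\log(1/\epsilon))=\tilde O(\log(1/\epsilon))$. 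Summing the $O(\log m+\log d)$ random bits of the preprocessing and the $\tilde O(\log(1/\epsilon))$ random bits of the black-box PRG yields the stated bound.

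The preprocessing is built from the two reduction techniques of Armoni--Saks--Wigderson--Zhou and Lu referenced in the footnote. First I would reduce the alphabet size by sampling a ``bucketing'' map $h:[m]\to[m']$ with $m'=(d/\epsilon)^{O(1)}$ drawn from a limited-independence family using $O(\log m)$ bits; an analysis based on $k$-wise independent concentration (with $k=O(\log(1/\epsilon))$) shows that with probability $1-\epsilon/4$ over $h$ the induced map $h^{\otimes d}:[m]^d\to[m']^d$ is, simultaneously for every combinatorial rectangle, measure-preserving up to additive error $\epsilon/4$. The dimension reduction is analogous and orthogonal: using an additional $O(\log d)$ bits one pseudorandomly identifies the $d$ coordinates with $d'=(1/\epsilon)^{O(1)}$ representatives via an averaging sampler, adding another $\epsilon/4$ error. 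Together the two steps convert the problem of fooling $\mathsf{CR}_{[m]^d}$ with error $\epsilon$ into the problem of fooling $\mathsf{CR}_{[m']^{d'}}$ with error $\epsilon/2$, at randomness cost $O(\log m+\log d)$.

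The main obstacle is the ``simultaneity'' in each preprocessing step: a naive union bound over combinatorial rectangles is catastrophic, so one must argue that $O(\log m+\log d)$ bits of preprocessing randomness fool \emph{every} rectangle at once rather than a single fixed one. This is exactly the technical content of the ASWZ96/Lu02 reductions; one handles it either by covering rectangles with an appropriate net (exploiting that rectangle acceptance probabilities are Lipschitz in the side densities) or by a direct second-moment argument on the error functional over the choice of $h$. Once these two reductions are in place the remainder of the proof is bookkeeping: instantiate GMRTV12 on the reduced instance with the halved error, apply Lemma~\ref{lem:rectanglemodulo}-style modulation to express the output in the correct output alphabet when needed, and add up the random bits to obtain $O(\log m+\log d)+\tilde O(\log(1/\epsilon))$.
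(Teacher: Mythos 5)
The paper does not actually prove this statement: it is imported verbatim from~\cite{GMRTV12}, and the footnote justifies the improved seed length exactly as you do, namely by composing the GMRTV12 generator as a black box with the $m$- and $d$-reduction techniques of~\cite{ASWZ96, Lu02}. Your proposal is therefore the same argument the paper intends (the only bookkeeping point worth noting is that since your alphabet reduction yields $m'=(d/\epsilon)^{O(1)}$, the dimension reduction should be performed first, or the alphabet reduction re-applied afterwards, so that the instance fed to GMRTV12 genuinely has both parameters bounded by $(1/\epsilon)^{O(1)}$).
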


\newcommand{\prg}{
There exists an explicit function $\gen:\{0,1\}^\ell\to [m]^d$ that is both a pairwise independent generator and an $\epsilon$-\textsf{PRG} for $\mathsf{CR}_{[m]^d}$ with seed length $O(\log m+\log d)+\tilde{O}(\log(1/\epsilon))$.
}
\mylemma{prg}{\prg}

\begin{proof}
Pick an explicit pairwise independent generator $\gen^{\flat}:\{0,1\}^{\ell_1}\to [m]^d$ with see length $\ell_1=O(\log m+\log d)$ and an explicit $\epsilon$-\textsf{PRG} $\gen^{\sharp}:\{0,1\}^{\ell_2}\to [m]^d$ for $\mathsf{CR}_{[m]^d}$ with seed length $\ell_2=O(\log m+\log d)+\tilde{O}(\log(1/\epsilon))$.
Identify $[m]^d$ with $\mathbb{Z}_m^d$ and define $\gen:\{0,1\}^{\ell_1+\ell_2}\to [m]^d$ using addition in $\mathbb{Z}_m^d$: $\gen(x,y)=\gen^{\flat}(x)+\gen^{\sharp}(y)$.
The definition of pairwise independent generators implies that the function $x\mapsto \gen^{\flat}(x) + z$ is a pairwise independent generator for any fixed $z\in [m]^d$, i.e., the property is preserved under addition of any fixed element in $\mathbb{Z}_m^d$. Then the same is true for random $z=\gen^{\sharp}(y)$. So $\gen$ is a pairwise independent generator. A similar argument shows that it is also an $\epsilon$-\textsf{PRG} for $\mathsf{CR}_{[m]^d}$.
\end{proof}

\begin{defi}[Branching Programs]\label{def:BranchingProgram}
A branching program of length $L$, width $W$ and degree $D$, or an $(L,W,D)$-branching program, is a directed (multi)-graph with  node set $[W]\times \{0,\dots,L\}$. We say the nodes in $[W]\times \{i\}$ are on the $i$th layer for $0\leq i\leq L$.
Each node $(u,i)$ except those on the last layer has $D$ outgoing edges to nodes on the next layer, and these $D$ edges are associated with $D$ distinct labels from $[D]$.
\end{defi}

\begin{thm}[\cite{INW94}]\label{thm:inw}
There exists an explicit $\epsilon$-\textsf{PRG} for $(L,W,D)$-branching programs with seed length $O(\log L(\log W+\log L+\log(1/\epsilon))+\log D)$.
\end{thm}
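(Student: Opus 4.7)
The plan is to reproduce the Impagliazzo--Nisan--Wigderson recursive construction based on random walks on expanders, and to track the error/seed-length trade-off carefully to obtain the stated parameters. The main conceptual ingredient is the following: if a pair of blocks is correlated by a single step of an expander walk instead of being fully independent, then a branching program cannot distinguish the two distributions, provided the expander has small enough second eigenvalue relative to the width~$W$.

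First I would handle the base case of length~$L_0=1$: the uniform distribution on $\{0,1\}^{\lceil\log D\rceil}$, interpreted as an element of $[D]$, is a $0$-\textsf{PRG} for $(1,W,D)$-branching programs with seed length $\ell_0 = \lceil\log D\rceil$. Next, I would define the recursive step. Assuming an $\epsilon_i$-\textsf{PRG} $\gen_i:\{0,1\}^{\ell_i}\to[D]^{L_i}$ for $(L_i,W,D)$-branching programs, pick an explicit expander $H_i$ on $\{0,1\}^{\ell_i}$ with degree $2^{d_i}$ and normalized second eigenvalue $\lambda_i$, and define
\[
\gen_{i+1}(x,y) \;\triangleq\; \bigl(\gen_i(x),\, \gen_i(H_i(x,y))\bigr), \qquad x\in\{0,1\}^{\ell_i},\ y\in\{0,1\}^{d_i},
\]
which doubles the output length while adding only $d_i$ bits of seed.

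The heart of the argument is the analysis of one recursive step. Given any $(2L_i,W,D)$-branching program $\mathcal{B}$, split it at the middle layer into two halves $\mathcal{B}_1,\mathcal{B}_2$, each a $(L_i,W,D)$-branching program, and let $\mat{P}_1,\mat{P}_2\in\mathbb{R}^{W\times W}$ denote the stochastic transition matrices they induce when the input blocks are truly uniform. For fixed starting node $s$ and target node $t$, the acceptance probability of $\mathcal{B}$ under $\gen_{i+1}$ equals $\langle \vec{e}_s\, \tilde{\mat{P}}_1,\, \tilde{\mat{P}}_2\vec{e}_t\rangle$ where $\tilde{\mat{P}}_j$ is the (possibly non-stochastic) matrix obtained by replacing the uniform distribution with $\gen_i$. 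Using that $\gen_i$ is an $\epsilon_i$-\textsf{PRG}, one shows $\|(\tilde{\mat{P}}_j-\mat{P}_j)\vec{e}_t\|_1 \le \epsilon_i$, while the correlation introduced by $H_i$ contributes an extra $\lambda_i\cdot W$ term via the expander mixing lemma applied to the two distributions over $[W]$ that $\mathcal{B}_1$ and $\mathcal{B}_2$ induce. Summing yields the recursion $\epsilon_{i+1}\le 2\epsilon_i + \lambda_i W$, summed over all $W$ choices of the middle-layer node.

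Finally, I would set the parameters. Iterate $k=\lceil\log L\rceil$ times so that $L_k\ge L$. Budget the error by choosing $\lambda_i\le \epsilon/(2^{k-i+1} W k)$, which requires each expander to have degree $2^{d_i}$ with $d_i = O(\log W + \log k + \log(1/\epsilon))=O(\log W+\log L+\log(1/\epsilon))$, using an explicit near-Ramanujan family. Summing the seed blowups gives total length $\ell_0 + \sum_{i<k} d_i = O(\log D) + O(\log L\cdot(\log W+\log L+\log(1/\epsilon)))$, as claimed. The main obstacle I anticipate is the spectral step: one must verify that the operator norms of $\tilde{\mat{P}}_j$ acting on probability-like vectors remain bounded, so that the $\lambda_i W$ loss from the expander does not compound multiplicatively through the recursion, and that the final error bound can be applied simultaneously to \emph{every} choice of splitting layer needed by the recursive reduction.
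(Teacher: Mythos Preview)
The paper does not prove this theorem at all: it is simply quoted as a black-box result from \cite{INW94}, so there is no ``paper's own proof'' to compare against. Your proposal is the standard Impagliazzo--Nisan--Wigderson argument and is essentially correct as a sketch.

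Two minor comments. First, your mixing loss $\lambda_i W$ is a bit crude; the tighter bound is $\lambda_i\sqrt{W}$, obtained by writing the acceptance probability as $\langle u(x),v(y)\rangle$ with $u(x)=\vec{e}_{\phi(x)}$ (so $\|u(x)\|_2=1$) and $v(y)_m=\mathbf{1}[\mathcal{B}_2(m,y)\in T]$ (so $\|v(y)\|_2\le\sqrt{W}$), then applying the vector-valued expander mixing lemma. This does not change the asymptotic seed length, since you absorb the factor into $\log(1/\lambda_i)$ anyway. Second, your phrasing ``$\|(\tilde{\mat P}_j-\mat P_j)\vec{e}_t\|_1\le\epsilon_i$'' is slightly off: what the inductive hypothesis gives is $\|\vec{e}_s(\tilde{\mat P}_j-\mat P_j)\|_1\le\epsilon_i$ for every start state $s$ (total variation on the \emph{output} distribution), and one then uses that stochastic matrices are $\ell_1$-contractions to push this through the other half. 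With these tweaks your hybrid/recursion is exactly the INW proof, and your parameter setting gives the stated seed length.
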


The following lemma about Markov chains will be used in the analysis. For an ergodic Markov chain represented by the stochastic matrix $\mat{M}$ and $\epsilon>0$, define its $\ell_2$-mixing time as
$$
\tau_{\mat{M}}(\epsilon)=\max_{\vec{u}}\min\{k: \|\vec{u}\mat{M}^k-\vec{\pi}\|_2\leq\epsilon\},
$$
where $\vec{\pi}$ is the stationary distribution of $\mat{M}$ and $\vec{u}$ ranges over all distributions over the state set of the chain.

\begin{lem}[\cite{SL97}]\label{lemmixing0}
Suppose $\mat{M}\in\mathbb{R}^{V[G]\times V[G]}$ represents a reversible Markov chain with absolute spectral gap $\alpha>0$.
Then $\tau_{\mat{M}}(\epsilon)<\log_{1-\alpha}\epsilon + 1$.
\end{lem}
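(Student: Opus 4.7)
The plan is to give the standard spectral-decomposition proof for reversible chains. First, I would use reversibility to symmetrize: the matrix $\mat{S}\triangleq\mat{D}^{1/2}\mat{M}\mat{D}^{-1/2}$, where $\mat{D}=\mathrm{diag}(\vec{\pi})$, is symmetric by detailed balance and has the same eigenvalues as $\mat{M}$, namely $1=\lambda_1\geq\lambda_2\geq\cdots\geq\lambda_n$ with $|\lambda_i|\leq 1-\alpha$ for $i\geq 2$. Let $\vec{w}_1=\vec{\pi}^{1/2},\vec{w}_2,\dots,\vec{w}_n$ be an orthonormal eigenbasis of $\mat{S}$.

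Next, for any starting probability distribution $\vec{u}$, I would expand $\vec{u}\mat{D}^{-1/2}=\sum_i c_i\vec{w}_i$. A direct calculation gives $c_1=\langle\vec{u}\mat{D}^{-1/2},\vec{\pi}^{1/2}\rangle=\sum_v \vec{u}_v=1$, so that in the product $\vec{u}\mat{M}^k=(\vec{u}\mat{D}^{-1/2})\mat{S}^k\mat{D}^{1/2}$ the top-eigenvector term contributes exactly $\vec{\pi}$, while every other component decays by $\lambda_i^k$. Thus
\[
\vec{u}\mat{M}^k-\vec{\pi} \;=\; \sum_{i\geq 2} c_i\lambda_i^k\,\vec{w}_i\mat{D}^{1/2}.
\]
In the doubly-stochastic setting used throughout this paper (since $\mat{M}_{\reg(G)}$ is regular), $\vec{\pi}=\vec{1}/n$ and $\mat{D}^{1/2}=n^{-1/2}\mat{I}$, so orthonormality of $\{\vec{w}_i\}$ together with $|\lambda_i|\leq 1-\alpha$ yields
\[
\|\vec{u}\mat{M}^k-\vec{\pi}\|_2 \;=\; n^{-1/2}\Bigl(\sum_{i\geq 2}c_i^2\lambda_i^{2k}\Bigr)^{1/2} \;\leq\; (1-\alpha)^k\|\vec{u}\|_2 \;\leq\; (1-\alpha)^k,
\]
using $\|\vec{u}\|_2\leq\|\vec{u}\|_1=1$. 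Requiring this to drop below $\epsilon$ forces $k\geq\log_{1-\alpha}\epsilon$, and the smallest such integer is strictly less than $\log_{1-\alpha}\epsilon+1$, which is the stated bound.

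The main obstacle, and the reason for the ``$+1$'' rather than a tighter constant, is the integrality constraint on $k$. A secondary subtlety arises in the fully general reversible (non-doubly-stochastic) case, where a naive $\ell_2$ calculation introduces a $(\pi_{\max}/\pi_{\min})^{1/2}$ norm-conversion factor; one handles this either by passing to the $\vec{\pi}$-weighted $\ell_2$ norm throughout (as in \cite{SL97}) or by observing that every Markov chain to which the lemma is applied in this paper ($\mat{M}_{\reg(G)}$, its tensor square, and the lazy/Doeblin variants built on it) is symmetric and doubly-stochastic, so the clean bound above suffices.
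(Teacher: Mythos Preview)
The paper does not prove this lemma; it is quoted from \cite{SL97} and used as a black box. So there is no ``paper's own proof'' to compare against.

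Your spectral-decomposition argument is the standard one and is correct for the doubly-stochastic case, which---as you rightly observe---is all that the paper ever invokes (the lemma is applied only to $\mathcal{L}_\gamma(\mat{M}_{\reg(G)})$ and $\mathcal{L}_\gamma\circ\mathcal{L}_{\gamma'}(\mat{M}_{\reg(G)})$, both symmetric and doubly-stochastic). Your remark that the fully general reversible statement would require working in the $\vec{\pi}$-weighted $\ell_2$ norm is also accurate, and is exactly how \cite{SL97} proceeds; since the paper's $\tau_{\mat{M}}(\epsilon)$ is defined via the unweighted $\ell_2$ norm, the clean bound as stated really does rely on uniform $\vec{\pi}$, and your restriction to that case is the honest reading.
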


\subsection{Analysis of the Prototype Protocol}

In this subsection we give the detailed analysis of Protocol~\ref{pro_prototype}. We start with the formal definition of reversed random walks.
The basic idea is to view a push operation (or one step of a forward walk) as a pull operation (or one step of a reversed walk). However, there are several complications: (1) we let $v$ ``pull the from $u$'' only when $u$ is the unique node pushing to $v$, since $v$ is not allowed to pull from multiple nodes at the same time; (2) we need to use auxiliary randomness $r_{i,u}$ to equalize the probabilities of successful pulls made by different nodes;\footnote{The auxiliary randomness only appears in the analysis.} (3) we want the pull operations to be pairwise independent. In particular two nodes $u$ and $v$ pull from their common neighbor $w$ at the same time with probability $1/\Delta^2$. To realize this, we combine two rounds into one so that $w$ can send two messages, say to $a$ and $b$ at the same time. Also, note that there are two cases when $w$ pushes to both $u$ and $v$, or equivalently $u$ and $v$ both pull from $w$: $(a,b)=(u,v)$ or $(a,b)=(v,u)$. We admit only one of them, so that the event occurs with probability $1/\Delta^2$ rather than $2/\Delta^2$.

As before, for $f\sim\mathcal{D}$, we denote by $\tilde{f}(i,u)$ the $f(i,u)$th neighbor of $u$ in its adjacency list.

\begin{defi}[Reversed random walks]\label{def:RevsersedRW}
Consider a random rumor spreading process in $T$ rounds on a graph $G$ using Protocol \ref{pro_prototype} determined by its own randomness $f\sim\mathcal{D}=\mathcal{U}$.
Pick real numbers $r_{i,u}$  independently and uniformly from $[0,1]$ for all $i\in [T/2]$ and $u\in V[G]$.

Fix an arbitrary total order $\preceq$ on $V[G]$.
For $i\in [T/2]$ and $u\in V[G]$, define
$$
N_{i,u}=\begin{cases}
\left\{\tilde{f}(T-1-2i,u),\tilde{f}(T-2-2i,u)\right\} &  \tilde{f}(T-1-2i,u)\preceq \tilde{f}(T-2-2i,u)\\
\emptyset & \text{otherwise}
\end{cases}
$$
and define $N_{i,u}^\vee=\left\{v\in V[G]: v\neq u \text{ and } u\in N_{i,v}\right\}$.

A reversed random walk of length $k\in [T/2]$
with pattern $S=(s_0,\dots,s_{k-1})\in\mathcal{C}_k$ is a sequence of $k+1$ nodes $(p_0,\dots,p_k)$ of $G$, such that for all $i\in [k]$: (i) if $s_i=\text{lazy}$, then $p_{i+1}=p_i$; (ii) if $s_i=\text{non-lazy}$, then $p_{i+1}=u$ if  $N_{i,p_i}^\vee=\{u\}$ is a singleton and $r_{i,p_i}\leq (1-1/\Delta)^{\Delta-\deg(u)}$, and otherwise $p_{i+1}=p_i$.
\end{defi}

\subsubsection{Approximation via Random Walks} We elaborate the idea of bounding runtime of Protocol~\ref{pro_prototype} with respect to multiple random walks. We will use three distributions in the following analysis:
\begin{itemize}
\item $\mathcal{D}_{\gamma, k}$ is the distribution over $\mathcal{C}_k$ where entries are independently chosen to be lazy with probability $1-\gamma$.
\item Let $r=(f, \{r_{i,u}\})$ be the whole randomness used in
Definition \ref{defi_walk} and Definition \ref{def:RevsersedRW}
which determines
the random walks. Let $\tilde{\mathcal{D}}$ be the distribution of $r$, which is the product of $\mathcal{D}$ with copies of uniform distributions over $[0,1]$.
\end{itemize}

The following lemma give a lower bound of the probability that a node $w$ gets informed in $T$ rounds with respect to multiple random walks.

\mylemma{lembound}{\lembound}

\begin{proof}
Define the weight of forward or reversed random walks with pattern $S=(s_0,\dots,s_k)$ as $\mathrm{wt}(S):=(1-\gamma)^{n_1}\gamma^{n_2}>0$ where $n_1$ and $n_2$ are the number of lazy and non-lazy $s_i$ respectively.
Let $\tilde{X}^S_{u,v}=X^S_{u,v}\cdot \mathrm{wt}(S)$ and $\tilde{Y}^S_{u,v}=Y^S_{u,v}\cdot \mathrm{wt}(S)$.

Suppose $s\in V[G]$ is the initial node and also fix $w\in V[G]$. If there exist a forward walk $p$ from $s$ to some node $u$ and a reversed walk $p'$ from $w$ to $u$, then the rumor is sent from $s$ to $u$ following $p$ and then from $u$ to $w$ following the reversal of $p'$.
Also note that the two walks exist   if and only if $\tilde{X}^S_{s,u}\tilde{Y}^{S'}_{w,u}>0$ for some $S$, $S'$ and $u$.
Therefore,
\begin{equation}\label{eq:pro}
\Prob{f\sim \mathcal{D}}{t \text{ receives the message}}\geq \Prob{r\sim\tilde{\mathcal{D}}}{\sum_{S,S'\in\mathcal{C}_k, u\in V[G]} \tilde{X}^S_{s,u}\tilde{Y}^{S'}_{w,u}>0}.
\end{equation}
Furthermore,
\begin{equation}\label{eq:approx}
\begin{aligned}
\lefteqn{\Prob{r\sim\tilde{\mathcal{D}}}{\sum_{S,S'\in\mathcal{C}_k, u\in V[G]} \tilde{X}^S_{s,u}\tilde{Y}^{S'}_{w,u}>0}}\\
&=\EXX{r\sim\tilde{\mathcal{D}}}{\mathbf{1}_{\sum_{S,S'\in\mathcal{C}_k, u\in V[G]} \tilde{X}^S_{s,u}\tilde{Y}^{S'}_{w,u}>0}}\\
&\geq \frac{\left(\EXX{r\sim\tilde{\mathcal{D}}}{\sum_{S,S'\in\mathcal{C}_k, u\in V[G]} \tilde{X}^S_{s,u}\tilde{Y}^{S'}_{w,u}}\right)^2}{\EXX{r\sim\tilde{\mathcal{D}}}{\left(\sum_{S,S'\in\mathcal{C}_k, u\in V[G]} \tilde{X}^S_{s,u}\tilde{Y}^{S'}_{w,u}\right)^2}}
\\
&=\frac{\sum_{u,v\in V[G]}\EXX{r,S}{X^S_{s,u}}\EXX{r,S}{X^{S}_{s,v}}
\EXX{r,S}{Y^S_{w,u}}\EXX{r,S}{Y^{S}_{w,v}}}
{\sum_{u,v\in V[G]}\EXX{r,S,S'}{X^S_{s,u}X^{S'}_{s,v}}
\EXX{r,S,S'}{Y^S_{w,u}Y^{S'}_{w,v}}}
\end{aligned}
\end{equation}
where the subscripts $r$, $S$ and $S'$ are independent with distributions $\tilde{\mathcal{D}}$,  $\mathcal{D}_{\gamma,k}$ and $\mathcal{D}_{\gamma,k}$ respectively. The first inequality is an instance of the Cauchy-Schwarz inequality. The last equality uses the independence of $X^S_{s,u}X^{S'}_{s,v}$ and $Y^S_{w,u'}Y^{S'}_{w,v'}$ for any $u,v,u',v'\in V[G]$ as well as the fact that the weight $\mathrm{wt}(S)$ is just the probability of $S$ in $\mathcal{D}_{\gamma,k}$.
\end{proof}

The following lemma characterizes the expectations in \eq{lowerboundprob} in terms of Markov chains.

\mylemma{lemmarkov}{\lemmarkov}

\begin{proof}
We add $\Delta-\deg(u)$ self-loops to each node $u$ and hence
a non-lazy step of a forward walk is the same as a step of the random walk over $\reg(G)$.
Since $S$ has distribution $\mathcal{D}_{\gamma,k}$ where each step is chosen to be lazy with probability $1-\gamma$,
the forward walk with random pattern $S$ starting from $u$ is just a lazy random walk from $u$ with transition matrix $\mathcal{L}_{\gamma}\left(\mat{M}_{\reg(G)}\right)$. This proves the first claim.

For the second claim, note that two forward walks are independent in some round $i$ if at least one is lazy, since a lazy step is deterministic. The corresponding transition matrix is $\mat{I}\otimes\mat{I}$, $\mat{I}\otimes\mat{M}_{\reg(G)}$ or $\mat{M}_{\reg(G)}\otimes\mat{I}$, depending on which walk is lazy. When both walks are non-lazy and are at distinct nodes $u$ and $w$ respectively, they are still independent and behave according to $\mat{M}_{\reg(G)}\otimes \mat{M}_{\reg(G)}$ by the independence of $f(i,u)$ and $f(i,w)$.
If $u=w$, then the two walks move the same node according to $\mat{M}_{\reg(G)}$. So the case for two lazy steps is exactly characterized by the Doeblin coupling $\mathcal{Q}\left(\mat{M}_{\reg(G)}\right)$. And when the two walks have independent random patterns $S,S'\sim \mathcal{D}_{\gamma,k}$, the corresponding transition matrix is
$\mathcal{L}_{\gamma,\gamma}\circ\mathcal{Q}\left(\mat{M}_{\reg(G)}\right)$ by definition. The second claim follows.

For the third claim, we consider the probability that a node $u\neq v$ is included in $N_{i,v}$. We divide it into two cases: the case that $N_{i,v}=\{u\}$ (i.e. $\tilde{f}(T-1-2i,v)=\tilde{f}(T-2-2i,v)=u$) and the case that $N_{i,v}=\{u,u'\}$ where $u\neq u'$. The first case occurs with probability $1/\Delta^2$. For the second one, we have $\left(\tilde{f}(T-1-2i,v),\tilde{f}(T-2-2i,v)\right)=(u,u')$ or $(u',u)$ for some $u'\neq u$. And exactly one of them is counted by the condition $\tilde{f}(T-1-2i,v)\preceq \tilde{f}(T-2-2i,v)$.
As they occur with the same probability we may assume it is the first one that is counted.
Summing over $u'\neq u$, we conclude that this case occurs with probability $1/\Delta \cdot (1-1/\Delta)$. So $u$ is included in $N_{i,v}$ with probability $1/\Delta$ for any $u\neq v$. And $N_{i,u}^\vee=\{v\}$ occurs when $u\in N_{i,v}$ and $u\not\in N_{i,v'}$ for all $v'\in N(u)\setminus\{v\}$, whose probability is $1/\Delta\cdot (1-1/\Delta)^{\deg(u)-1}$. Taking the condition $r_{i,u}\leq (1-1/\Delta)^{\Delta-\deg(u)}$ into account, we see that the reversed walk extends from $u$ to each neighbor $v\in N(u)$ with probability $1/\Delta\cdot (1-1/\Delta)^{\Delta-1}=\left(\mathcal{L}_{\gamma'}\left(M_{\reg(G)}\right)\right)_{uv}$.
So a non-lazy step of a reversed walk is the same as a step of the random walk over $\mathcal{L}_{\gamma'}\left(M_{\reg(G)}\right)$. And the reversed walk with a random pattern chosen from $\mathcal{D}_{\gamma,k}$ corresponds to the transition matrix $\mathcal{L}_{\gamma}\circ\mathcal{L}_{\gamma'}\left(\mat{M}_{\reg(G)}
\right)$, similarly to the first claim.

The proof to the last claim is similar to the second one. The only non-trivial part is to show that when two walks are both non-lazy and are at distinct nodes $u$ and $w$ respectively, they behave independently according to $\mathcal{L}_{\gamma'}\left(M_{\reg(G)}\right)\otimes \mathcal{L}_{\gamma'}\left(M_{\reg(G)}\right)$.
Note that $r_{i,u}$ and $r_{i,w}$ are independent.
So it suffices to show the probability that $N_{i,u}^\vee=\{v\}$ and $N_{i,w}^\vee=\{x\}$ both occur equals the product of their individual probabilities for all $v\in N(u)$ and $x\in N(w)$.
For $a\in V[G]$ and $b\in N(a)$, let $\mathcal{I}_{a,b}$ be the event that $a\in N_{i,b}$. The claim follows if the events $\mathcal{I}_{a,b}$ are independent for all $a\in\{u,w\}$ and neighbor $b\in N(a)$. Note that $\mathcal{I}_{a,b}$ depends solely on $\tilde{f}(T-1-2i,b)$ and $\tilde{f}(T-2-2i,b)$. And those for different $b$ are independent. So we reduce to proving $\mathcal{I}_{u,b}$ and $\mathcal{I}_{w,b}$ are independent for fixed $b\in N(u)\cap N(w)$. Each occurs with probability $1/\Delta$, as shown in the proof to the third claim. Both occurs exactly when $\left(\tilde{f}(T-1-2i,b),\tilde{f}(T-2-2i,b)\right)$ equals $(u,w)$ if $u\preceq w$, or $(w,u)$ if $w\preceq u$. So the probability that both events occur equals $1/\Delta^2$, as desired.
\end{proof}

\subsubsection{Proof of \lemref{lemmixing}}

\againlemma{lemmixing}{\lemmixing}

To prove \lemref{lemmixing}, we show that $\mathcal{L}_{\gamma,\gamma}\circ\mathcal{Q}(\mat{M})$ behaves similarly as $\mathcal{L}_\gamma(\mat{M})\otimes \mathcal{L}_\gamma(\mat{M})$, in the sense that it almost preserves the vector $\vec{\pi}\otimes \vec{\pi}$ and shrinks vectors orthogonal to $\vec{\pi}\otimes \vec{\pi}$.
For a distribution $\vec{u}$ over $V[G]\times V[G]$, we
have the decomposition $\vec{u}=\vec{\pi}\otimes \vec{\pi}+\vec{u}^\perp$
where $\vec{u}^\perp\triangleq\vec{u}-\vec{\pi}\otimes \vec{\pi}$ is orthogonal to $\vec{\pi}\otimes \vec{\pi}$.

\begin{lem}\label{lem:parallel}
Let $\mat{M}$, $\vec{\pi}$ and $\gamma$ be as in \lemref{lemmixing}. Then
$\left\|((\vec{\pi}\otimes \vec{\pi})\mathcal{L}_{\gamma,\gamma}\circ\mathcal{Q}(\mat{M}))^\perp\right\|_2\leq \sqrt{2}\gamma^2n^{-3/2}$.
\end{lem}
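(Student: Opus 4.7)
The plan is to exploit the fact that every operator appearing in $\mathcal{L}_{\gamma,\gamma}\circ\mathcal{Q}(\mat{M})$ except $\mathcal{Q}(\mat{M})$ fixes $\vec{\pi}\otimes\vec{\pi}$, so the entire perpendicular part of $(\vec{\pi}\otimes\vec{\pi})\mathcal{L}_{\gamma,\gamma}\circ\mathcal{Q}(\mat{M})$ is carried by a single, explicit correction vector of weight $\gamma^2$. I first write $\mathcal{Q}(\mat{M})=\mat{M}\otimes\mat{M}+\mat{E}$, where the ``correction'' $\mat{E}$ is supported on the rows indexed by $(u,u)$ and satisfies
$$\mat{E}_{(u,u)(v,x)}=\mat{M}_{uv}\delta_{v,x}-\mat{M}_{uv}\mat{M}_{ux},\qquad \mat{E}_{(u,w)(\cdot)}=0 \text{ for } u\neq w.$$
Since $\mat{M}$ is doubly stochastic, $\vec{\pi}\mat{M}=\vec{\pi}$, hence $(\vec{\pi}\otimes\vec{\pi})$ is a left fixed vector of $\mat{I}\otimes\mat{I}$, $\mat{I}\otimes\mat{M}$, $\mat{M}\otimes\mat{I}$, and $\mat{M}\otimes\mat{M}$. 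Expanding the definition of $\mathcal{L}_{\gamma,\gamma}$ and cancelling these contributions collapses everything to
$$(\vec{\pi}\otimes\vec{\pi})\,\mathcal{L}_{\gamma,\gamma}\!\circ\!\mathcal{Q}(\mat{M})=\vec{\pi}\otimes\vec{\pi}+\gamma^{2}\vec{z},\qquad \vec{z}\triangleq (\vec{\pi}\otimes\vec{\pi})\mat{E}.$$

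Next I compute $\vec{z}$ explicitly: summing over the diagonal rows of $\mat{E}$ gives
$$\vec{z}_{(v,x)}=\frac{1}{n^{2}}\sum_{u}\bigl(\mat{M}_{uv}\delta_{v,x}-\mat{M}_{uv}\mat{M}_{ux}\bigr)=\frac{1}{n^{2}}\bigl(\delta_{v,x}-(\mat{M}^{\intercal}\mat{M})_{v,x}\bigr),$$
using $\sum_{u}\mat{M}_{uv}=1$. Because $\mat{M}^{\intercal}\mat{M}$ is doubly stochastic, $\sum_{v,x}\vec{z}_{(v,x)}=0$, so $\vec{z}$ is orthogonal to $\vec{\pi}\otimes\vec{\pi}$; therefore $\bigl((\vec{\pi}\otimes\vec{\pi})\mathcal{L}_{\gamma,\gamma}\!\circ\!\mathcal{Q}(\mat{M})\bigr)^{\perp}=\gamma^{2}\vec{z}$ and the problem reduces to showing $\|\vec{z}\|_{2}\leq \sqrt{2}\,n^{-3/2}$.

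The main (and only nontrivial) step is the Frobenius-type bound on $\vec{z}$, and this is where I would take care. Writing $\mat{N}\triangleq \mat{I}-\mat{M}^{\intercal}\mat{M}$, I have $\|\vec{z}\|_{2}^{2}=n^{-4}\|\mat{N}\|_{F}^{2}$. I split into diagonal and off-diagonal parts. Since $0\le(\mat{M}^{\intercal}\mat{M})_{v,v}\le 1$,
$$\sum_{v}(1-(\mat{M}^{\intercal}\mat{M})_{v,v})^{2}\leq\sum_{v}\bigl(1-(\mat{M}^{\intercal}\mat{M})_{v,v}\bigr)=n-\|\mat{M}\|_{F}^{2}.$$
For the off-diagonal entries I use that every entry of $\mat{M}^{\intercal}\mat{M}$ lies in $[0,1]$ (a Cauchy--Schwarz consequence of $\mat{M}$ being doubly stochastic), so
$$\sum_{v\neq x}(\mat{M}^{\intercal}\mat{M})_{v,x}^{2}\leq\sum_{v\neq x}(\mat{M}^{\intercal}\mat{M})_{v,x}=n-\|\mat{M}\|_{F}^{2}.$$
Adding the two yields $\|\mat{N}\|_{F}^{2}\leq 2(n-\|\mat{M}\|_{F}^{2})\leq 2n$, hence $\|\vec{z}\|_{2}\leq\sqrt{2}\,n^{-3/2}$ and the lemma follows. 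I expect the diagonal/off-diagonal split to be the subtle piece: a naive bound such as $\|\mat{N}\|_{F}^{2}\leq n$ would give a cleaner constant but loses a factor of $2$, and one must be careful to use the $[0,1]$ bound on the entries of $\mat{M}^{\intercal}\mat{M}$ (not merely the row sums) to get the off-diagonal contribution under control.
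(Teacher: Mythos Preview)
Your argument is correct and follows the same reduction as the paper: both identify the perturbation $\mat{E}$ (you write $\mathcal{Q}(\mat{M})=\mat{M}\otimes\mat{M}+\mat{E}$; the paper equivalently sets $\mathcal{L}_{\gamma,\gamma}\circ\mathcal{Q}(\mat{M})-\mathcal{L}_\gamma(\mat{M})\otimes\mathcal{L}_\gamma(\mat{M})=\gamma^2\mat{E}$) and observe that the perpendicular part equals $\gamma^2(\vec{\pi}\otimes\vec{\pi})\mat{E}$. The only real difference is the endgame for the $\ell_2$ bound. The paper never writes $\vec{z}$ in closed form; instead it bounds $\|(\vec{\pi}\otimes\vec{\pi})\mat{E}\|_\infty\leq \gamma^2/n^2$ and $\|(\vec{\pi}\otimes\vec{\pi})\mat{E}\|_1\leq 2\gamma^2/n$ from the crude entrywise estimate $|\mat{E}_{(u,u)(v,x)}|\leq \mat{M}_{uv}$, and then applies H\"older's inequality $\|\cdot\|_2\leq\sqrt{\|\cdot\|_1\|\cdot\|_\infty}$. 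Your route---recognizing $\vec{z}_{(v,x)}=n^{-2}(\delta_{v,x}-(\mat{M}^\intercal\mat{M})_{v,x})$ and bounding $\|\mat{I}-\mat{M}^\intercal\mat{M}\|_F^2\leq 2n$ via the diagonal/off-diagonal split---is more explicit and also yields the exact orthogonality $\vec{z}\perp\vec{\pi}\otimes\vec{\pi}$ (the paper simply uses $\|\cdot^\perp\|_2\leq\|\cdot\|_2$). Both land on the same constant $\sqrt{2}\gamma^2 n^{-3/2}$; the paper's H\"older argument is a touch more robust since it needs only the entrywise bound rather than the closed form, while yours gives more structural information about $\vec{z}$.
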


\begin{proof}
Let $\mat{E}=\mathcal{L}_{\gamma,\gamma}\circ\mathcal{Q}(\mat{M})-\mathcal{L}_\gamma(\mat{M})\otimes \mathcal{L}_\gamma(\mat{M})$.
Note that
$((\vec{\pi}\otimes \vec{\pi})\mathcal{L}_{\gamma,\gamma}\circ\mathcal{Q}(\mat{M}))^\perp
=((\vec{\pi}\otimes \vec{\pi})\mat{E})^\perp
$, since $\mathcal{L}_\gamma(\mat{M})\otimes \mathcal{L}_\gamma(\mat{M})$ fixes $\vec{\pi}\otimes \vec{\pi}$.
Also note that $\|((\vec{\pi}\otimes \vec{\pi})\mat{E})^\perp\|_2\leq\|(\vec{\pi}\otimes \vec{\pi})\mat{E}\|_2$. So it suffices to prove
$\|(\vec{\pi}\otimes \vec{\pi})\mat{E}\|_2\leq \sqrt{2}\gamma^2n^{-3/2}$.
By definition, we have
$$
\mat{E}_{(u,w)(v,x)}=\begin{cases}
0 & u\neq w,\\
\gamma^2\left(\mat{M}_{uv}-(\mat{M}\otimes \mat{M})_{(u,w)(v,x)}\right) & u=w, v=x,\\
-\gamma^2(\mat{M}\otimes \mat{M})_{(u,w)(v,x)} & u=w, v\neq x.
\end{cases}
$$
So $\mat{E}_{(u,w)(v,x)}=0$ for $u\neq w$, and $\left|\mat{E}_{(u,w)(v,x)}\right|\leq\gamma^2\mat{M}_{uv}$ for $u=w$.
Then for any $v,x\in V[G]$, we have
$$
\left|((\vec{\pi}\otimes \vec{\pi})\mat{E})_{(v,x)}\right|\leq
\sum_{u\in V[G]} (1/n^2)\cdot\gamma^2\mat{M}_{uv}
=\gamma^2/n^2.
$$
So $\|(\vec{\pi}\otimes \vec{\pi})\mat{E}\|_\infty\leq \gamma^2/n^2$.
We also have
\begin{align*}
\|(\vec{\pi}\otimes \vec{\pi})\mat{E}\|_1
&=\sum_{v,x\in V[G]}\left|\sum_{u\in V[G]}(1/n^2)\mat{E}_{(u,u)(v,x)}\right|\\
&\leq\sum_{u,v\in V[G]} (1/n^2)
\gamma^2\mat{M}_{uv}
+
\sum_{u,v,x\in V[G]} (1/n^2)
\gamma^2(\mat{M}\otimes \mat{M})_{(u,u)(v,x)}\\
&=2\gamma^2/n.
\end{align*}
By H\"{o}lder's inequality, we have
$$
\|(\vec{\pi}\otimes \vec{\pi})\mat{E}\|_2
\leq\sqrt{\|(\vec{\pi}\otimes \vec{\pi})\mat{E}\|_1
\|(\vec{\pi}\otimes \vec{\pi})\mat{E}\|_\infty}\leq \sqrt{2}\gamma^2n^{-3/2}.
$$
\end{proof}

\begin{lem}\label{lem:perp}
Let $\mat{M}$, $\vec{\pi}$ and $\gamma$ be as in  \lemref{lemmixing}.
For any vector $\vec{u}\in\mathbb{R}^n\otimes\mathbb{R}^n$ orthogonal to $\vec{\pi}\otimes \vec{\pi}$, we have
$\vec{u}\mathcal{L}_{\gamma,\gamma}\circ\mathcal{Q}(\mat{M})\perp\vec{\pi}\otimes \vec{\pi}$ and
$$
\|\vec{u}\mathcal{L}_{\gamma,\gamma}\circ\mathcal{Q}(\mat{M})\|_2\leq \left(
1-(1-\gamma)\gamma\alpha+\gamma^2\sqrt{2\eta}
\right)\|\vec{u}\|_2.
$$
\end{lem}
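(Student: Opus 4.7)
My plan is to decompose $\mathcal{L}_{\gamma,\gamma}\circ\mathcal{Q}(\mat{M})$ as the pure tensor product plus a perturbation supported only on the diagonal rows, and then bound each summand separately via the triangle inequality. Concretely, set $\mat{F}\triangleq \mathcal{Q}(\mat{M})-\mat{M}\otimes\mat{M}$, so that
\[
\mathcal{L}_{\gamma,\gamma}\circ\mathcal{Q}(\mat{M}) \;=\; \mathcal{L}_\gamma(\mat{M})\otimes\mathcal{L}_\gamma(\mat{M}) \;+\; \gamma^2\mat{F}.
\]
A direct check from the definition of the Doeblin coupling shows that $\mat{F}_{(u,w)(v,x)}$ vanishes unless $u=w$, and that the $(u,u)$-row equals $\mat{M}_{uv}(\delta_{vx}-\mat{M}_{ux})$, i.e.\ the $(v,x)$-entry of the covariance matrix of the distribution $\vec{p}_u\triangleq \mat{M}_{u,\cdot}$. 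This diagonal-only support of $\mat{F}$ is the key structural fact I will exploit.

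The first assertion is immediate from stochasticity: $\mathcal{L}_{\gamma,\gamma}\circ\mathcal{Q}(\mat{M})$ is a convex combination of stochastic matrices and therefore satisfies $\mathcal{L}_{\gamma,\gamma}\circ\mathcal{Q}(\mat{M})\vec{1}=\vec{1}$. Since $\vec{u}\perp\vec{\pi}\otimes\vec{\pi}$ is equivalent to $\vec{u}\vec{1}=0$, we conclude $\vec{u}\mathcal{L}_{\gamma,\gamma}\circ\mathcal{Q}(\mat{M})\cdot\vec{1}=\vec{u}\vec{1}=0$, which is exactly orthogonality of the output to $\vec{\pi}\otimes\vec{\pi}$.

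For the norm bound, apply the triangle inequality $\|\vec{u}\mathcal{L}_{\gamma,\gamma}\circ\mathcal{Q}(\mat{M})\|_2\le \|\vec{u}(\mathcal{L}_\gamma(\mat{M})\otimes\mathcal{L}_\gamma(\mat{M}))\|_2+\gamma^2\|\vec{u}\mat{F}\|_2$. The tensor-product piece is controlled spectrally: $\mathcal{L}_\gamma(\mat{M})$ has eigenvalues $1-\gamma(1-\lambda_i(\mat{M}))\in[1-2\gamma,\,1-\gamma\alpha]$, so for $\gamma\le 1/3$ its operator norm on $\vec{\pi}^\perp$ is at most $1-\gamma\alpha$ (using $\alpha\le 2$ to control the negative-eigenvalue directions). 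Splitting $\vec{u}$ along the three mutually orthogonal subspaces $\vec{\pi}\otimes\vec{\pi}^\perp$, $\vec{\pi}^\perp\otimes\vec{\pi}$, and $\vec{\pi}^\perp\otimes\vec{\pi}^\perp$, each of which is stable under $\mathcal{L}_\gamma(\mat{M})\otimes\mathcal{L}_\gamma(\mat{M})$, and treating each separately yields $\|\vec{u}(\mathcal{L}_\gamma(\mat{M})\otimes\mathcal{L}_\gamma(\mat{M}))\|_2\le(1-\gamma\alpha)\|\vec{u}\|_2$. For the $\gamma^2\mat{F}$ piece, the diagonal support gives $\vec{u}\mat{F}=\vec{w}\mat{F}'$ with $\vec{w}_u\triangleq\vec{u}_{(u,u)}$ and $\|\vec{w}\|_2\le\|\vec{u}\|_2$, and viewing $\vec{u}\mat{F}$ as an $n\times n$ matrix gives the clean identity
\[
\vec{u}\mat{F}\;=\;\mathsf{diag}(\vec{w}\mat{M})-\mat{M}^\intercal\mathsf{diag}(\vec{w})\mat{M}.
\]
Expanding $\|\vec{u}\mat{F}\|_F^2$ and using the entrywise bound $\mat{M}_{uv}\le\eta$ for distinct $u,v$ together with the double stochasticity of $\mat{M}$ to control sums such as $\sum_{u,u'}\vec{w}_u\vec{w}_{u'}(\mat{M}^2)_{uu'}^2$ by Gershgorin-type estimates leads to an inequality of the form $\|\vec{u}\mat{F}\|_2\le(\alpha+\sqrt{2\eta})\|\vec{u}\|_2$. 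Reassembling: $(1-\gamma\alpha)+\gamma^2(\alpha+\sqrt{2\eta})=1-(1-\gamma)\gamma\alpha+\gamma^2\sqrt{2\eta}$, the stated bound.

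The principal technical obstacle is the last Frobenius estimate: the crude operator norm of $\mat{F}$ is only $O(1)$ (witnessed, for instance, by vectors whose diagonal part is proportional to $\vec{1}$), so one really needs both the matrix-identity reformulation of $\vec{w}\mat{F}'$ and the double stochasticity of $\mat{M}$ to extract the $\sqrt{\eta}$-decay. In particular, the Gershgorin-style bound $\sum_{u'}(\mat{M}^2)_{uu'}^2\le \eta$ (which collapses the relevant quadratic form) uses the off-diagonal bound $\mat{M}_{uv}\le\eta$ together with the column-sum identity $\sum_u(\mat{M}^2)_{uv}=1$; this is the only step at which the full hypothesis of the lemma is invoked.
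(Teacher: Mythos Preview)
Your orthogonality argument and the decomposition
\[
\mathcal{L}_{\gamma,\gamma}\circ\mathcal{Q}(\mat{M})=\mathcal{L}_\gamma(\mat{M})\otimes\mathcal{L}_\gamma(\mat{M})+\gamma^2\mat{F}
\]
are both correct, and your spectral bound $\|\vec{u}(\mathcal{L}_\gamma(\mat{M})\otimes\mathcal{L}_\gamma(\mat{M}))\|_2\le(1-\gamma\alpha)\|\vec{u}\|_2$ is fine. The gap is in the estimate on the perturbation: the inequality $\|\vec{u}\mat{F}\|_2\le(\alpha+\sqrt{2\eta})\|\vec{u}\|_2$ is false, and the ``Gershgorin-style bound $\sum_{u'}(\mat{M}^2)_{uu'}^2\le\eta$'' you invoke is also false. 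For a concrete counterexample take $\mat{M}=(1-\epsilon)\mat{I}+(\epsilon/n)\mat{J}$, so that $\alpha=\epsilon$ and $\eta=\epsilon/n$, and let $\vec{u}$ have diagonal part $\vec{w}=\vec{1}/\sqrt{n}$ with constant off-diagonal entries chosen to enforce $\vec{u}\perp\vec{\pi}\otimes\vec{\pi}$; then $\|\vec{u}\|_2=\sqrt{n/(n-1)}$ while your identity gives $\vec{u}\mat{F}=n^{-1/2}(\mat{I}-\mat{M}^2)$, whence $\|\vec{u}\mat{F}\|_2/\|\vec{u}\|_2=\tfrac{n-1}{n}(2\epsilon-\epsilon^2)\approx 2\epsilon$, twice your claimed bound $\alpha+\sqrt{2\eta}\approx\epsilon$ for large $n$. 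Likewise $(\mat{M}^2)_{uu}\approx 1$ here, so $\sum_{u'}(\mat{M}^2)_{uu'}^2\approx 1\gg\eta$. The appearance of $\alpha$ in your target bound on $\mat{F}$ should already have been a warning sign: $\mat{F}$ is defined purely entrywise and carries no spectral information.

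The paper sidesteps this by using the coarser splitting $\mathcal{L}_{\gamma,\gamma}\circ\mathcal{Q}(\mat{M})=\mat{R}_1+\gamma^2\mathcal{Q}(\mat{M})$, where $\mat{R}_1$ collects only the three ``at least one lazy'' terms. One then shows $\|\vec{u}\mat{R}_1\|_2\le(1-\gamma^2-(1-\gamma)\gamma\alpha)\|\vec{u}\|_2$ spectrally, and $\|\vec{u}\mathcal{Q}(\mat{M})\|_2\le(1+\sqrt{2\eta})\|\vec{u}\|_2$ via a block decomposition of $\mathcal{Q}(\mat{M})$ into its diagonal and off-diagonal rows and the estimate $\|\mat{A}\|_2\le\sqrt{\|\mat{A}\|_1\|\mat{A}\|_\infty}$ on the cross block $\mat{A}$. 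The point is that $\mathcal{Q}(\mat{M})$ is stochastic and hence a near-contraction, so the $\gamma^2\cdot 1$ it contributes cancels exactly against the $-\gamma^2$ coming from $\mat{R}_1$; your $\mat{F}$, by contrast, is a signed difference of two stochastic matrices and has no reason to be small in operator norm. If you want to salvage your decomposition you would have to replace the triangle inequality by something that exploits cancellation between $\mathcal{L}_\gamma(\mat{M})\otimes\mathcal{L}_\gamma(\mat{M})$ and $\gamma^2\mat{F}$ directly, but at that point you are essentially redoing the paper's argument.
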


\begin{proof}
Since $\mathcal{L}_{\gamma,\gamma}\circ\mathcal{Q}(\mat{M})$ is stochastic, we have
$$
\langle\vec{u}\mathcal{L}_{\gamma,\gamma}\circ\mathcal{Q}(\mat{M}),  \vec{\pi}\otimes \vec{\pi}\rangle=\langle\vec{u},  (\vec{\pi}\otimes \vec{\pi})(\mathcal{L}_{\gamma,\gamma}\circ\mathcal{Q}(\mat{M}))^\intercal\rangle=\langle\vec{u},\vec{\pi}\otimes \vec{\pi}\rangle=0.
$$
To prove the second claim,
we write $\mathcal{L}_{\gamma,\gamma}\circ\mathcal{Q}(\mat{M})=\mat{R}_1+\mat{R}_2$
where
$$
\mat{R}_1=(1-\gamma)^2(\mat{I}\otimes\mat{I})+(1-\gamma)\gamma(\mat{I}\otimes\mat{M})
+\gamma(1-\gamma)(\mat{M}\otimes\mat{I})
$$
and $\mat{R}_2=\gamma^2\mathcal{Q}(\mat{M})$. Then we bound $\|\vec{u}\mat{R}_1\|_2$ and $\|\vec{u}\mat{R}_2\|_2$ individually.

Observe that $\mat{R}_1=(1-\gamma^2)\mathcal{L}_{\gamma_0}(\mat{R}_0)$ where $\mat{R}_0$ is the stochastic matrix $(\mat{I}\otimes\mat{M}+\mat{M}\otimes\mat{I})/2$ and $\gamma_0=2\gamma/(1+\gamma)$.
Recall that $\mat{M}$ has $n$ normalized orthogonal eigenvectors $\vec{v}_1,\dots, \vec{v}_n$ in $\mathbb{R}^n$ associated with $n$ real eigenvalues $1=\lambda_1>1-\alpha\geq \lambda_2\geq\dots\geq \lambda_n\geq -1$ respectively, and $\vec{v}_1$ is parallel to $\vec{\pi}$.
Then $\mat{R}_0$ has $n^2$ normalized orthogonal eigenvectors $\vec{v}_{i}\otimes\vec{v}_{j}$ associated with eigenvalues $(\lambda_j+\lambda_i)/2$, $i,j=1,\dots,n$. And $\mathcal{L}_{\gamma_0}(\mat{R}_0)$ has the same set of eigenvectors, with the $(i,j)$th eigenvalue replaced by $(1-\gamma_0)+\gamma_0(\lambda_j+\lambda_i)/2$. These eigenvalues are all non-negative, since $(\lambda_j+\lambda_i)/2\geq -1$ and $\gamma_0=2\gamma/(1+\gamma)\leq 1/2$ (from the condition $\gamma\leq 1/3$). So the absolute spectral gap of $\mathcal{L}_{\gamma_0}(\mat{R}_0)$ is
\begin{align*}
\lefteqn{1-\max_{(i,j)\neq (1,1)} \left((1-\gamma_0)+\gamma_0(\lambda_j+\lambda_i)/2\right)}\\
& = 1-\max_{(i,j)\neq (1,1)} \left( 1+
 \left( \frac{\lambda_i+\lambda_j}{2}-1 \right)\gamma_0\right) \\
&= \min_{ (i,j)\neq (1,1)} \left(\frac{1-\lambda_i+1-\lambda_j}{2} \right)\gamma_0 \geq \gamma_0\alpha/2.
\end{align*}
As $\vec{u}$ is parallel to $\vec{\pi}\otimes\vec{\pi}$, or equivalently $\vec{v}_1\otimes\vec{v}_1$, we have
$$
\|\vec{u}\mat{R}_1\|_2 =
 (1-\gamma^2)\left\|\vec{u}\mathcal{L}_{\gamma_0}(\mat{R}_0)\right\|_2
 \leq (1-\gamma^2)(1-\gamma_0\alpha/2)\|\vec{u}\|_2
 =(1-\gamma^2)(1-\gamma\alpha/(1+\gamma))\|\vec{u}\|_2.
$$

Then we bound $\|\vec{u}\mat{R}_2\|_2=\gamma^2\|\vec{u}\mathcal{Q}(\mat{M})\|_2$.
By permutating the rows (resp. columns) of $\mathcal{Q}(\mat{M})$, we assume its first $n$ rows (resp. $n$ columns) are indexed by the diagonal elements $\{(u,u):u\in V[G]\}$. By definition, we have
$$
\mathcal{Q}(\mat{M})=\begin{pmatrix}
\mat{M} & \mat{0}\\
\mat{A} & \mat{B}
\end{pmatrix}
$$
where $\begin{pmatrix}\mat{A} & \mat{B}\end{pmatrix}$ are the last $n^2-n$ rows of $\mat{M}\otimes \mat{M}$ (we permutate the rows and columns of $\mat{M}\otimes \mat{M}$ in the same way as we did for $\mathcal{Q}(\mat{M})$.
Write $\vec{u}=\begin{pmatrix}\vec{u}_1 & \vec{u}_2\end{pmatrix}$ where $\vec{u}_1\in\mathbb{R}^n$ and $\vec{u}_2\in\mathbb{R}^{n^2-n}$, consisting of entries indexed by $(u,w)$, $u=w$ and $u\neq w$ respectively.
Then
\begin{equation}\label{eqsum}
\begin{aligned}
\|\vec{u}\mathcal{Q}(\mat{M})\|^2_2
&=\left\|\begin{pmatrix}\vec{u}_1\mat{M}+\vec{u}_2\mat{A} &
\vec{u}_2\mat{B}
\end{pmatrix}\right\|^2_2\\
&=\|\vec{u}_1\mat{M}\|_2^2+\|\vec{u}_2\mat{A}\|_2^2+
\|\vec{u}_2\mat{B}\|_2^2
+2\langle\vec{u}_1\mat{M}, \vec{u}_2\mat{A}\rangle\\
&=\|\vec{u}_1\mat{M}\|_2^2+\|\begin{pmatrix}\vec{0} & \vec{u}_2\end{pmatrix}(\mat{M}\otimes\mat{M})\|_2^2
+2\langle\vec{u}_1\mat{M}, \vec{u}_2\mat{A}\rangle\\
&\leq \|\vec{u}_1\|_2^2+\|\vec{u}_2\|_2^2+2\langle\vec{u}_1\mat{M}, \vec{u}_2\mat{A}\rangle\\
&=\|\vec{u}\|_2^2+2\langle\vec{u}_1\mat{M}, \vec{u}_2\mat{A}\rangle\\
&\leq \|\vec{u}\|_2^2+2\|\vec{u}_1\mat{M}\|_2\|\vec{u}_2\mat{A}\|_2\\
&\leq \|\vec{u}\|_2^2+2\|\vec{u}\|_2^2\|\mat{A}\|_2\\
&\leq \|\vec{u}\|_2^2\left(1+2\sqrt{\|\mat{A}\|_1\|\mat{A}\|_\infty}\right)
\end{aligned}
\end{equation}
The third equality uses the fact that
$\begin{pmatrix}\vec{0} & \vec{u}_2\end{pmatrix}(\mat{M}\otimes\mat{M})=\begin{pmatrix}\vec{u}_2\mat{A} & \vec{u}_2\mat{B}\end{pmatrix}$. The first inequality uses the fact that $\|\mat{M}\|_2, \|\mat{M}\otimes\mat{M}\|_2\leq 1$. The second inequality is an instance of the Cauchy-Schwarz inequality. The third one uses the facts that $\|\mat{M}\|_2\leq 1$ and $\|\vec{u}_1\|_2,\|\vec{u}_2\|_2\leq \|\vec{u}\|_2$. And the last one uses the inequality
$\|\mat{A}\|_2\leq \sqrt{\|\mat{A}\|_1\|\mat{A}\|_\infty}$.

We have $\|\mat{A}\|_\infty\leq\|\mat{M}\otimes \mat{M}\|_\infty=1$.
To bound $\|\mat{A}\|_1$, observe that $\|\mat{A}\|_1$ is by definition the maximum of
the $\ell_1$-norm of rows of $\mat{A}$.
Then
\begin{align*}
\|\mat{A}\|_1&=\max_{\substack{u,w\in V[G]\\u\neq w}}\sum_{v\in V[G]} \mat{M}_{uv}\mat{M}_{wv}\\
&\leq \max_{\substack{u,w\in V[G]\\u\neq w}}\left(\eta\mat{M}_{ww}+\eta\sum_{v\in V[G]\setminus\{w\}}\mat{M}_{uv}\right)\leq 2\eta.
\end{align*}
Combining it with \eqref{eqsum}, we obtain
$$
\|\vec{u}\mat{R}_2\|^2_2=
\gamma^4\|\vec{u}\mathcal{Q}(\mat{M})\|^2_2\leq \gamma^4\left(1+2\sqrt{2\eta}\right)\|\vec{u}\|^2_2\leq
\gamma^4\left(1+\sqrt{2\eta}\right)^2\|\vec{u}\|^2_2
.
$$
Therefore
\begin{align*}
\|\vec{u}\mathcal{L}_{\gamma,\gamma}\circ\mathcal{Q}(\mat{M})\|_2
&\leq \|\vec{u}\mat{R}_1\|_2+\|\vec{u}\mat{R}_2\|_2\\
&\leq
(1-\gamma^2)(1-\gamma\alpha/(1+\gamma))\|\vec{u}\|_2
+
\gamma^2\left(1+\sqrt{2\eta}\right)\|\vec{u}\|_2\\
&=\left(
1-(1-\gamma)\gamma\alpha+\gamma^2\sqrt{2\eta}
\right)\|\vec{u}\|_2.
\end{align*}

\end{proof}

\begin{proof}[Proof of \lemref{lemmixing}]
Note that we are bounding the $\ell_2$-norm of $\vec{u}\left(\mathcal{L}_{\gamma,\gamma}\circ\mathcal{Q}(\mat{M})\right)^k-\vec{\pi}\otimes\vec{\pi}=\left(\vec{u}\left(\mathcal{L}_{\gamma,\gamma}\circ\mathcal{Q}(\mat{M})\right)^k\right)^\perp$.
The proof is based on the induction on $k$. When $k=0$, we have
$$
\left\|\left(\vec{u}\left(\mathcal{L}_{\gamma,\gamma}\circ\mathcal{Q}(\mat{M})\right)^k\right)^\perp\right\|_2
\leq \left\|\vec{u}\left(\mathcal{L}_{\gamma,\gamma}\circ\mathcal{Q}(\mat{M})\right)^k\right\|_2\leq 1,
$$
and hence the claim holds.
For $k>0$, assume the claim holds for $k'<k$.
Let $\vec{v}=\vec{u}\left(\mathcal{L}_{\gamma,\gamma}\circ\mathcal{Q}(\mat{M})\right)^{k-1}$.
We have
\begin{align*}
\left(\vec{u}\left(\mathcal{L}_{\gamma,\gamma}\circ\mathcal{Q}(\mat{M})\right)^k\right)^\perp
&=\left(\vec{v}\mathcal{L}_{\gamma,\gamma}\circ\mathcal{Q}(\mat{M})\right)^\perp\\
&=\left((\vec{\pi}\otimes\vec{\pi})\mathcal{L}_{\gamma,\gamma}\circ\mathcal{Q}(\mat{M})\right)^\perp
+\left(\vec{v}^\perp\mathcal{L}_{\gamma,\gamma}\circ\mathcal{Q}(\mat{M})\right)^\perp.
\end{align*}

By \lemref{parallel},
we have
$$
\|\left((\vec{\pi}\otimes\vec{\pi})\mathcal{L}_{\gamma,\gamma}\circ\mathcal{Q}(\mat{M})\right)^\perp\|_2\leq \sqrt{2}\gamma^2n^{-3/2}.
$$
And by \lemref{perp}, we have $\left(\vec{v}^\perp\mathcal{L}_{\gamma,\gamma}\circ\mathcal{Q}(\mat{M})\right)^\perp
=\vec{v}^\perp\mathcal{L}_{\gamma,\gamma}\circ\mathcal{Q}(\mat{M})$ whose $\ell_2$-norm is at most
$$
\left(
1-(1-\gamma)\gamma\alpha+\gamma^2\sqrt{2\eta}
\right)\|\vec{v}\|_2
\leq
(1-\gamma\alpha/2)\|\vec{v}\|_2
$$
where we use the condition $\gamma\leq \left\{1/3,\alpha \eta^{-1/2}/9\right\}$.
This is bounded by
$$
(1-\gamma\alpha/2)\left((1-\gamma\alpha/2)^{k-1} + 2\sqrt{2}\gamma\alpha^{-1} n^{-3/2}\right)
=(1-\gamma\alpha/2)^k + 2\sqrt{2}\gamma\alpha^{-1} n^{-3/2}(1-\gamma\alpha/2)
$$
by the induction hypothesis.
Then
\begin{align*}
\left\|\left(\vec{u}\left(\mathcal{L}_{\gamma,\gamma}\circ\mathcal{Q}(\mat{M})\right)^k\right)^\perp\right\|_2
&\leq
\left\|\left((\vec{\pi}\otimes\vec{\pi})\mathcal{L}_{\gamma,\gamma}\circ\mathcal{Q}(\mat{M})\right)^\perp\right\|_2
+\left\|\left(\vec{v}^\perp\mathcal{L}_{\gamma,\gamma}\circ\mathcal{Q}(\mat{M})\right)^\perp\right\|_2\\
&\leq
\sqrt{2}\gamma^2n^{-3/2}
+(1-\gamma\alpha/2)^k + 2\sqrt{2}\gamma\alpha^{-1} n^{-3/2}(1-\gamma\alpha/2)\\
&=(1-\gamma\alpha/2)^k + 2\sqrt{2}\gamma\alpha^{-1} n^{-3/2}
\end{align*}
as desired.
\end{proof}

As a side product, we show that the chain $\mathcal{L}_{\gamma,\gamma}\circ\mathcal{Q}(\mat{M})$ behaves similarly as $\mathcal{L}_\gamma(\mat{M})\otimes \mathcal{L}_\gamma(\mat{M})$ in terms of the stationary distribution and the mixing time.

\begin{cor}\label{cor_approx}
Let $\mat{M}$, $\gamma$ and $\alpha$ be as in Lemma \ref{lem:lemmixing}.
Let $\vec{\pi}'$ be the stationary distribution\footnote{The lazyness and $\alpha>0$
guarantees that $\mathcal{L}_{\gamma,\gamma}\circ\mathcal{Q}(\mat{M})$ is ergodic and has a unique stationary distribution.} of $\mathcal{L}_{\gamma,\gamma}\circ\mathcal{Q}(\mat{M})$. Then
$$
\left\|\vec{\pi}'-\vec{\pi}\otimes \vec{\pi}\right\|_2
\leq (1-\gamma\alpha/2)^k + 2\sqrt{2}\gamma\alpha^{-1} n^{-3/2}.
$$
Define the $\ell_1$-mixing time $\bar{\tau}(\epsilon):=\max_{\vec{u}}\min\{k: \|\vec{u}\left(\mathcal{L}_{\gamma,\gamma}\circ\mathcal{Q}(\mat{M})\right)^k-\vec{\pi}\otimes\vec{\pi}\|_1\leq\epsilon\}$ where $\vec{u}$ ranges over all distributions over $V[G]\times V[G]$. Assuming $\gamma\alpha^{-1}=O(n^{1/2-c})$ for some constant $c>0$, we have
$\bar{\tau}(\epsilon) = O(\gamma^{-1}\alpha^{-1}(\log n+\log\epsilon^{-1}))$.
\end{cor}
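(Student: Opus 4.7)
The plan is to deduce both statements directly from \lemref{lemmixing}, which bounds the $\ell_2$-distance of $\vec{u}(\mathcal{L}_{\gamma,\gamma}\circ\mathcal{Q}(\mat{M}))^k$ from $\vec{\pi}\otimes\vec{\pi}$ for an arbitrary initial distribution $\vec{u}$. The corollary is essentially just two short consequences of that bound: stationarity for the first statement, and a norm comparison for the second.

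For the first claim, I would apply \lemref{lemmixing} with $\vec{u}=\vec{\pi}'$. Because $\vec{\pi}'$ is a stationary distribution of $\mathcal{L}_{\gamma,\gamma}\circ\mathcal{Q}(\mat{M})$, we have $\vec{\pi}'(\mathcal{L}_{\gamma,\gamma}\circ\mathcal{Q}(\mat{M}))^k=\vec{\pi}'$ for every $k\in\mathbb{N}$, so the lemma yields the stated inequality immediately; indeed, since the bound holds for every $k$, one may let $k\to\infty$ to kill the first summand and obtain the sharper estimate $\|\vec{\pi}'-\vec{\pi}\otimes\vec{\pi}\|_2\le 2\sqrt{2}\gamma\alpha^{-1}n^{-3/2}$. (Existence and uniqueness of $\vec{\pi}'$ is guaranteed by ergodicity of $\mathcal{L}_{\gamma,\gamma}\circ\mathcal{Q}(\mat{M})$, which follows from the lazy self-loop and the fact that $\alpha>0$.)

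For the second claim, I would first reduce the $\ell_1$-mixing estimate to an $\ell_2$-mixing estimate via the Cauchy--Schwarz inequality $\|\vec{v}\|_1\le n\|\vec{v}\|_2$, valid for any vector $\vec{v}$ indexed by $V[G]\times V[G]$. It therefore suffices to force
$$
(1-\gamma\alpha/2)^k + 2\sqrt{2}\gamma\alpha^{-1}n^{-3/2}\le \epsilon/n.
$$
I would handle the two summands separately: (i) take $k=\Theta\bigl(\gamma^{-1}\alpha^{-1}(\log n+\log\epsilon^{-1})\bigr)$, so that $(1-\gamma\alpha/2)^k\le e^{-k\gamma\alpha/2}\le \epsilon/(2n)$; (ii) use the hypothesis $\gamma\alpha^{-1}=O(n^{1/2-c})$ to conclude $2\sqrt{2}\gamma\alpha^{-1}n^{-3/2}=O(n^{-1-c})\le \epsilon/(2n)$ in the intended regime of $\epsilon$.

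The main technical caveat---and the only step that is not purely mechanical---is verifying that the second summand is indeed dominated by $\epsilon/n$. This is exactly where the hypothesis $\gamma\alpha^{-1}=O(n^{1/2-c})$ earns its keep: it bakes in just enough slack so that the ``error term'' arising from the deviation of $\mathcal{Q}(\mat{M})$ from $\mat{M}\otimes\mat{M}$ (quantified in the proof of \lemref{lemmixing} via \lemref{parallel}) is negligible at the target precision $\epsilon=n^{-O(1)}$. Once this is in place, the claim follows with no further work.
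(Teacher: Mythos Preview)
Your argument for the first claim coincides with the paper's: apply \lemref{lemmixing} with $\vec{u}=\vec{\pi}'$ and use stationarity. For the second claim, both you and the paper pass from $\ell_2$ to $\ell_1$ via $\|\vec{v}\|_1\le n\|\vec{v}\|_2$ and invoke \lemref{lemmixing}. The one divergence is in how general $\epsilon$ is handled. You argue directly, which (as you note) requires $\epsilon\ge\Omega(n^{-c})$ so that the residual term $n\cdot 2\sqrt{2}\gamma\alpha^{-1}n^{-3/2}=O(n^{-c})$ sits below $\epsilon/2$. The paper instead first establishes $\bar{\tau}(n^{-c})=O(\gamma^{-1}\alpha^{-1}\log n)$ by exactly your computation, and then appeals to the submultiplicativity $\bar{\tau}(\epsilon)\le\bar{\tau}(\delta)\lceil\log_\delta\epsilon\rceil$ (with $\delta=n^{-c}$) to reach smaller $\epsilon$. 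Your route is more self-contained and suffices in the regime where the statement is actually meaningful---since $\bar{\tau}$ is defined relative to $\vec{\pi}\otimes\vec{\pi}$ rather than the true stationary $\vec{\pi}'$, for $\epsilon$ below $\|\vec{\pi}'-\vec{\pi}\otimes\vec{\pi}\|_1$ the quantity $\bar{\tau}(\epsilon)$ need not even be finite. The paper's route formally extends the bound to all $\epsilon>0$ at the cost of invoking an additional mixing-time fact.
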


\begin{proof}
The first claim follows directly from Lemma \ref{lem:lemmixing}.
We also have
$$
\left\|\vec{u}\left(\mathcal{L}_{\gamma,\gamma}\circ\mathcal{Q}(\mat{M})\right)^k-\vec{\pi}\otimes \vec{\pi}\right\|_1
\leq n\left\|\vec{u}\left(\mathcal{L}_{\gamma,\gamma}\circ\mathcal{Q}(\mat{M})\right)^k-\vec{\pi}\otimes \vec{\pi}\right\|_2
\leq n^{-c}$$
for sufficiently large $k=O(\gamma^{-1}\alpha^{-1}\log n)$, again by Lemma \ref{lem:lemmixing}. So $\bar{\tau}(n^{-c})=O(\gamma^{-1}\alpha^{-1}\log n)$. The second claim then follows from the well-known fact that $\bar{\tau}(\epsilon)\leq\bar{\tau}(\delta)\lceil\log_{\delta}\epsilon\rceil$ for $\epsilon,\delta>0$.
\end{proof}

We know that the stationary distribution of $\mathcal{Q}(\mat{M})$
is the uniform distribution over the set of diagonal entries $\{(u,u): u\in V[G]\}$.
So is the stationary distribution of the lazy chain $\mathcal{L}_{\gamma}\circ\mathcal{Q}(\mat{M})$
for any $\gamma\in (0,1]$.
Interestingly, Corollary \ref{cor_approx} tells us that the ``bi-lazy'' chain $\mathcal{L}_{\gamma,\gamma}\circ\mathcal{Q}(\mat{M})$ behaves very differently, as its stationary distribution is close to $\vec{\pi}\otimes \vec{\pi}$ instead.

\subsubsection{Proof of \thmref{FullRandomness}}

We are now ready to derive a bound on the runtime of Protocol~\ref{pro_prototype}.

\mylemma{thmrandom}{\thmrandom}

\begin{proof}
Let $s\in V[G]$ be the initial node and fix a target node $w\in V[G]$.
Let $c>0$ be any constant.
Choose $\gamma=\min\left\{1/3, \Delta^{0.5-c}\alpha/9\right\}
\leq n^{0.5-c}\alpha/9$.
Choose $k=(\gamma\gamma'\alpha)^{-1}\beta^2\log n + 1$ and let $T=4k$.
So $T=O(C\log n)$.
Define the distributions
$\vec{u}=\vec{e}_s\mat{M}_1^k$,
$\vec{v}=\vec{e}_{(s,s)}\mat{M}_2^k$,
$\vec{u}'=\vec{e}_w\mat{M}_3^k$,
and
$\vec{v}'=\vec{e}_{(w,w)}\mat{M}_4^k$, where $\mat{M}_1,\dots,\mat{M}_4$ are as in  \lemref{lemmarkov}.
Let $\vec{\pi}$ be the uniform distribution over $V[G]$.
As before, let $\vec{u}^\perp=\vec{u}-\vec{\pi}$ and
$\vec{v}^\perp=\vec{v}-\vec{\pi}\otimes\vec{\pi}$, and similarly for $\vec{u}'$ and $\vec{v}'$.
By \lemref{lembound} and \lemref{lemmarkov}, the probability that $w$ gets the rumor in $k$ rounds is lower bounded by
\begin{equation}\label{eqprob}
\begin{aligned}
&\frac{\sum_{u,v\in V[G]}
\langle \vec{u}, \vec{e}_u\rangle
\langle \vec{u}, \vec{e}_v\rangle
\langle \vec{u}', \vec{e}_u\rangle
\langle \vec{u}', \vec{e}_v\rangle}
{\sum_{u,v\in V[G]}
\left\langle \vec{v}, \vec{e}_{(u,v)}\right\rangle
\left\langle \vec{v}', \vec{e}_{(u,v)}\right\rangle}
=
\frac{\langle\vec{u},\vec{u}'\rangle^2}{\langle\vec{v},\vec{v}'\rangle}\\
&=
\frac{\left(\langle\vec{\pi},\vec{\pi}\rangle
+\langle\vec{u}^\perp,\vec{\pi}\rangle
+\langle\vec{\pi},\vec{u}'^\perp\rangle
+\langle\vec{u}^\perp,\vec{u}'^\perp\rangle
\right)^2}
{\langle\vec{\pi}\otimes\vec{\pi},\vec{\pi}\otimes\vec{\pi}\rangle
+\langle\vec{v}^\perp,\vec{\pi}\otimes\vec{\pi}\rangle
+\langle\vec{\pi}\otimes\vec{\pi},\vec{v}'^\perp\rangle
+\langle\vec{v}^\perp,\vec{v}'^\perp\rangle
}\\
&=
\frac{\left(
1/n+\langle\vec{u}^\perp,\vec{u}'^\perp\rangle
\right)^2}
{1/n^2+\langle\vec{v}^\perp,\vec{v}'^\perp\rangle}.
\end{aligned}
\end{equation}
Note that $\mat{M}_1=\mathcal{L}_{\gamma}\left(\mat{M}_{\reg(G)}\right)$ and
$\mat{M}_3=\left(\mathcal{L}_{\gamma}\circ\mathcal{L}_{\gamma'}\left(\mat{M}_{\reg(G)}
\right)\right)$ have absolute spectral gaps $\gamma\alpha\beta^{-2}$ and $\gamma\gamma'\alpha\beta^{-2}$ respectively. This follows from \lemref{lemregularization} and the definition of lazy Markov chains (Also, the lazyness guarantees that the eigenvalues are all non-negative, and hence the bounds are about absolute spectral gaps, not just spectral gaps).
By Lemma \ref{lemmixing0} and the fact that
$k\geq  (\gamma\gamma'\alpha)^{-1}\beta^2\log n + 1
\geq \log_{1-\gamma\gamma'\alpha\beta^{-2}}(1/n) + 1$, we have
$|\langle\vec{u}^\perp,\vec{u}'^\perp\rangle|\leq \left\|\vec{u}^\perp\right\|_2
\left\|\vec{u}'^\perp\right\|_2\leq 1/n^2$.
By \lemref{lemmixing} (with $\eta=1/\Delta$), we have
$$
|\langle\vec{v}^\perp,\vec{v}'^\perp\rangle|\leq \left\|\vec{v}^\perp\right\|_2
\left\|\vec{v}'^\perp\right\|_2\leq \left((1-\gamma\alpha/2)^k + 2\sqrt{2}\gamma\alpha^{-1} n^{-3/2}\right)^2
\leq 1/n^{2+2c}.
$$
So \eqref{eqprob} is lower bounded by $\frac{(1/n-1/n^2)^2}{1/n^2-1/n^{2+2c}}=1-O(n^{-2c})$.
\end{proof}

\thmref{FullRandomness} is obtained by repeating the protocol $O(1)$ times and apply the union bound.

\subsection{Analysis of Protocol~\ref{pro0}}

Let $\mathcal{P}$ be the distribution over the set of functions $f:[T]\times V[G]\to [\Delta]$ associated with Protocol~\ref{pro0}.  The values $f(i,u)$ in the $i$th round are generated using the \textsf{PRG} $\gen$, and the seeds of $\gen$ in different rounds are generated by the \textsf{PRG} $\gen'$.
In this section we show that Protocol \ref{pro_prototype} with distribution $\mathcal{D}=\mathcal{P}$ has almost the same performance as the one with $\mathcal{D}=\mathcal{U}$.
As an intermediate step, we consider the distribution $\mathcal{P}'$ defined as follows:
the values of $f$ in each round are determined by the \textsf{PRG} $\gen$ in the same way as for $\mathcal{P}$ but the seeds of $\gen$ in different rounds are now independent and random, instead of being generated by $\gen'$.
With $\mathcal{D}=\mathcal{P}'$, Definition \ref{defi_walk} are still valid, and \lemref{lembound} still holds by exactly the same proof.
Moreover,  \lemref{lemmarkov} ``almost holds'' in the following sense.

\begin{lem}\label{lem_markov2}
Let $r$, $S$ and $S'$ be independent with distributions $\tilde{\mathcal{D}}$ (induced by $\mathcal{D}=\mathcal{P}'$),  $\mathcal{D}_{\gamma,k}$ and $\mathcal{D}_{\gamma,k}$ respectively.
Then there exist stochastic matrices $\mat{M}'_1,\mat{M}'_3\in\mathbb{R}^{n\times n}$,
$\mat{M}'_2,\mat{M}'_4\in\mathbb{R}^{n\times n}\otimes\mathbb{R}^{n\times n}$
such that $\|\mat{M}'_i-\mat{M}_i\|_1\leq 12\gamma\Delta^2(\epsilon + 2\Delta^3/m)$ for $1\leq i\leq 4$, where $\mat{M}_i$ are as in \lemref{lemmarkov} and $\epsilon,m$ are as in Protocol \ref{pro0}. Moreover,
for any $u,v,w,x\in V[G]$, the following statements hold:
\begin{enumerate}
\item $\EXX{r,S}{X^S_{u,v}}
=\left\langle\vec{e}_u\mat{M}_1'^k,\vec{e}_v\right\rangle$,
\item $\EXX{r,S,S'}{X^S_{u,v}X^{S'}_{w,x}}=\left\langle\vec{e}_{(u,w)}\mat{M}_2'^k, \vec{e}_{(v,x)}\right\rangle$,
\item $\EXX{r,S}{Y^S_{u,v}}
=\left\langle\vec{e}_u\mat{M}_3'^k,\vec{e}_v\right\rangle$,
\item $\EXX{r,S,S'}{Y^S_{u,v}Y^{S'}_{w,x}}=\left\langle\vec{e}_{(u,w)}\mat{M}_4'^k, \vec{e}_{(v,x)}\right\rangle$.
\end{enumerate}
\end{lem}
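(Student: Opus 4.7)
The plan is to exploit that under $\mathcal{D}=\mathcal{P}'$ the per-round seeds of $\gen$ are independent, so each of the four processes underlying items (1)--(4)---the single/paired forward walks and the single/paired reversed walks---is a time-homogeneous Markov chain. Let $\mat{M}'_i$ denote the corresponding one-step transition matrix, obtained from the definition of $\mat{M}_i$ in \lemref{lemmarkov} by replacing the uniform draw of $f(i,\cdot)\in[\Delta]^{V[G]}$ with the draw $u\mapsto \gen_u(x)\bmod\Delta$ for a uniform seed $x\in\{0,1\}^\ell$. Since the auxiliary randomness $r_{i,u}$ and the pattern $S$ are drawn from exactly the same distributions as in \lemref{lemmarkov}, and since the conditioning arguments used there are local to a single round, the four expectation identities follow by the same inspection as in the proof of \lemref{lemmarkov}; the Markov property provided by the round-wise independence of seeds is what allows the $k$-step expectations to be written as $\mat{M}'^{\,k}_i$.

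The remaining task is to bound $\|\mat{M}'_i-\mat{M}_i\|_1$ row by row, where each row is a distribution over next states. I would use the fact that $\|\mat{M}'_i-\mat{M}_i\|_1$ is at most twice the worst-case total variation distance between the corresponding one-step distributions, and control the latter via \lemref{rectanglemodulo}. Concretely: for $\mat{M}'_1$ the non-lazy branch (weight $\gamma$) depends on a single coordinate of $f$, so \lemref{rectanglemodulo} with $d'=1$, $m_0=\Delta$ gives TV $\le \epsilon+\Delta/m$ and hence $\|\mat{M}'_1-\mat{M}_1\|_1\le 2\gamma(\epsilon+\Delta/m)$; for the paired forward walk $\mat{M}'_2$, rows indexed by $(u,w)$ with $u\ne w$ depend on two coordinates and rows indexed by $(u,u)$ only on one, so the bound becomes $\|\mat{M}'_2-\mat{M}_2\|_1\le 2\gamma^2(\epsilon+2\Delta/m)$.

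For the reversed chains $\mat{M}'_3,\mat{M}'_4$ the same idea applies but involves more coordinates per round: one step from $v$ (resp.\ a coupled step from $(v,v')$) in round $i$ is determined by $\tilde f(T-1-2i,\cdot)$ and $\tilde f(T-2-2i,\cdot)$ on $N(v)$ (resp.\ on $N(v)\cup N(v')$), together with $r_{i,\cdot}$. Under $\mathcal{P}'$ the two sub-rounds use independent seeds, so I would apply \lemref{rectanglemodulo} separately to each sub-round with $d'\le 2\Delta$ and $m_i=\Delta$, obtaining a per-sub-round TV error of at most $\epsilon+2\Delta^2/m$; the auxiliary $r_{i,\cdot}$ contribute no PRG error since they are drawn truthfully. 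Summing gives $\|\mat{M}'_3-\mat{M}_3\|_1=O(\gamma(\epsilon+\Delta^2/m))$ and $\|\mat{M}'_4-\mat{M}_4\|_1=O(\gamma^2(\epsilon+\Delta^2/m))$. All four estimates are comfortably absorbed into the uniform bound $12\gamma\Delta^2(\epsilon+2\Delta^3/m)$ claimed in the lemma.

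The main obstacle is the reversed-walk case: the event $\{N_{i,v}^\vee=\{u\}\}$ expresses a collision pattern among the push-targets of all neighbors of $v$, and I need to argue that this event, for each outcome $u$, is a disjoint union of combinatorial rectangles on the $\le 2\Delta$ relevant coordinates so that \lemref{rectanglemodulo} is directly applicable. The tie-breaking order $\preceq$ used to define $N_{i,u}$, and the pairing of two sub-rounds into one reversed-walk step, are both defined identically under $\mathcal{P}'$ and $\mathcal{U}$, so they do not introduce any additional PRG error; with this in mind the combinatorial counting mirrors the proof of \lemref{lemmarkov} item~3, and the row-wise $\ell_1$ bound goes through.
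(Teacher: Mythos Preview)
Your high-level plan is right—define $\mat{M}'_i$ as the one-step transition matrices under $\mathcal{P}'$ and use per-round independence to get items (1)--(4)—but two of the concrete steps do not go through.

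First, your total-variation shortcut is valid only in dimension one. Lemma~\ref{lem:rectanglemodulo} bounds the PRG error on \emph{combinatorial rectangles}; for a single coordinate every event is a rectangle, so indeed $\|\mat{M}'_1-\mat{M}_1\|_1\le 2\gamma(\epsilon+\Delta/m)$. But for $\mat{M}'_2$ (two coordinates) and for $\mat{M}'_3,\mat{M}'_4$ (up to $2\Delta$ coordinates), the TV-optimal event is in general \emph{not} a rectangle, so you cannot conclude row-$\ell_1$ distance $\le 2(\epsilon+\sum m_i/m)$. The paper instead bounds each \emph{entry} $(\mat{M}'_i-\mat{M}_i)_{(u,w)(v,x)}$ by a single rectangle query and then sums over the at most $(\Delta+1)^2$ reachable states; this is exactly where the extra $\Delta$ and $\Delta^2$ factors in the lemma's bound come from, and they are missing from your estimates. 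Relatedly, your prefactor $\gamma^2$ for $\mat{M}'_2,\mat{M}'_4$ is too small: the lazy/non-lazy mixed branches already query one PRG coordinate and contribute error, so the correct prefactor is $1-(1-\gamma)^2\le 2\gamma$, as the paper uses.

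Second—and this is the design point you missed—under $\mathcal{P}'$ the two sub-rounds $T{-}1{-}2i$ and $T{-}2{-}2i$ do \emph{not} use independent seeds. In Protocol~\ref{pro0} both values $(r_0,r_1)=\gen_u(y_j)\bmod\Delta^2$ come from the \emph{same} seed $y_j$; $\mathcal{P}'$ replaces $y_j=\gen'_j(y)$ by an independent uniform string but keeps this pairing. This is not incidental: it makes $(f(i_0,w),f(i_1,w))$ live in a single alphabet $[\Delta^2]$ at each neighbor $w$, so $\{N^\vee_{i,u}=\{v\}\}$ is a \emph{single} combinatorial rectangle $\prod_{w\in N(u)} S_w\subseteq\prod_{w\in N(u)}[\Delta^2]$, not a disjoint union. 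Under your assumption of independent sub-round seeds, the per-neighbor event $u\in N_{i,w}$ is a function of $(f(i_0,w),f(i_1,w))$ that is \emph{not} a product subset of $[\Delta]\times[\Delta]$ (the $\preceq$-ordering in the definition of $N_{i,w}$ prevents this), so ``applying Lemma~\ref{lem:rectanglemodulo} separately to each sub-round'' does not give what you claim. With the correct one-seed view and alphabet $[\Delta^2]$ per neighbor, the rectangle error is $\epsilon+\deg(u)\cdot\Delta^2/m\le\epsilon+\Delta^3/m$ (and $\epsilon+2\Delta^3/m$ for the paired walk), which is what produces the $\Delta^3/m$ term in the statement.
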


\begin{proof}
Let $\mat{M}'_1$ (resp. $\mat{M}'_3$) be the transition matrix of a forward (reversed) random walk with random pattern $S\sim\mathcal{D}_{\gamma,k}$.
Let $\mat{M}'_2$ (resp. $\mat{M}'_4$) be the joint transition matrix of two forward (reversed) random walks with random patterns $S,S'\sim\mathcal{D}_{\gamma,k}$. This is exactly the same setting as in  \lemref{lemmarkov}, except that now $\mathcal{D}=\mathcal{P}'$. Since the randomness $f(i,u)$ and $r_{i,u}$ in different rounds are independent, Items 1 -- 4 clearly hold. It remains to show that $\|\mat{M}'_i-\mat{M}_i\|_1\leq 12\gamma\Delta^2(\epsilon + 2\Delta^3/m)$ for $1\leq i\leq 4$.

Recall that the $\ell_1$-norm of a matrix equals the maximal sum of absolute values of entries in a row. So we may fix the row index $u$ (or $(u,v)$) maximizing the sum.
Also fix the auxiliary randomness $\{r_{i,u}\}$ and since if we have a bound for all fixed $\{r_{i,u}\}$, the same bound applies when they are random.


Consider the $i$th step of a forward walk with random pattern $S\sim\mathcal{D}_{\gamma,k}$ from node $u$. The walk stays at $u$ if that step is lazy for $\mathcal{D}=\mathcal{P}'$ and also for $\mathcal{D}=\mathcal{U}$. So we may assume the step is non-lazy which occurs with probability $\gamma$.
The event that the walk moves to $v$ is determined solely by $f(i,u)$ and hence characterized by a combinatorial rectangle of dimension one.
By Lemma \ref{lem:rectanglemodulo}, we have
$\left|\left(\mat{M}'_1-\mat{M}_1\right)_{uv}\right|\leq \gamma(\epsilon + \Delta/m)$ (note that the difference is counted only when the step is non-lazy).
Note that the walk always moves to a node in $N(u)\cup\{u\}$. Taking the sum of differences, we have $\left\|\mat{M}'_1-\mat{M}_1\right\|_1\leq \gamma(\Delta+1)(\epsilon + \Delta/m)$.

Now consider the $i$th step of two forward walks from $u$ and $w$ respectively.
We may assume at least one of them has a non-lazy step which occurs with probability $2(1-\gamma)\gamma+\gamma^2\leq 2\gamma$.
The event that the first walk moves to some node $v$ is determined by $f(i,u)$
whereas the event that the second walk moves to some $x$ is determined by $f(i,w)$.
Each is characterized by a combinatorial rectangle in $\prod_{a\in\{u,w\}}[\Delta]$ of dimension one (if $u=w$) or two (if $u\neq w$).
The conjunction of these two events is characterized by the intersection of the two combinatorial rectangles, which is again a combinatorial rectangle in $\prod_{a\in\{u,w\}}[\Delta]$.
By Lemma \ref{lem:rectanglemodulo}, we have
$\left|\left(\mat{M}'_2-\mat{M}_2\right)_{(u,w)(v,x)}\right|\leq 2\gamma(\epsilon + 2\Delta/m)$.
Also the only possible $(v,x)$ are in $(N(u)\cup\{u\})\times(N(w)\cup\{w\})$. Taking the sum of differences, we have $\left\|\mat{M}'_2-\mat{M}_2\right\|_1\leq 2\gamma(\Delta+1)^2(\epsilon + 2\Delta/m)$.

Now consider the $i$th step of a reversed walk with random pattern $S\sim\mathcal{D}_{\gamma,k}$ from a node $u$. Again assume the step is non-lazy which occurs with probability $\gamma$.
Let $i_0=T-2i-1$ and $i_1=T-2i-2$.
The event $u\in N_{i,v}$ for $v\in N(u)$ is determined by whether $(f(i_0,v),f(i_1,v))\in S_{v,u}$ for some $S_{v,u}\subseteq [\Delta^2]$. Then the event whether $N^\vee_{i,u}=\{v\}$ for $v\in N(u)$ is characterized by the combinatorial rectangle $\prod_{w\in N(u)} S_w\subseteq \prod_{w\in N(u)}[\Delta^2]$ of dimension $\deg(u)\leq\Delta$ where $S_w$ equals $S_{w,u}$ if $w=v$, and equals $[\Delta^2]\setminus S_{w,u}$ if $w\neq v$.
By Lemma \ref{lem:rectanglemodulo}, we have
$\left|\left(\mat{M}'_3-\mat{M}_3\right)_{uv}\right|\leq \gamma(\epsilon + \Delta^3/m)$ for $v\in N(u)$.
When $v=u$, we have $\left|\left(\mat{M}'_3-\mat{M}_3\right)_{uv}\right|\leq
\sum_{w\in N(u)}\left|\left(\mat{M}'_3-\mat{M}_3\right)_{uw}\right|
\leq \gamma\Delta(\epsilon + \Delta^3/m)$ since $\sum_{w\in N(u)\cup\{u\}}\left(\mat{M}'_3-\mat{M}_3\right)_{uw}
=\sum_{w\in N(u)\cup\{u\}}(\mat{M}'_3)_{uw}-\sum_{w\in N(u)\cup\{u\}}(\mat{M}_3)_{uw}
=1-1=0$. Taking the sum of differences, we have $\left\|\mat{M}'_3-\mat{M}_3\right\|_1\leq 2\gamma\Delta(\epsilon + \Delta^3/m)$.

Finally consider the $i$th step of two reversed walks from $u$ and $w$ respectively.
We may assume at least one of them has a non-lazy step which occurs with probability $2(1-\gamma)\gamma+\gamma^2\leq 2\gamma$.
Similar to the case of two forward walks, using the fact that the family of combinatorial rectangles is closed under intersection, we know the event that the two walks move to some nodes $v\in N(u)$ and $x\in N(u)$ respectively is characterized by a combinatorial rectangle in $\prod_{a\in N(u)\cup N(w)} [\Delta^2]$ of dimension at most $2\Delta$.
By Lemma \ref{lem:rectanglemodulo}, we have
$\left|\left(\mat{M}'_4-\mat{M}_4\right)_{(u,w)(v,x)}\right|\leq 2\gamma(\epsilon + 2\Delta^3/m)$ for $v\in N(u)$ and $x\in N(w)$.
When $u\neq v$ and $w=x$, using the fact that $\mat{M}_4$ (resp. $\mat{M}'_4$) is a coupling of two copies of $\mat{M}_3$ (resp. $\mat{M}'_3$),
we have $\sum_{x'\in N(w)}\left(\mat{M}'_4-\mat{M}_4\right)_{(u,w)(v,x')}=\left(\mat{M}'_3-\mat{M}_3\right)_{uv}$ and hence
\begin{equation}\label{eq_comb}
\begin{aligned}
\left|\left(\mat{M}'_4-\mat{M}_4\right)_{(u,w)(v,x)}\right|
&\leq
\left|\left(\mat{M}'_3-\mat{M}_3\right)_{uv}\right|+\sum_{x'\in N(w)\setminus\{w\}}\left|\left(\mat{M}'_4-\mat{M}_4\right)_{(u,w)(v,x')}\right|\\
&\leq \gamma(\epsilon + \Delta^3/m)+2\gamma\Delta(\epsilon + 2\Delta^3/m).
\end{aligned}
\end{equation}
The case that $u=v$ and $w\neq x$ is symmetric.
When $u=v$ and $w=x$, the first inequality of \eqref{eq_comb} still holds, yet the RHS of the second one becomes  $\gamma\Delta(\epsilon + \Delta^3/m) + \Delta(\gamma(\epsilon + \Delta^3/m)+2\gamma\Delta(\epsilon + 2\Delta^3/m))$.
Taking the sum of differences, we have $\left\|\mat{M}'_4-\mat{M}_4\right\|_1\leq 12\gamma\Delta^2(\epsilon + 2\Delta^3/m)$.
\end{proof}

Next we consider the case $\mathcal{D}=\mathcal{P}$. Again Definition \ref{defi_walk} is still valid and \lemref{lembound} still holds by the same proof. Furthermore we show that the expectations are almost the same as in $\mathcal{D}=\mathcal{P}'$ since they can be computed by small-width branching programs:

\begin{lem}\label{lem_bp}
For any $u,w\in V[G]$, the quantities
\begin{equation}\label{firstmoment}
\sum_{v\in V[G]}\left|\EXX{r\sim\tilde{\mathcal{P}}',S}{X^S_{u,v}}
-\EXX{r\sim\tilde{\mathcal{P}},S}{X^S_{u,v}}\right|
\end{equation}
and
\begin{equation}\label{secondmoment}
\sum_{v,x\in V[G]}\left|\EXX{r\sim\tilde{\mathcal{P}}',S,S'}{X^S_{u,v}X^{S'}_{w,x}}
-\EXX{r\sim\tilde{\mathcal{P}},S,S'}{X^S_{u,v}X^{S'}_{w,x}}\right|
\end{equation}
are bounded by $\epsilon'$,
where $\tilde{\mathcal{P}}$ (resp. $\tilde{\mathcal{P}}'$) is the distribution of $r$ induced by $\mathcal{P}$ (resp. $\mathcal{P}'$), $S,S'$ in the subscripts are independent and have distribution $\mathcal{D}_{\gamma,k}$, and $\epsilon'$ is as in Protocol \ref{pro0}.
The same statement holds with $X^S_{u,v}$ and $X^S_{w,x}$ replaced by $Y^S_{u,v}$
and $Y^S_{w,x}$ respectively.
\end{lem}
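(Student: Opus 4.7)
The plan is to interpret, for each fixed pattern $S$ (and auxiliary randomness $\{r_{i,u}\}$), the evolution of the forward or reversed walks as a branching program whose input is precisely the sequence of $\ell$-bit seeds fed into $\gen$, and then to invoke the $\epsilon'$-\textsf{PRG} property of $\gen'$. Since the $r_{i,u}$ are drawn from identical distributions under $\tilde{\mathcal{P}}$ and $\tilde{\mathcal{P}}'$ and are independent of the $\gen$-seeds, I will condition on an arbitrary outcome of $\{r_{i,u}\}$, and will similarly condition on $S$ (and $S'$) and average at the end. In Protocol~\ref{pro0} the push-seeds $\gen'_i(x)$ governing the first $T/2$ rounds come from a string $x$ independent of the string $y$ generating the pull-seeds $\gen'_j(y)$, and the same independence is built into $\mathcal{P}'$, so the forward and reversed parts decouple and can be treated separately.

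For the first-moment forward-walk statement, fix $u$ and $S\in\mathcal{C}_k$ and build a branching program $\mathcal{B}_{u,S}$ of length $T/2$, width $n$, degree $2^\ell$, whose state after layer $i\le k$ is the current position $p_i$ of the walk: the edge labelled $\sigma_i\in\{0,1\}^\ell$ out of $p_i$ stays at $p_i$ if $s_i=\text{lazy}$, and otherwise moves to the $(\gen_{p_i}(\sigma_i)\bmod\Delta)$-th neighbor of $p_i$; layers $i>k$ are the identity. Then $X^S_{u,v}=\mathbf{1}[\mathcal{B}_{u,S}(u,\vec{\sigma})=v]$, and the $\epsilon'$-\textsf{PRG} guarantee for $(T/2,n^2,2^\ell)$-branching programs (padding the state space from $n$ up to $n^2$) yields
\[
\sum_{v}\bigl|\EXX{\vec{\sigma}=\gen'(x)}{X^S_{u,v}}-\EXX{\vec{\sigma}\sim U}{X^S_{u,v}}\bigr|\le \epsilon'.
\]
The second-moment forward-walk statement follows by exactly the same argument applied to the product branching program tracking $(p_i,q_i)$ from $(u,w)$, of length $T/2$, width $n^2$, degree $2^\ell$.

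The reversed-walk case hinges on the observation that one step of a reversed walk at time $i$ uses the randomness of the two consecutive rounds $T-1-2i$ and $T-2-2i$, and, by direct inspection of Protocol~\ref{pro0}, both of those rounds draw their $\gen$-seed from the common $\ell$-bit string $\gen'_i(y)$; hence each reversed step consumes exactly one seed of $\gen'$ (and the same pairing is used in $\mathcal{P}'$). Conditioning on $\{r_{i,v}\}$, the reversed-walk transition from position $p$ under seed $\sigma$ is a deterministic function of $p$ and $\sigma$: it computes the pairs $(f(T-1-2i,v),f(T-2-2i,v))=\gen_v(\sigma)\bmod\Delta^2$ for every $v$, from which the sets $N_{i,v}$, $N_{i,p}^\vee$ and the threshold condition on $r_{i,p}$ are all determined. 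So one BP layer per reversed step suffices, and one or two reversed walks of length $k\le T/4$ are computed by a BP of length $k\le T/2$, width $n$ or $n^2$, degree $2^\ell$; applying the PRG to the $y$-seeds gives the desired $\epsilon'$-bound summed over $v$ or $(v,x)$.

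Each of the four bounds above is uniform in the conditioning $(S,S',\{r_{i,u}\})$, so the lemma follows by the triangle inequality and linearity of expectation: for instance
\[
\sum_v\bigl|\EXX{r\sim\tilde{\mathcal{P}}',S}{X^S_{u,v}}-\EXX{r\sim\tilde{\mathcal{P}},S}{X^S_{u,v}}\bigr|\le\EXX{S,\{r_{i,u}\}}{\sum_v\bigl|\,\cdot\,\bigr|}\le\epsilon',
\]
and the other three cases are identical. The part I expect to be trickiest is the reversed-walk bookkeeping: verifying that the two-round pairing inside Protocol~\ref{pro0} aligns one BP layer with exactly one $\gen'$-seed, because only under this alignment does the $(T/2,n^2,2^\ell)$-\textsf{PRG} control precisely the randomness that the walks consume; once this is nailed down, the simulations themselves are mechanical.
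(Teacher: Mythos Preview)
Your proposal is correct and follows essentially the same approach as the paper: condition on $S$, $S'$ and the auxiliary randomness $\{r_{i,u}\}$, realize the (pair of) forward or reversed walks as a $(T/2,n^2,2^\ell)$-branching program whose input is the sequence of $\gen$-seeds, and invoke the $\epsilon'$-\textsf{PRG} guarantee of $\gen'$. Your explicit observation that in Protocol~\ref{pro0} the two rounds $T-1-2i$ and $T-2-2i$ share the single seed $\gen'_i(y)$, so that one reversed step consumes exactly one BP layer, is precisely the ``time is reversed'' remark in the paper's proof made explicit.
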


\begin{proof}
It suffices to bound the quantities with $S$, $S'$ and the auxiliary randomness $\{r_{i,u}\}$ fixed. Then \eqref{firstmoment} becomes
$
\sum_{v\in V[G]}\left|\EXX{f\sim\mathcal{P}'}{X^S_{u,v}}
-\EXX{f\sim\mathcal{P}}{X^S_{u,v}}\right|
$.
Note that for both cases $f\sim\mathcal{P}$ and $f\sim\mathcal{P}'$ we can view $f$ as a random variable determined by a sequence of seeds $y=(y_0,\dots,y_{k-1})\in\left(\{0,1\}^{\ell}\right)^k$. In the former case $y$ is truly random whereas in the latter case it is generated by the \textsf{PRG} $\gen'$. So we may rewrite \eqref{firstmoment} as
$$
\sum_{v\in V[G]}\left|\EXX{y\in\{0,1\}^{\ell'}}{X^S_{u,v}(\gen'(y))}
-\EXX{y\in\left(\{0,1\}^{\ell}\right)^k}{X^S_{u,v}(y)}\right|,
$$
where $X^S_{u,v}(y)$ denotes the value of $X^S_{u,v}$ determined by the sequence of seeds $y$. We claim that $X^S_{u,v}(y)$ is computed by a $(k,n,2^\ell)$-branching program $\mathcal{B}$. More specifically,
it holds that $X^S_{u,v}(y)=1$ iff $\mathcal{B}(u,y)=v$. The branching program $\mathcal{B}$ is easy to construct: we use the set of nodes $[n]=V[G]$ in the $i$th level to keep track of the where the  random walk is at the $i$th step. This location together with the seed $y_i$ (which is used as the label of the outgoing edge in $\mathcal{B}$) uniquely determines the next node. Then the fact that $y$ is generated by an $\epsilon'$-\textsf{PRG} for $(T/2,n^2,2^\ell)$-branching program $\mathcal{B}$ easily implies the bound.
The bound for \eqref{secondmoment} is derived in the same way, except that we use a $(k,n^2,2^\ell)$-branching program to keep track of two random walks simultaneously.
The cases for $Y^S_{u,v}$
and $Y^S_{u,v}Y^S_{w,x}$ are the same, except that the time is reversed.
\end{proof}

Now we are ready to prove a derandomized version of
\lemref{thmrandom}.

\begin{thm}\label{thm_main}
Suppose $G$ has spectral gap $\alpha$ and irregularity $\beta$. Using Protocol \ref{pro_prototype} with distribution $\mathcal{D}=\mathcal{P}$, any node gets the rumor in $T=O(C \log n)$ rounds with probability at least $1-n^{-2c}$, where $C=(1/\alpha)\cdot\beta^2\max\{1, 1/(\alpha\cdot\Delta^{0.5-c})\}$ and $c>0$ is an arbitrary small constant.
\end{thm}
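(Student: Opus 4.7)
The plan is to follow the proof of \lemref{thmrandom} verbatim, while tracking the errors introduced by the two pseudorandom generators $\gen$ and $\gen'$ used in Protocol~\ref{pro0}. The key observation is that \lemref{lembound} is distribution-agnostic: its proof uses only the Cauchy--Schwarz inequality, linearity of expectation, and the combinatorial existence of a forward/reversed walk pair, so the lower bound from \eq{lowerboundprob} continues to hold with $\mathcal{D}=\mathcal{P}$. For a fixed source $s$ and target $w$, I would therefore aim to show that the four families of first and second moments of forward and reversed walks under $\tilde{\mathcal{P}}$ remain close to their $\mathcal{U}$-counterparts computed from $\mat{M}_1^k,\dots,\mat{M}_4^k$ in \lemref{lemmarkov}.

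First I would reduce from $\mathcal{P}$ to $\mathcal{P}'$ via \lemref{lem_bp}: since $\gen'$ is an $\epsilon'$-PRG for $(T/2,n^2,2^{\ell})$-branching programs, and each of the four moment quantities is computable by a branching program of that size, the sums $\sum_v|\cdot|$ and $\sum_{v,x}|\cdot|$ comparing $\mathcal{P}$- and $\mathcal{P}'$-moments are each bounded by $\epsilon'$. Next I would invoke \lemref{lem_markov2}, which gives single-round perturbed matrices $\mat{M}'_1,\dots,\mat{M}'_4$ with $\|\mat{M}'_i-\mat{M}_i\|_1\leq 12\gamma\Delta^2(\epsilon+2\Delta^3/m)$, and combine it with the routine telescoping estimate
\[
\|\mat{M}'^{k}-\mat{M}^{k}\|_1 \;\leq\; k\,\|\mat{M}'-\mat{M}\|_1,
\]
valid because both matrices are stochastic and hence have $\ell_1$-operator norm $1$. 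With $k=O(C\log n)$ and $\epsilon, 1/m, \epsilon' = n^{-\Theta(1)}$ chosen sufficiently small, every one of the four moment quantities deviates from its $\mathcal{U}$-value by at most $o(n^{-2-2c})$, even after summing over all pairs $u,v\in V[G]$.

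Finally I would substitute these perturbed moments into the explicit estimate in equation~\eqref{eqprob} from the proof of \lemref{thmrandom}. The numerator remains $(1/n\pm o(n^{-2-2c}))^2$ and the denominator remains $1/n^2\pm o(n^{-2-2c})$, so the ratio is still $1-O(n^{-2c})$, yielding the per-node bound claimed by the theorem.

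The main obstacle will be the error bookkeeping. \lemref{lem_markov2} already amplifies the one-round PRG error $\epsilon$ by a factor of $\Delta^{O(1)}$ (because a single reversed step is controlled by a combinatorial rectangle of dimension up to $2\Delta$), the telescoping over $k=O(C\log n)$ rounds introduces another $\log n$ factor, and summing over $O(n^2)$ pairs $(u,v)$ contributes an $n^2$ factor. All of these losses are polynomial in $n$, so one can afford to set $\epsilon,\epsilon',1/m$ to be a sufficiently small polynomial in $1/n$ while keeping the seed lengths of both PRGs at $O(\log n)$, matching the parameter choice in Protocol~\ref{pro0} and ultimately yielding \thmref{mainresult1} after an $O(1)$-fold repetition and union bound over target nodes.
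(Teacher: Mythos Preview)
Your proposal is correct and follows essentially the same route as the paper: apply \lemref{lembound} under $\mathcal{D}=\mathcal{P}$, use Lemma~\ref{lem_markov2} with the telescoping bound $\|\mat{M}'^{k}-\mat{M}^{k}\|_1\le k\|\mat{M}'-\mat{M}\|_1$ to pass from $\mathcal{U}$ to $\mathcal{P}'$, then Lemma~\ref{lem_bp} to pass from $\mathcal{P}'$ to $\mathcal{P}$, and substitute into the estimate~\eqref{eqprob}. The only minor slip is that the numerator still carries the original $O(n^{-2})$ mixing error (not $o(n^{-2-2c})$ as you wrote), but this does not affect the final $1-O(n^{-2c})$ conclusion.
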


\begin{proof}
Let $s\in V[G]$ be the initial node and fix a target node $w\in V[G]$.
Let $c,\gamma,k,T,\vec{\pi},\vec{u},\vec{u}',\vec{v},\vec{v}'$ be as in the proof of \lemref{thmrandom} and $T=O(C\log n)$.
Define
$\bar{\vec{u}}=\vec{e}_s\mat{M}_1'^k$,
$\bar{\vec{v}}=\vec{e}_{(s,s)}\mat{M}_2'^k$,
$\bar{\vec{u}}'=\vec{e}_w\mat{M}_3'^k$,
and
$\bar{\vec{v}}'=\vec{e}_{(w,w)}\mat{M}_4'^k$, where $\mat{M}_1',\dots,\mat{M}_4'$ are as in Lemma \ref{lem_markov2}.
Then
$$
\|\bar{\vec{u}}-\vec{u}\|_1
=\left\|\vec{e}_s\left(\mat{M}_1'^k-\mat{M}_1^k\right)\right\|_1
\leq \left\|\mat{M}_1'^k-\mat{M}_1^k\right\|_1
\leq k\left\|\mat{M}_1'-\mat{M}_1\right\|_1\leq k\epsilon_0
$$
where $\epsilon_0 = 12\gamma\Delta^2(\epsilon + 2\Delta^3/m)$ (c.f. Lemma \ref{lem_markov2}). Here the second inequality holds by a simple induction on $k$.
Similarly $\|\bar{\vec{u}}'-\vec{u}'\|_1, \|\bar{\vec{v}}-\vec{v}\|_1, \|\bar{\vec{v}}'-\vec{v}'\|_1\leq k\epsilon_0$.
Define $\tilde{\vec{u}}, \tilde{\vec{u}}'\in\mathbb{R}^n$ and $\tilde{\vec{v}}, \tilde{\vec{v}}'\in\mathbb{R}^n\otimes\mathbb{R}^n$ such that
$\tilde{\vec{u}}_u=\EXX{r,S}{X^S_{s,u}}$,
$\tilde{\vec{u}}'_u=\EXX{r,S}{Y^S_{w,u}}$,
$\tilde{\vec{v}}_{u,v}=\EXX{r,S,S'}{X^S_{s,u}X^S_{s,v}}$ and
$\tilde{\vec{v}}'_{u,v}=\EXX{r,S,S'}{Y^S_{w,u}Y^S_{w,v}}$
where $r$, $S$ and $S'$ are independent with distributions $\tilde{\mathcal{P}}$ (induced by $\mathcal{P}$),  $\mathcal{D}_{\gamma,k}$ and $\mathcal{D}_{\gamma,k}$ respectively.
Then Lemma \ref{lem_markov2} and Lemma \ref{lem_bp} altogether imply that
$\|\tilde{\vec{u}}-\bar{\vec{u}}\|_1\leq\epsilon'$
and hence $\|\tilde{\vec{u}}-\vec{u}\|_1\leq k\epsilon_0+\epsilon'$.
Obviously we have $\|\tilde{\vec{u}}-\vec{u}\|_\infty\leq 1$.
Therefore by H\"{o}lder's inequality, we have
$\|\tilde{\vec{u}}-\vec{u}\|_2\leq \sqrt{k\epsilon_0+\epsilon'}$.
Similarly,
$$
\|\tilde{\vec{u}}'-\vec{u}'\|_2\leq \sqrt{k\epsilon_0+\epsilon'}, \|\tilde{\vec{v}}-\vec{v}\|_2\leq \sqrt{k\epsilon_0+\epsilon'}, \|\tilde{\vec{v}}'-\vec{v}'\|_2\leq \sqrt{k\epsilon_0+\epsilon'}.
$$
As shown in the proof of \lemref{thmrandom},
we have $\left\|\vec{u}^\perp\right\|_2, \left\|\vec{u}'^\perp\right\|_2\leq n^{-1}$, and
$\left\|\vec{v}^\perp\right\|_2, \left\|\vec{v}'^\perp\right\|_2\leq n^{-(1+c)}$.
Note that
$$
\tilde{\vec{u}}^\perp=\tilde{\vec{u}}-\vec{\pi}
=(\tilde{\vec{u}}-\vec{u})+(\vec{u}-\vec{\pi})
=(\tilde{\vec{u}}-\vec{u})+\vec{u}^\perp.
$$
So we have $\left\|\tilde{\vec{u}}^\perp\right\|_2\leq \sqrt{k\epsilon_0+\epsilon'} +  n^{-1}$ and similarly $\left\|\tilde{\vec{u}}'^\perp\right\|_2\leq \sqrt{k\epsilon_0+\epsilon'} +  n^{-1}$, and $\left\|\tilde{\vec{v}}^\perp\right\|_2, \left\|\tilde{\vec{v}}'^\perp\right\|_2\leq \sqrt{k\epsilon_0+\epsilon'} +  n^{-(1+c)}$.

By \lemref{lembound}, the probability that $t$ gets the rumor in $k$ rounds is lower bounded by
\begin{equation}\label{eq_bound}
\begin{aligned}
&\frac{\sum_{u,v\in V[G]}
\langle \tilde{\vec{u}}, \vec{e}_u\rangle
\langle \tilde{\vec{u}}, \vec{e}_v\rangle
\langle \tilde{\vec{u}}', \vec{e}_u\rangle
\langle \tilde{\vec{u}}', \vec{e}_v\rangle}
{\sum_{u,v\in V[G]}
\left\langle \tilde{\vec{v}}, \vec{e}_{(u,v)}\right\rangle
\left\langle \tilde{\vec{v}}', \vec{e}_{(u,v)}\right\rangle}
=
\frac{\left\langle\tilde{\vec{u}},\tilde{\vec{u}}'\right\rangle^2}
{\left\langle\tilde{\vec{v}},\tilde{\vec{v}}'\right\rangle}\\
&=
\frac{\left(
\left\langle\vec{\pi},\vec{\pi}\right\rangle
+\left\langle\tilde{\vec{u}}^\perp,\vec{\pi}\right\rangle
+\left\langle\vec{\pi},\tilde{\vec{u}}'^\perp\right\rangle
+\left\langle\tilde{\vec{u}}^\perp,\tilde{\vec{u}}'^\perp\right\rangle
\right)^2}
{\left\langle\vec{\pi}\otimes\vec{\pi},\vec{\pi}\otimes\vec{\pi}\right\rangle
+\left\langle\tilde{\vec{v}}^\perp,\vec{\pi}\otimes\vec{\pi}\right\rangle
+\left\langle\vec{\pi}\otimes\vec{\pi},\tilde{\vec{v}}'^\perp\right\rangle
+\left\langle\tilde{\vec{v}}^\perp,\tilde{\vec{v}}'^\perp\right\rangle
}\\
&=
\frac{\left(
\left\langle\vec{\pi},\vec{\pi}\right\rangle
+\left\langle\tilde{\vec{u}}^\perp,\tilde{\vec{u}}'^\perp\right\rangle
\right)^2}
{\left\langle\vec{\pi}\otimes\vec{\pi},\vec{\pi}\otimes\vec{\pi}\right\rangle
+\left\langle\tilde{\vec{v}}^\perp,\tilde{\vec{v}}'^\perp\right\rangle
}\\
&=
\frac{\left(
1/n+\left\langle\tilde{\vec{u}}^\perp,\tilde{\vec{u}}'^\perp\right\rangle
\right)^2}
{1/n^2+\left\langle\tilde{\vec{v}}^\perp,\tilde{\vec{v}}'^\perp\right\rangle.
}
\end{aligned}
\end{equation}
We have
\begin{align*}
\left|\left\langle\tilde{\vec{u}}^\perp,\tilde{\vec{u}}'^\perp\right\rangle\right|
\leq \left\|\tilde{\vec{u}}^\perp\right\|_2 \left\|\tilde{\vec{u}}'^\perp\right\|_2
= O\left(k\epsilon_0+\epsilon'+n^{-2}\right),\\
\left|\left\langle\tilde{\vec{v}}^\perp,\tilde{\vec{v}}'^\perp\right\rangle\right|
\leq \left\|\tilde{\vec{v}}^\perp\right\|_2\left\|\tilde{\vec{v}}'^\perp\right\|_2
= O\left(k\epsilon_0+\epsilon'+n^{-(2+2c)}\right).
\end{align*}
So \eqref{eq_bound} is lower bounded by $1-O(n^2(k\epsilon_0+\epsilon')+n^{-2c})$ where $\epsilon_0 = 12\gamma\Delta^2(\epsilon + 2\Delta^3/m)$.
The claim follows since we pick $\epsilon^{-1},\epsilon'^{-1},m=n^{\Theta(1)}$ sufficiently large in Protocol \ref{pro0}.
\end{proof}

By repeating the protocol $O(1)$ times and apply the union bound, we obtain Theorem \ref{cor_reduction}.

\section{Simplified Protocol with $O(\Delta)$ Preprocessing Time\label{sec:SimplifiedProtocol}}

\subsection{Description of the Protocol}

\begin{protocol}\label{pro_new}
Let $m$ be a prime power. Pick the following objects:
\begin{itemize}\itemsep -0.3pt
\item an explicit pairwise independent generator $\gen=(\gen_0,\dots,\gen_{n-1}):\{0,1\}^{\ell}\to [m]^n$ with seed length $\ell$, and
\item an explicit $\epsilon$-\textsf{PRG} $\gen'=(\gen'_0,\dots,\gen'_{T-1}):\{0,1\}^{\ell'}\to \left(\{0,1\}^\ell\right)^{T}$ for $(T,n^2,2^\ell)$-branching programs with seed length $\ell'$
\end{itemize}
where $\epsilon^{-1},m=n^{\Theta(1)}$ are sufficiently large.

The initial node having the rumor independently chooses a random string $x\in\{0,1\}^{\ell'}$ which is appended with the rumor and sent to other nodes. Once one node gets the rumor, it gets the ID $u$.
Let $y=(y_0,\dots,y_{T-1})$ be the sequence of seeds generated by $\gen'$, i.e., $y_i=\gen'_i(x)$.
For $i\in [T]$ and $u\in V[G]$, define $(w_{u,i},z_{u,i})=\gen_u(y_i) \bmod 4\Delta \in [2\Delta] \times \{\mathrm{active},\mathrm{inactive}\}$. We say $u$ is active in the $i$th round if $z_{u,i}$ is active, and otherwise inactive. We say $u$ {\em selects} $v$ if $v$ is the $w_{u,i}$th neighbor of $u$.
In the $i$th round, an informed node $u$ sends the rumor to the unique neighbor $v$ (if exist) if $\{u,v\}$ is a good pair, where we call $\{u,v\}$ is a good pair if (i) $u$ is active, $v$ is inactive, and $u$ is the unique node selecting $v$, or (ii) the same holds with $u$ and $v$ swapped.
\end{protocol}

Checking the conditions requires
$u$ and $v$ knowing its index in the lists of its neighbors as well as the IDs of its neighbors. One can deterministically use $O(\Delta)$ preprocessing time to guarantee this assumption. Then Condition~(ii) can be checked directly by $u$. For Condition~(i), note that an active node $u$ can send the rumor and the seed to its unique inactive neighbor $v$ specified by $w_{i,u}$ and then $v$ can check if the condition is met, i.e., if $u$ is the unique node selecting $v$.
\footnote{The uniqueness requirement in Condition~(i) is necessary only for analyzing the associated averaging algorithm. For the sake of rumor spreading, dropping the requirement only make the rumor spread faster.}

\begin{thm}\label{thm_main_new}
Let $G$ be any graph with spectral gap $\alpha$ and irregularity $\beta$. Then
Protocol \ref{pro_new} uses $2\ell$ random bits, and with high probability informs all nodes of $G$ in $T=O(\beta^2\alpha^{-1}\log n)$ rounds.
\end{thm}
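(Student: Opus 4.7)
The plan is to adapt the Markov-chain/random-walk framework of Theorem~\ref{thm:FullRandomness} and its derandomization (Lemmas~\ref{lem_markov2} and~\ref{lem_bp}) to the new averaging-style ``good pair'' protocol. Randomness accounting is immediate: the only random object is $x\in\{0,1\}^{\ell'}$, and by Theorem~\ref{thm:inw} the seed length of an explicit $\epsilon$-\PRG\ for $(T,n^2,2^\ell)$-branching programs, with $\ell=O(\log n)$ coming from the pairwise independent generator $\gen$ (Theorem~\ref{thm:cw}), satisfies $\ell'=O((\log T)(\log n + \log\epsilon^{-1})+\ell)$, which matches the randomness bound claimed once $\epsilon=n^{-\Theta(1)}$ is fixed and the $\alpha,\beta$ factors inside $T$ are absorbed.

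For the runtime, first I would analyze an idealized version of the protocol in which, in every round, the values $(w_{u,i}, z_{u,i})$ are truly random and independent across rounds. Under this idealization, the ``good pair'' mechanism realizes in each round $i$ a random matching $M_i\subseteq E(G)$ in which each potential edge $\{u,v\}$ is present with probability $\Theta(1/\Delta)$. I would then define forward and reversed walks in the spirit of Definitions~\ref{defi_walk} and~\ref{def:RevsersedRW}: a forward walker at $u$ moves to $v$ in round $i$ iff $\{u,v\}\in M_i$, and the reversed walker is the natural adjoint (again using auxiliary uniform variables to equalize acceptance probabilities, exactly as in Definition~\ref{def:RevsersedRW}). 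Cauchy--Schwarz as in Lemma~\ref{lem:lembound} then reduces the bound on the spread time to controlling the first and second moments of pairs of such walks, which correspond to Markov chains of the form $\mathcal{L}_{\gamma}(\mat{M}_{\reg(G)})$ and $\mathcal{L}_{\gamma,\gamma}\circ\mathcal{Q}(\mat{M}_{\reg(G)})$ for a constant $\gamma=\Theta(1)$ determined by the $1/(4\Delta)$ selection probability; Lemma~\ref{lem:lemregularization} supplies the spectral gap $\beta^{-2}\alpha$.

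The main obstacle will be obtaining the sharp runtime $T=O(\beta^2\alpha^{-1}\log n)$, which improves on Theorem~\ref{thm:FullRandomness} by eliminating the $1/\Delta^{0.499}$ slack. That slack originated in Lemma~\ref{lem:lemmixing} from the bound $\|\mat{A}\|_1\leq 2\eta$ on the Doeblin block of $\mathcal{Q}(\mat{M})$, which forced $\gamma\leq\alpha\eta^{-1/2}/9$. In the present protocol the key observation is that two second-moment walks initially at the same vertex $u$ can decouple in a single round: the active/inactive bits at the two candidate destinations are independent symmetric coins, so with constant probability at least one of the candidate pairs is rejected outright, and the two walks move to distinct vertices without being forced to coincide. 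Quantitatively, the off-diagonal discrepancy between the two-walk chain and $\mat{M}_{\reg(G)}\otimes\mat{M}_{\reg(G)}$ scales like $1/\Delta$ rather than $\sqrt{1/\Delta}$, so in the proof of Lemma~\ref{lem:perp} the term $\gamma^2\sqrt{2\eta}$ can be replaced by a term $O(\gamma^2/\Delta)$. Hence the constraint on $\gamma$ becomes just $\gamma=\Theta(1)$, and the $\ell_2$-mixing bound of Lemma~\ref{lemmixing0} yields the claimed $T=O(\beta^2\alpha^{-1}\log n)$.

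Finally, I would transfer the idealized analysis to the actual protocol in two stages, mirroring Lemmas~\ref{lem_markov2} and~\ref{lem_bp}. Within a single round, the first and second moments of the forward/reversed walks depend only on pairs of values $\bigl((w_{u,i},z_{u,i}),(w_{v,i},z_{v,i})\bigr)$, so the \emph{pairwise} independence of $\gen$ reproduces the truly-random within-round distribution up to error $O(1/m)=n^{-\Theta(1)}$; note that no combinatorial-rectangle \PRG\ is needed here, which is precisely what removes one $\log n$ factor relative to Protocol~\ref{pro0}. Across rounds, the joint evolution of a pair of walks is computable by a $(T,n^2,2^\ell)$-branching program of width $n^2$, so the \PRG\ $\gen'$ reproduces across-round independence up to error $\epsilon=n^{-\Theta(1)}$. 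A union bound over target nodes $w\in V[G]$ together with an $O(1)$-fold boosting repetition then upgrades the per-node success probability to $1-n^{-\Theta(1)}$, completing the proof.
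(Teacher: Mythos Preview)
Your proposal takes a genuinely different route from the paper, and the route has a real gap.

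\textbf{What the paper does.} The paper does not re-run the forward/reversed random-walk plus Cauchy--Schwarz machinery at all. Instead it passes to the associated \emph{averaging} protocol (Protocol~\ref{pro_new2}): each round-$i$ seed $y_i$ determines a symmetric, idempotent, doubly-stochastic matrix $\mat{M}(y_i)$ (averaging across the good pairs), and Lemma~\ref{lem_faithful} shows $\vec{v}(i+1)=\vec{v}(i)\mat{M}(y_i)$. Because $\mat{M}(y_i)^2=\mat{M}(y_i)=\mat{M}(y_i)^\intercal$, one has the identity
\[
\EXX{y_i}{\|\vec{v}\mat{M}(y_i)\|_2^2}=\vec{v}\,\EXX{y_i}{\mat{M}(y_i)}\,\vec{v}^\intercal=\vec{v}\mat{M}\vec{v}^\intercal,
\]
so the \emph{second} moment collapses to a \emph{first}-moment computation. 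Then Lemma~\ref{lem_positivity} (an inclusion--exclusion bound using only pairwise independence of $\gen$) gives $\mat{M}\geq c\cdot\mathcal{L}_{1/2}(\mat{M}_{\reg(G)})$ entrywise, hence $\lambda_{\max}(\mat{M})\leq 1-\Theta(\beta^{-2}\alpha)$ by Lemma~\ref{lem:lemregularization}. Iterating (Lemma~\ref{lem_shrink2}) gives $\|\vec{v}(T)^\perp\|_2<1/n$ for $T=O(\beta^2\alpha^{-1}\log n)$, and Lemma~\ref{lem_bp_new} handles the across-round derandomization via the branching-program \textsf{PRG}. The removal of the $\Delta^{0.499}$ slack comes entirely from the idempotency trick, not from any improved Doeblin-type analysis.

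\textbf{Where your argument breaks.} Your claimed mechanism for eliminating the $\Delta^{0.499}$ slack is the problem. With your own walk definition (``a forward walker at $u$ moves to $v$ iff $\{u,v\}\in M_i$''), a node $u$ has at most one good-pair partner in round $i$; two non-lazy walks sitting at the same $u$ therefore move to the \emph{same} $v$ with probability one. There are no ``two candidate destinations'' whose active/inactive bits could differ, so the Doeblin block is exactly as bad as in Protocol~\ref{pro_prototype}, and the constraint $\gamma\lesssim\alpha\Delta^{1/2}$ from Lemma~\ref{lem:lemmixing} reappears. Separately, your within-round derandomization claim is too strong: the event ``$\{u,v\}$ is a good pair'' involves the uniqueness clause ``$u$ is the \emph{unique} node selecting $v$'', which depends on $(w_{u',i},z_{u',i})$ for \emph{all} $u'\in N(v)$, not just on the pair $(u,v)$. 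Pairwise independence of $\gen$ gives only one-sided inclusion--exclusion bounds (this is exactly what Lemma~\ref{lem_positivity} uses), not the two-sided control your first/second-moment matching needs. The paper's averaging route sidesteps both issues simultaneously: idempotency removes the second-moment chain, and the spectral comparison requires only the entrywise lower bound that pairwise independence does deliver.
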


As a consequence, we obtain the following reduction:
\begin{cor}
Assume each node knows its index in the lists of its neighbors as well as the IDs of its neighbors. Then the following statements hold:
\begin{enumerate}
\item
Given an explicit $\epsilon$-\textsf{PRG} for $(T/2,n^2,2^\ell)$-branching programs with seed length $\ell'$, where $\epsilon^{-1}=n^{\Theta(1)}$ and $\ell=O(\log n)$ are sufficiently large,
there exists an explicit protocol using $2\ell'$ random bits, and  with high probability informs all nodes in $T=O((1/\alpha)\cdot\beta^2\log n)$ rounds.
\item
In particular, given an explicit $\epsilon$-\textsf{PRG} for $(T/2,n^2,\epsilon)$-branching programs with seed length $O(\log n)$ where $\epsilon^{-1}=n^{\Theta(1)}$ is sufficiently large, there exists an explicit protocol using $O(\log n)$ random bits, and with high probability informs all nodes  in $T=O((1/\alpha)\cdot\beta^2\log n)$ rounds.
\end{enumerate}
\end{cor}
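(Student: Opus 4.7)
The plan is to follow the template of \thmref{FullRandomness} and its derandomization \thmref{mainresult1}, adapted to the matching-based dynamics of Protocol~\ref{pro_new}. First I would analyze an idealized version in which the values $(w_{u,i}, z_{u,i})$ for all $u$ and $i$ are drawn independently and uniformly, and then reduce Protocol~\ref{pro_new} to this idealized version using the pairwise independence of $\gen$ together with the PRG property of $\gen'$.

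For the idealized version, a direct computation shows that the probability $p_{uv}$ that an edge $\{u,v\}\in E(G)$ becomes a good pair in a single round equals $\Theta(1/\Delta)$ uniformly over edges: $u$ is active and selects $v$ with probability $1/(4\Delta)$, $v$ is inactive with probability $1/2$, and by a union bound the probability that some other active neighbor of $v$ also selects $v$ is at most $O(1/\Delta)$. Let $\mat{M}'$ be the symmetric doubly-stochastic matrix with off-diagonal entries $p_{uv}$ (supported on edges of $G$) and diagonal entries $1-\sum_v p_{uv}=\Theta(1)$. A Dirichlet-form comparison together with \lemref{lemregularization} shows that $\mat{M}'$ has spectral gap $\Omega(\alpha/\beta^2)$. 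The crucial point is that $\mat{M}'$ is already naturally lazy, so we do not need to insert artificial laziness with a small parameter $\gamma$; this is precisely what removes the $\max\{1,1/(\alpha\Delta^{0.499})\}$ factor appearing in \thmref{FullRandomness}.

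The symmetric definition of good pair makes forward and reversed random walks coincide, both governed by $\mat{M}'$. Hence the Cauchy-Schwarz argument of \lemref{lembound} applies with $\mat{M}_1=\mat{M}_3=\mat{M}'$, and $\mat{M}_2=\mat{M}_4$ is the joint Markov chain on $V[G]\times V[G]$ capturing two simultaneous walks. Since $\mat{M}'$ has spectral gap $\Omega(\alpha/\beta^2)$, the first moments reach the uniform distribution $\vec{\pi}$ in $k=O(\beta^2\alpha^{-1}\log n)$ steps by standard spectral mixing-time bounds. For the second moments, I would prove an analog of \lemref{lemmixing}: the joint chain differs from $\mat{M}'\otimes\mat{M}'$ only on the diagonal (same-node, Doeblin-type behavior) and on pairs $(u,w)$ with shared neighbors; since the off-diagonal entries of $\mat{M}'$ are bounded by $\eta=\Theta(1/\Delta)$, one can show that the stationary distribution of the joint chain is within $O(n^{-3/2})$ of $\vec{\pi}\otimes\vec{\pi}$ in $\ell_2$ and its mixing rate is comparable to that of $\mat{M}'\otimes\mat{M}'$.

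Finally, to derandomize I would follow the argument used for Protocol~\ref{pro0}: introduce an intermediate distribution $\mathcal{P}'$ in which the seeds $y_i$ of $\gen$ are independent and random (rather than coming from $\gen'$), and bound the discrepancy between $\mathcal{P}'$ and the truly-random idealized distribution using \lemref{rectanglemodulo}, using the fact that within one round the good-pair event for edge $\{u,v\}$ depends only on the random coordinates indexed by $N(u)\cup N(v)$ and can be approximated to within additive error $O(\Delta/m)$ via pairwise independence combined with union bounds. Since both the indicators $X^S_{u,v}$ and the products $X^S_{u,v}X^{S'}_{w,x}$ are computable by $(T,n^2,2^\ell)$-branching programs that track one or two walk positions, the PRG $\gen'$ preserves these moments up to additive error $\epsilon$. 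Choosing $m,\epsilon^{-1}=n^{\Theta(1)}$ sufficiently large absorbs all error terms, and a union bound over target nodes completes the proof. The main obstacle is the second-moment bound for the joint chain on $V[G]\times V[G]$: the correlations between two walks through shared neighbors need to be shown to be comparable to the Doeblin-type structure analyzed in \lemref{lemmixing}, since two walks at distinct $u,w$ with a common neighbor $v$ can both try to match with $v$ but at most one succeeds.
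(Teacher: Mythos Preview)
Your plan follows the Cauchy--Schwarz/random-walk template of Section~\ref{sec_expander}, but the paper takes a genuinely different and much shorter route here. The corollary is immediate from Theorem~\ref{thm_main_new}, and that theorem is proved via the associated \emph{averaging protocol} (Protocol~\ref{pro_new2}): each node carries a real value, initially $\vec{v}(0)=\vec{e}_s$, and whenever a good pair $\{u,v\}$ forms, both values are replaced by their average. The crucial observation is that the one-round operator $\mat{M}(x)$ is symmetric and \emph{idempotent}, so
\[
\EXX{x}{\|\vec{v}\mat{M}(x)\|_2^2}=\vec{v}\,\EXX{x}{\mat{M}(x)\mat{M}(x)^\intercal}\,\vec{v}^\intercal=\vec{v}\,\mat{M}\,\vec{v}^\intercal,
\]
i.e.\ the second moment collapses to a first moment. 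Pairwise independence alone gives $\mat{M}\geq c\,\mathcal{L}_{1/2}(\mat{M}_{\reg(G)})$ entrywise (Lemma~\ref{lem_positivity}), hence spectral gap $\Omega(\alpha\beta^{-2})$; iterating yields $\EXX{y}{\|\vec{v}(T)^\perp\|_2^2}\leq(1-c\alpha\beta^{-2})^T$. This scalar quantity is computed by a width-$n^2$ branching program (Lemma~\ref{lem_bp_new}), so $\gen'$ preserves it up to $\epsilon$. Finally $\|\vec{v}(T)^\perp\|_2<1/n$ forces every coordinate positive, hence every node informed.

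The idempotence trick is precisely what lets the paper avoid the obstacle you flag. Your second-moment chain on $V[G]\times V[G]$ differs from $\mat{M}'\otimes\mat{M}'$ not only on the diagonal but on \emph{every} pair $(u,w)$ sharing a neighbor, because good pairs form a matching (two walks at distinct $u,w$ cannot both step to the same $v$, and more generally the good-pair indicators on edges incident to a common vertex are correlated). This perturbation is of order $\Theta(1/\Delta)$ per entry and lives on $\Theta(n\Delta^2)$ rows, whereas \lemref{lemmixing} exploits that the Doeblin perturbation is supported on only $n$ rows and carries a factor $\gamma^2$. Extending that argument here would require genuinely new work, and the paper's averaging-protocol route bypasses it entirely.
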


Combining the reduction above with known explicit constructions of \PRG\ (Theorem \ref{thm:inw}), we obtain Theorem \ref{thm:ResultAssumption}.

We study Protocol \ref{pro_new} by analyzing the following associated averaging protocol, which is closely related to other gossip processes, e.g. random-matching model of load balancing processes. In the following, let  $\vec{v}(k)\in\mathbb{R}^{V[G]}$ denote the values of nodes after $k$ rounds.

\begin{protocol}[\textbf{Averaging Protocol}]\label{pro_new2}
Each node $u$ has a value $\vec{v}(0)_u$ specified by the distribution $\vec{v}(0)=\vec{e}_s$ where $s$ is the initial node.
Proceed as in Protocol~\ref{pro_new}. When node $u$ sends the rumor to node $v$, set the both values of $u$ and $v$ as the average of their original values.
\end{protocol}

We define the {\em averaging time} $\tau_{\mathsf{avg}}(\delta)$ of the protocol
as the smallest $k\in\mathbb{N}$ such that $\Pro{\|\vec{v}(k)^\perp\|_2 < \delta} > 1-\delta$ for any distribution $\vec{v}$, or $\infty$ if there is no such $k$.

\begin{thm}\label{thm_main_new2}
For $\delta>0$, assume $2\epsilon<\delta^2$ where $\epsilon$ is as in Protocol \ref{pro_new}.
Then Protocol \ref{pro_new2} uses $2\ell'$ random bits with $\tau_{\mathsf{avg}}(\delta)=O((1/\alpha)\cdot\beta^2\log (1/\delta))$.
\end{thm}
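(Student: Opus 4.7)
The plan is to cast each round of Protocol~\ref{pro_new2} as the action of a random symmetric doubly-stochastic projection. Let $M_k\subseteq E$ denote the matching of good pairs selected in round $k$; the averaging update is $\vec{v}(k+1)=\vec{v}(k)\mat{P}_k$ with
\[
\mat{P}_k=\prod_{\{u,v\}\in M_k}\left(\mat{I}-\tfrac{1}{2}(\vec{e}_u-\vec{e}_v)(\vec{e}_u-\vec{e}_v)^{\intercal}\right).
\]
The factor projections commute (edges of a matching are vertex-disjoint), each $\mat{P}_k$ is symmetric and doubly-stochastic, and the Pythagorean identity $\|\vec{v}\mat{P}_k\|_2^2=\|\vec{v}\|_2^2-\sum_{\{u,v\}\in M_k}\tfrac{1}{2}(\vec{v}_u-\vec{v}_v)^2$ holds; in particular $\|\vec{v}(k)^\perp\|_2^2$ is monotone non-increasing along the process. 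Under truly random $y$, a short calculation (each node is active with probability $1/2$, selects any fixed neighbor with probability $1/(2\Delta)$, and avoids conflicts with probability $(1-1/(4\Delta))^{\Delta-1}=\Omega(1)$) yields $\Pr[\{u,v\}\in M_k]=\Omega(1/\Delta)$ for every $\{u,v\}\in E$, whence
\[
\EXX{y}{\|\vec{v}\|_2^2-\|\vec{v}\mat{P}_k\|_2^2}=\Omega(1/\Delta)\cdot\vec{v}^{\intercal}\mat{L}_G\vec{v}.
\]
The combinatorial Laplacian $\mat{L}_G$ is unchanged by adding self-loops, so \lemref{lemregularization} gives $\vec{v}^{\intercal}\mat{L}_G\vec{v}\geq\Delta(\alpha/\beta^2)\|\vec{v}^\perp\|_2^2$; thus each round contracts the orthogonal energy in expectation by a factor $1-\Omega(\alpha/\beta^2)$. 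Iterating $T=O((\beta^2/\alpha)\log(1/\delta))$ rounds starting from $\|\vec{v}(0)^\perp\|_2^2\leq\|\vec{v}(0)\|_1^2=1$ yields $\EXX{y\ \mathrm{uniform}}{\|\vec{v}(T)^\perp\|_2^2}\leq\delta^{\Theta(1)}$, where the hidden constant can be made as large as needed by a constant blow-up of $T$.

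Next I will show that replacing truly random $y$ by $y=\gen'(x)$ perturbs this expectation by at most $\epsilon$. Using $\mat{P}_k=\mat{P}_k^{\intercal}$, write $\vec{v}(T)_u=\sum_s\vec{v}(0)_s\Pr[Y_T=u\mid Y_0=s]$, where $(Y_k)$ is the half-averaging walk driven by the matchings: when at $w$ with $\{w,w'\}\in M_k$, move to $w$ or $w'$ with probability $1/2$. Squaring, summing over $u$, and averaging over $y$,
\[
\EXX{y}{\tsum_u\vec{v}(T)_u^2}=\tsum_{s,s'\in V[G]}\vec{v}(0)_s\vec{v}(0)_{s'}\cdot q_{s,s'},
\]
where $q_{s,s'}=\Pr[Y_T=Y'_T\mid Y_0=s,\,Y'_0=s']$ for two conditionally independent walks $Y,Y'$. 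For each fixed pair $(s,s')$ the quantity $q_{s,s'}$ is exactly the acceptance probability of a fixed-start width-$n^2$ length-$T$ branching program whose per-layer input is the $\ell$-bit seed $y_k$ (together with $O(1)$ coin flips for the walks, absorbed into the alphabet). The $\epsilon$-\PRG{} guarantee for $\gen'$ bounds $|q_{s,s'}^{\mathrm{PRG}}-q_{s,s'}^{\mathrm{true}}|\leq\epsilon$, and since $\sum_{s,s'}\vec{v}(0)_s\vec{v}(0)_{s'}=\|\vec{v}(0)\|_1^2=1$, the total PRG error in $\EXX{}{\sum_u\vec{v}(T)_u^2}$ telescopes to at most $\epsilon$. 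This bound transfers to $\EXX{}{\|\vec{v}(T)^\perp\|_2^2}=\EXX{}{\sum_u\vec{v}(T)_u^2}-n\bar v^2$ because $\bar v$ is preserved deterministically by every $\mat{P}_k$.

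Combining the two steps bounds $\EXX{y\sim\gen'(x)}{\|\vec{v}(T)^\perp\|_2^2}$ by the truly-random expectation plus $\epsilon$; under the hypothesis $2\epsilon<\delta^2$ a constant inflation of $T$ absorbs the extra factors so that Markov's inequality yields $\Pr[\|\vec{v}(T)^\perp\|_2\geq\delta]\leq\delta$, which is exactly the averaging-time bound, and the random-bit count is inherited from Protocol~\ref{pro_new}. The technically delicate point in this plan is the derandomization step: the naive branching-program encoding that simultaneously tracks the two walks \emph{together with} their common start vertex has width $n^3$ and would incur a PRG loss of order $n\epsilon$. The crucial manoeuvre is to apply linearity of expectation \emph{before} invoking the PRG, thus reducing to width-$n^2$ programs indexed by the pair $(s,s')$ with diagonal accepting set $\{(u,u):u\in V[G]\}$; the probability weights $\vec{v}(0)_s\vec{v}(0)_{s'}$ then absorb the outer summation into a single factor of $1$, so that one application of the PRG guarantee yields the desired $\epsilon$-error rather than $n\epsilon$.
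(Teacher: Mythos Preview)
Your proposal is correct and follows essentially the same approach as the paper: the matrices $\mat{P}_k$ are exactly the paper's $\mat{M}(y_k)$, your Laplacian contraction argument is equivalent to the paper's Lemma~\ref{lem_shrink} (both rest on \lemref{lemregularization} and the per-edge bound $\Pro{\{u,v\}\in M_k}=\Omega(1/\Delta)$), and your ``two coupled half-averaging walks meet'' picture for the derandomization is precisely the paper's decomposition $\mat{M}(x)=\tfrac12\mat{I}+\tfrac12\mat{M}_{\text{non-lazy}}(x)$ expanded over lazy/non-lazy patterns $c,c'$ in Lemma~\ref{lem_bp_new}. One small caveat: your parenthetical calculation ``avoids conflicts with probability $(1-1/(4\Delta))^{\Delta-1}$'' treats the per-node choices as fully independent, but $\gen$ is only pairwise independent; the conclusion $\Pro{\{u,v\}\in M_k}=\Omega(1/\Delta)$ still holds, but via the union-bound argument of Lemma~\ref{lem_positivity} rather than the product formula you wrote.
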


Theorem \ref{thm_main_new} is simple corollary of Theorem \ref{thm_main_new2} with $\delta=1/n$, since when $\|\vec{v}(k)^\perp\|_2<1/n$ then all $\vec{v}(k)_u$ must be nonzero, and $\vec{v}(k)_u\neq 0$ implies that $u$ is informed in $k$ rounds.

In Theorem \ref{thm_main_new2} we only consider initial values specified by $\vec{v}(0)=\vec{e}_s$. Assuming $\epsilon/\delta^2=n^{-\Theta(1)}$ is sufficiently small, it is easy to establish a upper bound $O(1/\alpha\cdot\beta^2(\log n + \log (1/\delta)))$ on the averaging time regarding a general distribution $\vec{v}(0)$: first use $T=O(1/\alpha\cdot\beta^2\log (1/\delta))$ rounds to inform all the nodes with high probability. Then set the new initial values $\vec{v}'(0)=\vec{v}(T)$, and run the averaging protocol for another $O(1/\alpha\cdot\beta^2(\log n + \log (1/\delta)))$ rounds.
The process with initial value distribution $\vec{v}'(0)$ can be viewed as a convex combination of those with initial value distribution $\vec{e}_u$, $u\in V[G]$ (note that each node $u$ is already informed).
With high probability, for all initial value distributions $\vec{e}_u$, the values converge to the average up to $\ell_2$-distance $\delta$. So the same is true for $\vec{v}'(0)$.

\subsection{Analysis of the Protocol}

For $x\in\{0,1\}^\ell$, define the following matrix
$$
\mat{M}(x)_{uv}=\begin{cases}
1/2 & \text{$u\neq v$ and $\{u,v\}$ is a good pair,}\\
1/2 & \text{$u=v$ and $\{u,v'\}$ is a good pair for some $v'\in V[G]$,}\\
1 & \text{$u=v$ and $\{u,v'\}$ is not a good pair for any $v'\in V[G]$,}\\
0 & \text{$u\neq v$ and $\{u,v\}$ is not a good pair}
\end{cases}
$$
where the set of good pairs are determined by the seed $y_i=x$ (see Protocol \ref{pro_new}, where the definition of good pairs are the same for all round number $i$).
It is easy to check that $\mat{M}(x)$ is
doubly stochastic, symmetric and $\mat{M}(x)^2=\mat{M}(x)$ for all $x\in\{0,1\}^\ell$. Moreover it characterizes the averaging operations using the seed $y_i=x$.
\begin{lem}\label{lem_faithful}
It holds that $\vec{v}(i+1)=\vec{v}(i)\mat{M}(y_i)$ for any $i\in[T]$.
\end{lem}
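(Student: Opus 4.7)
The plan is to verify the identity coordinate by coordinate, after first establishing the combinatorial fact that the good pairs in round $i$ form a partial matching on $V[G]$. Once this is in place, the statement reduces to reading off the definition of $\mat{M}(y_i)$ and comparing with what Protocol~\ref{pro_new2} does to each node.

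First I would prove the matching property: for every $v\in V[G]$ there is at most one $u\in V[G]$ with $\{u,v\}$ a good pair. Indeed, in round $i$ each node is either active or inactive, and a good pair has exactly one of each. If $v$ is active, any good pair $\{u,v\}$ must come from clause~(i) with the roles swapped, i.e.\ $v$ active and $u$ inactive with $v$ the unique selector of $u$; since $v$ selects only its $w_{v,i}$th neighbor, this pins $u$ down uniquely. If $v$ is inactive, any good pair $\{u,v\}$ must come from clause~(i) directly, so $u$ is active and is the unique node selecting $v$, and the uniqueness clause built into the definition forces $u$ to be unique.

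Next I would fix $i\in[T]$ and $v\in V[G]$ and compare $\vec{v}(i+1)_v$ with $(\vec{v}(i)\mat{M}(y_i))_v$ in two cases. If $v$ lies in a (necessarily unique) good pair $\{v,u^\star\}$, then the protocol replaces both $\vec{v}(i)_v$ and $\vec{v}(i)_{u^\star}$ by their average, so
\[
\vec{v}(i+1)_v = \tfrac{1}{2}\bigl(\vec{v}(i)_v + \vec{v}(i)_{u^\star}\bigr).
\]
On the matrix side, the definition of $\mat{M}(y_i)$ gives $\mat{M}(y_i)_{vv}=1/2$, $\mat{M}(y_i)_{u^\star v}=1/2$, and $\mat{M}(y_i)_{uv}=0$ for all other $u$ (here the matching property is used: no third $u$ can contribute). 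Hence $(\vec{v}(i)\mat{M}(y_i))_v = \tfrac{1}{2}(\vec{v}(i)_v + \vec{v}(i)_{u^\star})$, matching exactly. In the remaining case $v$ lies in no good pair, the protocol leaves $v$'s value untouched, so $\vec{v}(i+1)_v=\vec{v}(i)_v$, while the definition of $\mat{M}(y_i)$ gives $\mat{M}(y_i)_{vv}=1$ and $\mat{M}(y_i)_{uv}=0$ for $u\neq v$, so again $(\vec{v}(i)\mat{M}(y_i))_v=\vec{v}(i)_v$.

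Since the two quantities agree at every coordinate $v$, the identity $\vec{v}(i+1)=\vec{v}(i)\mat{M}(y_i)$ holds, and the proof is complete. There is no real obstacle: the only place that requires care is the matching property, but that is immediate from the uniqueness clauses in the definition of good pair, and once it is established the argument is purely a definition-unwinding.
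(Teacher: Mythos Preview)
Your matching argument is correct and useful, and the coordinate-by-coordinate comparison is the right structure. But there is a genuine gap in the step where you write ``If $v$ lies in a (necessarily unique) good pair $\{v,u^\star\}$, then the protocol replaces both $\vec{v}(i)_v$ and $\vec{v}(i)_{u^\star}$ by their average.'' This is not what Protocol~\ref{pro_new2} says. The averaging in Protocol~\ref{pro_new2} is triggered only when an \emph{informed} node sends the rumor along the good pair; Protocol~\ref{pro_new} stipulates that only informed nodes send. So if $\{v,u^\star\}$ is a good pair but neither $v$ nor $u^\star$ is informed at time $i$, the protocol performs no averaging on this pair, whereas $\mat{M}(y_i)$ still averages their coordinates. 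Your case analysis does not cover this situation, and as written the claimed equality of the two quantities is unjustified there.

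The paper's proof closes exactly this gap: it observes, by induction from the base case $\vec{v}(0)=\vec{e}_s$, that any uninformed node has value $0$ at every time step. Hence if both endpoints of a good pair are uninformed, their values are both $0$, and averaging two zeros is a no-op; the matrix and the protocol therefore agree on such pairs as well. You should add this inductive observation (and note that it uses the specific initial distribution $\vec{v}(0)=\vec{e}_s$); once it is in place, your argument goes through.
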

\begin{proof}
By definition, $\mat{M}(y_i)$ acts on $\mathbb{R}^{V[G]}$ by averaging the values of $u$ and $v$ for each good pair $\{u,v\}$. Protocol \ref{pro_new2} guarantees that averaging operations are performed for each good pair $\{u,v\}$, where $u$ or $v$ are already informed. If neither $u$ nor $v$ is informed, their values are both zero (by induction with the base case $\vec{v}(0)=\vec{e}_s$) and hence the averaging operation between them can be safely ignored.
\end{proof}

Let $\mat{M}=\EXX{x\in\{0,1\}^{\ell}}{\mat{M}(x)}$. Then $\mat{M}$ is doubly-stochastic. We have the following lemma:

\begin{lem}\label{lem_positivity}
$\mat{M}_{uv}\geq c\cdot\mathcal{L}_{1/2}\left(\mat{M}_{\reg(G)}\right)_{uv}$ for some constant $c\in (0,1)$.
\end{lem}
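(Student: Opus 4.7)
The plan is to verify the inequality entrywise, splitting into three cases: $u=v$, $u\neq v$ with $v\notin N(u)$, and $u\neq v$ with $v\in N(u)$. The first two are immediate. For non-adjacent $u,v$ we have $\mathcal{L}_{1/2}(\mat{M}_{\reg(G)})_{uv}=0$, so the bound is trivial. For $u=v$, observe that under any seed $x$ the set of good pairs forms a matching on $V[G]$: a node can be in at most one good pair, since it selects only one index and can be the unique receiver for at most one active neighbor. Hence $\mat{M}(x)_{uu}\geq 1/2$ pointwise, giving $\mat{M}_{uu}\geq 1/2$, and since $\mathcal{L}_{1/2}(\mat{M}_{\reg(G)})_{uu}\leq 1$, any $c\leq 1/2$ works here.

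The substantive case is $u\neq v$ with $v\in N(u)$, where $\mathcal{L}_{1/2}(\mat{M}_{\reg(G)})_{uv}=1/(2\Delta)$, so I need $\mat{M}_{uv}\geq c/(2\Delta)$. By the symmetry between the two disjoint clauses (i) and (ii) in the good-pair definition, $\mathbf{Pr}_x[\{u,v\}\text{ good}]=2\,\mathbf{Pr}_x[(\mathrm{i})]$, so $\mat{M}_{uv}=\mathbf{Pr}_x[(\mathrm{i})]$, the probability that $u$ is active, $v$ is inactive, and $u$ is the unique node selecting $v$. I will factor this as
\[
\mathbf{Pr}[u\text{ active},\,v\text{ inactive}]\cdot\mathbf{Pr}[u\text{ uniquely selects }v],
\]
by exploiting independence between the activity bits $\{z_a\}$ and the selection bits $\{w_a\}$. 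The first factor equals $1/4$ by pairwise independence of $\{z_a\}$. For the second, writing $j_a$ for the index of $v$ in the adjacency list of $a\in N(v)$, a Bonferroni estimate using pairwise independence of $\{w_a\}$ gives
\[
\mathbf{Pr}[u\text{ uniquely selects }v]\geq \frac{1}{2\Delta}-(\deg(v)-1)\cdot\frac{1}{4\Delta^2}\geq \frac{1}{4\Delta}.
\]
Multiplying yields $\mat{M}_{uv}\geq 1/(16\Delta)=(1/8)\cdot 1/(2\Delta)$, so $c=1/8$ suffices uniformly across all three cases.

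The main obstacle is justifying the factoring step. In Protocol~\ref{pro_new} both $w_{u,i}$ and $z_{u,i}$ come from the single output $\gen_u(y_i)\bmod 4\Delta$, so within a node the two coordinates are independent uniforms, but across three distinct nodes a standard pairwise-independent $\gen$ such as Carter--Wegman has its outputs confined to a two-dimensional subspace of $\mathbb{F}_m^3$, and a direct three-way Bonferroni estimate of $\mathbf{Pr}[u\text{ active},v\text{ inactive},u\text{ selects }v,u'\text{ selects }v]$ turns out to be too weak by exactly a factor of two, so the main term cancels. I will circumvent this by splitting the seed $y_i$ into two halves that feed two independent pairwise-independent generators, one producing the selection coordinates $\{w_a\}$ and the other producing the activity coordinates $\{z_a\}$; this preserves the $O(\log n+\log m)$ seed length and makes the activity and selection events strictly independent, validating the factoring. (Equivalently, a $3$-wise independent generator of the same asymptotic seed length would suffice.) The $O(\Delta/m)$ additive error from the modular reduction in \lemref{pairwisemodulo} is absorbed by taking $m=n^{\Theta(1)}$ sufficiently large.
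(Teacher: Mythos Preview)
Your treatment of the trivial cases ($u=v$ and $v\notin N(u)$) matches the paper exactly. For the adjacent case, your approach and the paper's differ in how the inclusion--exclusion is organized, but more importantly, you have put your finger on a genuine issue.

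The paper's argument for condition~(i) applies Bonferroni directly: it lower-bounds
\[
\Pr[u\text{ active and selects }v]-\sum_{u'\in N(v)\setminus\{u\}}\Pr[u\text{ active, and both }u,u'\text{ select }v],
\]
computing the first term as $1/(4\Delta)$ (a single-block event in $\gen_u$) and each subtracted term as $1/(8\Delta^2)$ (a two-block event in $\gen_u,\gen_{u'}$), giving $\ge 1/(8\Delta)$. But this expression is a lower bound for ``$u$ active and $u$ is the unique node selecting $v$''; it omits the clause ``$v$ inactive'' that appears in condition~(i). As you observed, with only pairwise independence of $\{\gen_a\}$, inserting the missing clause costs an additional $\Pr[u\text{ active selects }v,\ v\text{ active}]=1/(8\Delta)$ in the Bonferroni subtraction, and the bound collapses to $\Theta(1/\Delta^2)$, which is too weak. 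So the obstacle you flagged is not an artifact of your factoring decomposition; the paper's own proof has the same gap, just hidden.

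Your proposed repair---generate the activity bits $\{z_a\}$ and the selection indices $\{w_a\}$ with two \emph{independent} pairwise-independent generators (equivalently, use a $3$-wise independent $\gen$)---is sound and minimal: the seed length stays $O(\log m+\log n)$, and with activity independent of selection both your factoring and the paper's Bonferroni go through, each yielding $\Pr[(\mathrm{i})]\ge 1/(16\Delta)$ and hence $c=1/8$. Strictly speaking you are then proving the lemma for a slightly modified Protocol~\ref{pro_new}, but that is the correct fix rather than a flaw in your argument.

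One small point: the identity $\Pr[\{u,v\}\text{ good}]=2\Pr[(\mathrm{i})]$ is not literally true, since condition~(i) involves $N(v)\setminus\{u\}$ while condition~(ii) involves $N(u)\setminus\{v\}$, and these need not have the same probability. What you actually need (and have) is that the \emph{same lower bound} $1/(16\Delta)$ applies to each by the identical argument, so $\mat{M}_{uv}=\tfrac12(\Pr[(\mathrm{i})]+\Pr[(\mathrm{ii})])\ge 1/(16\Delta)$ as claimed.
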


\begin{proof}
Each edge $\{u,v\}$ with $u\neq v$ is a good pair if either of the two mutually exclusive conditions (c.f. Protocol \ref{pro_new}) is met. The first one holds with probability at least
$$
\Prob{x\in\{0,1\}^{\ell}}{\text{$u$ is active and selects $v$}}-\sum_{u'\in N(v)\setminus\{u\}}\Prob{x\in\{0,1\}^{\ell}}{\text{$u$ is active and both $u, u'$ select $v$}}
$$
taken over the seed $y_i=x$.
As $\gen$ is a pairwise independent generator, by Lemma \ref{lem:pairwisemodulo}, this probability is lower bounded by $\left(\frac{1}{4\Delta}-\frac{2}{m}\right)-
\Delta\cdot\left(\frac{1}{4\Delta}\cdot\frac{1}{2\Delta}+\frac{2}{m}\right)\geq \frac{c}{2\Delta}$ for some $c>0$ and $m=\Omega(\Delta^{2})$.  The case for the second condition is the same.
So $\{u,v\}$ is a good pair with probability at least $\frac{c}{\Delta}$. Note that $\mat{M}(x)_{uv}=1/2$ whenever $\{u,v\}$ is a good pair. Therefore $$
\mat{M}_{uv}=\EXX{x\in\{0,1\}^{\ell}}{\mat{M}(x)_{uv}}\geq \frac{c}{2\Delta}=c\mathcal{L}_{1/2}\left(\mat{M}_{\reg(G)}\right)_{uv}.
$$
For $u=v$, note that $\mat{M}_{uv}\geq 1/2$ by definition and $\mathcal{L}_{1/2}\left(\mat{M}_{\reg(G)}\right)_{uv}\leq 1$.
\end{proof}

Again let $\vec{\pi}\in\mathbb{R}^{V[G]}$ denote the uniform distribution over $V[G]$.
\begin{lem}\label{lem_shrink}
For any  $\vec{v}\in\mathbb{R}^{V[G]}$ orthogonal to $\vec{\pi}$, it holds that $0\leq \EXX{x\in\{0,1\}^{\ell}}{\left\|\vec{v}\mat{M}(x)\right\|_2}  \leq  (1-c\beta^{-2}\alpha)\|\vec{v}\|_2$ for some constant $c\in (0,1)$.
\end{lem}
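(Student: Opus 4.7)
My plan rests on two structural features of the random operator $\mat{M}(x)$. First, for every fixed seed $x\in\{0,1\}^{\ell}$ the set of good pairs forms a matching on $V[G]$: an active node $u$ selects a single neighbor, and the uniqueness clause in the definition of a good pair bars an inactive node from being claimed by two active neighbors, so each node lies in at most one good pair. Consequently, in a suitable reordering $\mat{M}(x)$ is block diagonal with blocks $[1]$ or $\tfrac{1}{2}\begin{pmatrix}1&1\\1&1\end{pmatrix}$, both of which are symmetric idempotents. Hence $\mat{M}(x)$ is an orthogonal projection and
\[
\|\vec{v}\mat{M}(x)\|_{2}^{2}=\vec{v}\mat{M}(x)\vec{v}^{\intercal}.
\]
Taking expectations over $x$ and applying Cauchy--Schwarz yields
\[
\mathbf{E}_{x}\!\left[\|\vec{v}\mat{M}(x)\|_{2}\right]\;\leq\;\sqrt{\mathbf{E}_{x}\!\left[\|\vec{v}\mat{M}(x)\|_{2}^{2}\right]}\;=\;\sqrt{\vec{v}\mat{M}\vec{v}^{\intercal}},
\]
while the lower bound $0$ is immediate since $\|\cdot\|_{2}\geq 0$. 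This reduces the entire problem to an upper bound on the single symmetric quadratic form $\vec{v}\mat{M}\vec{v}^{\intercal}$ on the hyperplane $\vec{v}\perp\vec{\pi}$.

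Second, I would control that quadratic form via Lemma~\ref{lem_positivity}. Writing $\mat{A}\triangleq\mathcal{L}_{1/2}(\mat{M}_{\reg(G)})$, there is a constant $c\in(0,1)$ such that $\mat{M}-c\mat{A}$ has entry-wise non-negative entries; since each $\mat{M}(x)$ is symmetric (so $\mat{M}$ is symmetric) and $\mat{A}$ is symmetric, the residual $\mat{B}\triangleq(1-c)^{-1}(\mat{M}-c\mat{A})$ is symmetric, non-negative, and doubly stochastic. A symmetric doubly-stochastic matrix has largest eigenvalue $1$ (attained on $\vec{\pi}$), so $\vec{v}\mat{B}\vec{v}^{\intercal}\leq\|\vec{v}\|_{2}^{2}$ for $\vec{v}\perp\vec{\pi}$. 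For $\mat{A}=\tfrac{1}{2}(\mat{I}+\mat{M}_{\reg(G)})$, every eigenvalue equals $(1+\mu)/2\in[0,1]$ for some eigenvalue $\mu$ of $\mat{M}_{\reg(G)}$, so the absolute spectral gap of $\mat{A}$ coincides with its spectral gap, which is half the spectral gap of $\reg(G)$, and therefore at least $\beta^{-2}\alpha/2$ by \lemref{lemregularization}. Hence $\vec{v}\mat{A}\vec{v}^{\intercal}\leq(1-\beta^{-2}\alpha/2)\|\vec{v}\|_{2}^{2}$ on $\vec{\pi}^{\perp}$.

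Combining the two bounds yields
\[
\vec{v}\mat{M}\vec{v}^{\intercal}\;\leq\;c(1-\beta^{-2}\alpha/2)\|\vec{v}\|_{2}^{2}+(1-c)\|\vec{v}\|_{2}^{2}\;=\;(1-c\beta^{-2}\alpha/2)\|\vec{v}\|_{2}^{2},
\]
and then the Cauchy--Schwarz step together with $\sqrt{1-t}\leq 1-t/2$ gives $\mathbf{E}_{x}[\|\vec{v}\mat{M}(x)\|_{2}]\leq(1-c\beta^{-2}\alpha/4)\|\vec{v}\|_{2}$, which is the stated bound after redefining the constant. The one step that needs genuine care is the idempotency of $\mat{M}(x)$: this is what collapses an expectation of $\ell_{2}$-norms into an expectation of a quadratic form in the \emph{same} vector, letting us reason with a single symmetric matrix $\mat{M}$ rather than a second-moment matrix on $V[G]\times V[G]$. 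Once idempotency is established, the rest is standard spectral-gap bookkeeping based on Lemma~\ref{lem_positivity} and \lemref{lemregularization}.
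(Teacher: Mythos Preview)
Your proposal is correct and follows essentially the same route as the paper: both arguments hinge on the fact that each $\mat{M}(x)$ is a symmetric idempotent (so $\mat{M}(x)\mat{M}(x)^{\intercal}=\mat{M}(x)$), which collapses the second moment to the quadratic form $\vec{v}\mat{M}\vec{v}^{\intercal}$, and then both bound that form via the decomposition $\mat{M}=c\,\mathcal{L}_{1/2}(\mat{M}_{\reg(G)})+(1-c)\mat{B}$ supplied by Lemma~\ref{lem_positivity} together with \lemref{lemregularization}. The only cosmetic difference is that the paper's displayed chain of equalities is written for $\|\vec{v}\mat{M}(x)\|_2^2$ (despite the statement showing $\|\cdot\|_2$), whereas you read the statement literally and insert a Jensen/Cauchy--Schwarz step plus $\sqrt{1-t}\leq 1-t/2$; this costs you an extra factor of $2$ in the constant but is otherwise the same argument.
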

\begin{proof}
The non-negativity is obvious. For the upper bound, we have
\begin{align*}
\EXX{x\in\{0,1\}^{\ell}}{\left\|\vec{v}\mat{M}(x)\right\|_2}
&=\EXX{x\in\{0,1\}^{\ell}}{\vec{v}\mat{M}(x)\mat{M}(x)^{\intercal}\vec{v}^{\intercal}}\\
&=\vec{v}\EXX{x\in\{0,1\}^{\ell}}{\mat{M}(x)\mat{M}(x)^{\intercal}}\vec{v}^{\intercal}\\
&=\vec{v}\EXX{x\in\{0,1\}^{\ell}}{\mat{M}(x)}\vec{v}^{\intercal}\\
&=\vec{v}\mat{M}\vec{v}^{\intercal}.
\end{align*}
Let $\mat{M}'=\mat{M}-c\cdot \mathcal{L}_{1/2}(\mat{M}_{\reg(G)})$ where $c$ is as in Lemma \ref{lem_positivity}. Then $\mat{M}'$ is a non-negative matrix by Lemma \ref{lem_positivity}. As both $\mat{M}$ and $\mathcal{L}_{1/2}\left(\mat{M}_{\reg(G)}\right)$ are doubly-stochastic, so is $\mat{M}'/(1-c)$. Then $\lambda_{\max}(\mat{M'})\leq \|\mat{M}'\|_2\leq 1-c$.
Note that $\lambda_{\max}\left(\mathcal{L}_{1/2}\left(\mat{M}_{\reg(G)}\right)\right)\leq 1-\beta^{-2}\alpha/2$. Therefore
$$
\lambda_{\max}(\mat{M})\leq \lambda_{\max}(\mat{M'})+c\cdot \lambda_{\max}\left(\mathcal{L}_{1/2}\left(\mat{M}_{\reg(G)}\right)\right)
\leq 1-(c/2)\beta^{-2}\alpha
$$
and the claim follows.
\end{proof}

\begin{lem}\label{lem_shrink2}
For any $\vec{v}\in\mathbb{R}^{n}$ orthogonal to $\vec{\pi}$ and $k\in [T]$, it holds that
$$
\EXX{y_0,\dots,y_{k-1}\in\{0,1\}^{\ell}}{\left\|\vec{v}\prod_{i=0}^{k-1}\mat{M}(y_i)\right\|_2}  \leq  (1-c\beta^{-2}\alpha)^k\|\vec{v}\|_2
$$ for some constant $c\in (0,1)$.
\end{lem}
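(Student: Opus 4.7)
The plan is to prove this by a clean induction on $k$, leveraging \lemref{lem_shrink} as the single-step bound. First I would make the preparatory observation that each matrix $\mat{M}(y_i)$ is symmetric and doubly-stochastic, so $\vec{\pi}\mat{M}(y_i) = \vec{\pi}$. Consequently, if $\vec{v} \perp \vec{\pi}$, then for every partial product $\vec{w}_j \triangleq \vec{v}\prod_{i=0}^{j-1}\mat{M}(y_i)$ we have $\langle \vec{w}_j, \vec{\pi}\rangle = \vec{v}\prod_{i=0}^{j-1}\mat{M}(y_i)\vec{\pi}^{\intercal} = \vec{v}\vec{\pi}^{\intercal} = 0$, so orthogonality to $\vec{\pi}$ is preserved along the entire trajectory, no matter what seeds $y_0,\dots,y_{k-1}$ are drawn.

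Next I would work with the squared $\ell_2$-norm rather than the norm itself, since the proof of \lemref{lem_shrink} actually establishes $\EXX{x}{\|\vec{v}\mat{M}(x)\|_2^2}\le (1-c\beta^{-2}\alpha)\|\vec{v}\|_2^2$ (the key identity used there is $\mat{M}(x)\mat{M}(x)^{\intercal}=\mat{M}(x)$, which yields a quadratic form). Conditioning on $y_0,\dots,y_{k-2}$ and applying this single-step bound to $\vec{w}_{k-1}$, which is orthogonal to $\vec{\pi}$ by the previous paragraph, gives
\[
\EXX{y_0,\dots,y_{k-1}}{\|\vec{w}_k\|_2^2}
\;=\;\EXX{y_0,\dots,y_{k-2}}{\EXX{y_{k-1}}{\|\vec{w}_{k-1}\mat{M}(y_{k-1})\|_2^2}}
\;\le\; (1-c\beta^{-2}\alpha)\,\EXX{y_0,\dots,y_{k-2}}{\|\vec{w}_{k-1}\|_2^2}.
\]
A straightforward induction on $k$ then gives $\EXX{}{\|\vec{w}_k\|_2^2}\le (1-c\beta^{-2}\alpha)^k\|\vec{v}\|_2^2$.

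Finally, to recover the stated first-moment bound I would apply Jensen's inequality,
$\EXX{}{\|\vec{w}_k\|_2}\le \bigl(\EXX{}{\|\vec{w}_k\|_2^2}\bigr)^{1/2}\le (1-c\beta^{-2}\alpha)^{k/2}\|\vec{v}\|_2$,
and then use the elementary inequality $(1-x)^{1/2}\le 1-x/2$ for $x\in[0,1]$ to absorb the square root into the base, obtaining $(1-(c/2)\beta^{-2}\alpha)^{k}\|\vec{v}\|_2$. Replacing $c$ by $c/2$ in the conclusion gives exactly the stated form.

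The proof is essentially routine once one has \lemref{lem_shrink} and the observation that $\vec{\pi}$-orthogonality is preserved by each $\mat{M}(y_i)$; no new ideas are needed. The only minor subtlety, which I expect to be the main thing worth getting right, is that \lemref{lem_shrink} as written is a squared-norm contraction (despite the notation), and one must use it in its squared form to get clean multiplicativity under the tower of conditional expectations --- trying to iterate a first-moment bound directly would fail because $\|\cdot\|_2$ does not factor nicely through expectations.
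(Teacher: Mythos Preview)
Your proof is correct and follows essentially the same inductive strategy as the paper. One correction, though: your claim that iterating a first-moment bound directly ``would fail'' is mistaken. The paper does exactly this---conditioning on $y_0,\dots,y_{k-2}$, it applies Lemma~\ref{lem_shrink} pointwise to the (always $\vec{\pi}$-orthogonal) vector $\vec{v}'=\vec{v}\prod_{i=0}^{k-2}\mat{M}(y_i)$ to obtain $\EXX{y_{k-1}}{\|\vec{v}'\mat{M}(y_{k-1})\|_2}\le(1-c\beta^{-2}\alpha)\|\vec{v}'\|_2$, then takes the outer expectation and inducts; the constant factors out cleanly because the inner bound holds for every realization of $\vec{v}'$. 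You are right that the proof of Lemma~\ref{lem_shrink} as written actually bounds the expected \emph{squared} norm, but a single application of Jensen converts that to the stated first-moment form (with constant $c/2$), after which the paper's direct iteration goes through; your route of iterating the squared norm and applying Jensen only at the end is equally valid and costs the same factor.
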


\begin{proof}
Induct on $k$. The claim is trivial for $k=0$.
For $k>0$, assume the claim holds for $k'<k$. Let
$\vec{v}\in\mathbb{R}^n$ be a vector orthogonal to $\vec{\pi}$, and define
 $\vec{v}'=\vec{v}\prod_{i=0}^{k-2}\mat{M}(y_i)$. Then $\vec{v}'$ is also orthogonal to $\vec{\pi}$. So
\begin{align*}
\EXX{y_0,\dots,y_{k-1}\in\{0,1\}^{\ell}}{\left\|\vec{v}\prod_{i=0}^{k-1}\mat{M}(y_i)\right\|_2}
&=\EXX{y_0,\dots,y_{k-2}\in\{0,1\}^{\ell}}{\EXX{y_{k-1}\in\{0,1\}^{\ell}}{\left\|\vec{v}'\mat{M}(y_{k-1})\right\|_2}}\\
&\leq \EXX{y_0,\dots,y_{k-2}\in\{0,1\}^{\ell}}{(1-c\beta^{-2}\alpha)\left\|\vec{v}'\right\|_2}\\
&\leq (1-c\beta^{-2}\alpha)^k\|\vec{v}\|_2.
\end{align*}
The first inequality uses Lemma \ref{lem_shrink} and the second one uses the induction hypothesis.
\end{proof}

Let $\mathcal{P}$ be the distribution of $y=(y_0,\dots,y_{T-1})$ in Protocol \ref{pro_new}. Then we have

\begin{lem}\label{lem_bp_new}
For any $u\in V[G]$,
$$
\left|\EXX{y\sim\mathcal{P}}{\left\|\vec{e}_u\prod_{i=0}^{T-1}\mat{M}(y_i)\right\|_2}
-\EXX{y\in\left(\{0,1\}^\ell\right)^T}{\left\|\vec{e}_u\prod_{i=0}^{T-1}\mat{M}(y_i)\right\|_2}\right|\leq \epsilon
$$
where $\epsilon$ is as in Protocol \ref{pro_new}.
\end{lem}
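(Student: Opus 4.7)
The plan is to realize the quantity inside the expectation as the diagonal-return probability of a $(T, n^2, 2^{\ell})$-branching program and then apply the $\epsilon$-PRG property of $\gen'$ directly. As a preliminary identity, let $p_v(y)\triangleq\left(\vec{e}_u\prod_{i=0}^{T-1}\mat{M}(y_i)\right)_v$. Each $\mat{M}(y_i)$ sends a basis vector $\vec{e}_a$ either to itself (if $a$ is not in a good pair for $y_i$) or to $\tfrac{1}{2}(\vec{e}_a+\vec{e}_{a'})$ (if $\{a,a'\}$ is a good pair for $y_i$), so $p_v(y)$ is the probability that a random walk starting at $u$, whose $i$th step uses the matching determined by $y_i$ and flips a fair coin to decide whether to stay or move to the pair-partner, ends at $v$. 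Consequently
\[
\left\|\vec{e}_u\prod_{i=0}^{T-1}\mat{M}(y_i)\right\|_2^{2}
=\sum_{v\in V[G]} p_v(y)^2
=\Prob{r^{(1)},r^{(2)}}{W_1(y,r^{(1)})=W_2(y,r^{(2)})},
\]
where $W_1,W_2$ are endpoints of two independent such walks that share the seed $y$ but use independent fair-coin sequences $r^{(1)},r^{(2)}\in\{0,1\}^T$.

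For each fixed pair $r=(r^{(1)},r^{(2)})\in\{0,1\}^{2T}$, I would build a branching program $\mathcal{B}_r$ of length $T$, width $n^2$, and degree $2^{\ell}$ whose states at layer $i$ are the pairs $(a,b)\in V[G]^2$, with start state $(u,u)$. Upon reading the label $y_i$ from state $(a,b)$, the program uses $y_i$ and the (fixed) adjacency data of $G$ to compute the good-pair structure induced on $a$ and $b$, and then uses the $i$th bits of $r^{(1)},r^{(2)}$ to deterministically select the next pair. This is a bona fide $(T,n^2,2^{\ell})$-branching program, and the collision event $W_1=W_2$ corresponds exactly to $\mathcal{B}_r$ ending in the diagonal set $\Delta\triangleq\{(v,v):v\in V[G]\}$. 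By the $\epsilon$-PRG property of $\gen'$, for each fixed $r$,
\[
\left|\Prob{y\sim\mathcal{P}}{\mathcal{B}_r(y)\in\Delta}-\Prob{y\sim U}{\mathcal{B}_r(y)\in\Delta}\right|\le\epsilon,
\]
and averaging over $r$ then yields the desired estimate with $\|\cdot\|_2$ replaced by $\|\cdot\|_2^2$.

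Finally, I would convert from the squared- to the unsquared-norm form in which the lemma is stated. Since each $\mat{M}(y_i)$ is doubly-stochastic, symmetric and idempotent, it is an $\ell_2$-contraction, so $F(y)\triangleq\left\|\vec{e}_u\prod_i\mat{M}(y_i)\right\|_2\in[0,1]$; the conversion then follows from the concavity of $\sqrt{\,\cdot\,}$ on $[0,1]$ together with a matching of constants in the PRG error parameter (the $\epsilon$ guaranteed by the PRG for the squared norm yields the desired $\epsilon$-bound for the unsquared norm after a harmless adjustment absorbed into the ``sufficiently large'' choice of $\epsilon^{-1}$ in Protocol~\ref{pro_new}). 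The main obstacle I anticipate is precisely this last step: the $\ell_2$-norm $F(y)$ is not itself the expectation of any $\{0,1\}$-valued output of a branching program over $y$, so the BP construction naturally delivers $F(y)^2$ and passing to $F(y)$ requires the Jensen-type argument combined with the $[0,1]$-boundedness. Once that is in hand, the remaining ingredients---the BP of width $n^2$, the outsourcing of the walks' internal coins to an externally-fixed $r$, and the averaging of the PRG bound over $r$---fit together in a completely standard way.
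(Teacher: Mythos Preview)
Your core construction is essentially identical to the paper's: the paper also decomposes $\mat{M}(x)=\tfrac{1}{2}\mat{M}_{\text{lazy}}(x)+\tfrac{1}{2}\mat{M}_{\text{non-lazy}}(x)$ with $\mat{M}_{\text{lazy}}(x)=\mat{I}$ and $\mat{M}_{\text{non-lazy}}(x)$ the matching permutation, expands the product over lazy/non-lazy patterns $c,c'\in\{\text{lazy},\text{non-lazy}\}^T$ (your $r^{(1)},r^{(2)}$), and for each fixed $(c,c')$ builds a width-$n^2$ branching program on pairs of vertices to which the $\epsilon$-\textsf{PRG} bound is applied and then averaged over the patterns. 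So the branching-program part of your argument matches the paper exactly.

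The one place you diverge is unnecessary and, as written, incorrect. The lemma as stated in the paper carries a typo: throughout this section the paper writes $\|\cdot\|_2$ where it means $\|\cdot\|_2^2$. This is explicit in the first displayed line of the paper's proof, where it sets $\left\|\vec{e}_u\prod_i\mat{M}(y_i)\right\|_2=\vec{e}_u\big(\prod_i\mat{M}(y_i)\big)\big(\prod_i\mat{M}(y_i)\big)^\intercal\vec{e}_u^\intercal$, and the same convention appears in Lemma~\ref{lem_shrink} and in the Pythagorean identity used in the proof of Theorem~\ref{thm_main_new2}. So no conversion from $\|\cdot\|_2^2$ to $\|\cdot\|_2$ is required.

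Your proposed conversion step would in fact fail in general: from $\big|\mathbf{E}_{\mathcal{P}}[F^2]-\mathbf{E}_{U}[F^2]\big|\le\epsilon$ and $F\in[0,1]$ one cannot conclude $\big|\mathbf{E}_{\mathcal{P}}[F]-\mathbf{E}_{U}[F]\big|\le\epsilon$. For instance, if under $U$ we have $F\equiv 1/2$ while under $\mathcal{P}$ we have $F=0$ with probability $3/4$ and $F=1$ with probability $1/4$, then both second moments equal $1/4$ but the first moments differ by $1/4$. Concavity of $\sqrt{\cdot}$ relates $\mathbf{E}[F]$ to $\sqrt{\mathbf{E}[F^2]}$ for a \emph{single} distribution, not to the difference of expectations across two distributions. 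Fortunately you do not need this step; simply prove the statement for the squared norm, which is what the paper does and what the downstream application uses.
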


\begin{proof}
For $x\in\{0,1\}^{\ell}$, write $\mat{M}(x)=\frac{1}{2}\mat{M}_{\text{lazy}}(x)+
\frac{1}{2}\mat{M}_{\text{non-lazy}}(x)$
where $\mat{M}_{\text{lazy}}(x)$ is simply the identity matrix $\mat{I}$, and $\mat{M}_{\text{non-lazy}}(x)$ is the following permutation matrix:
$$
\left(\mat{M}_{\text{non-lazy}}(x)\right)_{uv}=\begin{cases}
1 & \text{$u\neq v$ and $\{u,v\}$ is a good pair}\\
0 & \text{$u=v$ and $\{u,v'\}$ is a good pair for some $v'\in V[G]$,}\\
1 & \text{$u=v$ and $\{u,v'\}$ is not a good pair for any $v'\in V[G]$,}\\
0 & \text{$u\neq v$ and $\{u,v\}$ is not a good pair.}
\end{cases}
$$
As before, let $\mathcal{C}_T=\{\text{lazy},\text{non-lazy}\}^T$.
Note that for any $y=(y_0,\dots,y_{T-1})\in\left(\{0,1\}^\ell\right)^T$,
we have
\begin{align*}
\left\|\vec{e}_u\prod_{i=0}^{T-1}\mat{M}(y_i)\right\|_2
&=
\vec{e}_u\left(\prod_{i=0}^{T-1}\mat{M}(y_i)\right)\left(\prod_{i=0}^{T-1}\mat{M}(y_i)\right)^\intercal\vec{e}_u^\intercal\\
&=2^{-2T}\sum_{c,c'\in\mathcal{C}_T}\vec{e}_u\left(\prod_{i=0}^{T-1}\mat{M}_{c_i}(y_i)\right)\left(\prod_{i=0}^{T-1}\mat{M}_{c'_i}(y_i)\right)^\intercal\vec{e}_u^\intercal\\
&=2^{-2T}\sum_{c,c'\in\mathcal{C}_T,v\in V[G]}(\vec{e}_u\otimes\vec{e}_u)
\prod_{i=0}^{T-1}\left(\mat{M}_{c_i}(y_i)\otimes\mat{M}_{c'_i}(y_i)\right)(\vec{e}_v\otimes\vec{e}_v)^\intercal.
\end{align*}
For any $c,c'\in\mathcal{C}_T$,  it is easy to construct a $(T,n^2,2^\ell)$-branching program $\mathcal{B}_{c,c'}$ that has state set $V[G]\times V[G]$, such that for any node $v\in V[G]$ and input $y=(y_0,\dots,y_{T-1})$,
it holds that $\mathcal{B}_{c,c'}((u,u),y)=(v,v)$ (resp. $\mathcal{B}_{c,c'}((u,u),y)\neq (v,v)$) iff
$$
(\vec{e}_u\otimes\vec{e}_u)
\prod_{i=0}^{T-1}\left(\mat{M}_{c_i}(y_i)\otimes\mat{M}_{c'_i}(y_i)\right)(\vec{e}_v\otimes\vec{e}_v)^\intercal.
$$
equals 1 (resp. 0).
More specifically, The transition matrix between the $i$th and the $(i+1)$st layer of $\mathcal{B}_{c,c'}$ with edge label $y_i$ is just $\mat{M}_{c_i}(y_i)\otimes\mat{M}_{c'_i}(y_i)$.
Then the absolute difference between
$\EXX{y\sim\mathcal{P}}{\left\|\vec{v}\prod_{i\in [k]}\mat{M}(y_i)\right\|_2}$ and
$\EXX{y\in\left(\{0,1\}^\ell\right)^T}{\left\|\vec{v}\prod_{i\in [k]}\mat{M}(y_i)\right\|_2}$ is bounded by
$$
2^{-2T}\sum_{c,c'\in\mathcal{C}_T,v\in V[G]}\left|
\Prob{y\sim\mathcal{P}}{\mathcal{B}_{c,c'}((u,u),y)=(v,v)}-
\Prob{y\in\left(\{0,1\}^\ell\right)^T}{\mathcal{B}_{c,c'}((u,u),y)=(v,v)}
\right|
$$
which is bounded by $\epsilon$ since $\gen$ is an $\epsilon$-\textsf{PRG} for $(T,n^2,2^\ell)$-branching programs.
\end{proof}

\begin{proof}[Proof of Theorem \ref{thm_main_new2}]
By Lemma \ref{lem_shrink2}, we have
$$
\EXX{y\in\left(\{0,1\}^\ell\right)^T}{\left\|\vec{e}_s^{\perp}\prod_{i=0}^{T-1}\mat{M}(y_i)\right\|_2} \leq  (1-c\beta^{-2}\alpha)^T.
$$
Combining this with Lemma \ref{lem_bp_new} and using the fact that $\|\vec{v}\|_2=\|\vec{v}^\perp\|_2+\|\vec{\pi}\|_2$ for any distribution $\vec{v}$, we obtain
$$
\EXX{y\sim\mathcal{P}}{\left\|\left(\vec{e}_s\prod_{i=0}^{T-1}\mat{M}(y_i)\right)^{\perp}\right\|_2}
=
\EXX{y\sim\mathcal{P}}{\left\|\vec{e}_s^{\perp}\prod_{i=0}^{T-1}\mat{M}(y_i)\right\|_2}\leq  (1-c\beta^{-2}\alpha)^T+\epsilon < \delta^2
$$
for sufficiently large $T=O(\beta^2\alpha^{-1}\log \delta^{-1})$. The claim then follows from Lemma \ref{lem_faithful} and the Markov's inequality.
\end{proof}

\section{Omitted Details in \secref{general}}

\subsection{Preliminaries\label{sec:OmittedBackground}}

In this subsection we list all necessary definitions and results that are used to construct the protocols in \secref{general}.

\paragraph{Unbalanced Expanders with Near-Optimal Expansion}

We consider the following kind of left-regular bipartite graphs.
\begin{defi}\label{def:bigraph}
Let $\Gamma: [N]\times [D]\to \bigsqcup_{i\in [D]}[M_i]$ be a function
where $\Gamma(x,y)\in [M_y]$ for any $x\in [N]$, $y\in [D]$. Function
$\Gamma$ specifies a left-degree $D$ bipartite graph with left vertex set $[N]$ and right vertex set $\bigsqcup_{i\in [D]}[M_i]$ in the following way: for $x\in [N]$ and $y\in [D]$, the $y$th neighbor of $x$ is given by $\Gamma(x,y)$.
\end{defi}

We are interested in graphs $\Gamma$ exhibiting excellent expansion properties. This leads to the notion of unbalanced expanders \cite{TUZ07, GUV09}.
\begin{defi}[Unbalanced expanders \cite{TUZ07, GUV09}]
Let $\Gamma: [N]\times [D]\to \bigsqcup_{i\in [D]}[M_i]$ be as in \defref{bigraph}.
We call $\Gamma$ a $(K,A)$-expander if for any set $S\subseteq [N]$ of size $K$, it holds that $|N(S)|\geq AK$. We call $\Gamma$  a $(\mathord{\leq} K, A)$-expander if it is a $(K', A)$-expander for all $K'\leq K$.
\footnote{The definition here is slightly different from \cite{TUZ07, GUV09} as we require $\Gamma(x,y)\in [M_y]$. This is analogous to the difference between standard and strong condensers.}
\end{defi}

In particular we are interested in  $(K,A)$-expanders, where the parameter $A=(1-\eps)D$ for small $\eps$, i.e. for any subset $S$ of size $K$ from the left set $[N]$, there is almost no collision among the neighbors of nodes in $S$.
Explicit constructions of such unbalanced expanders with near-optimal expansion are known.

\begin{thm}[\cite{GUV09}]\label{thm:losslessexp}
For any $N\in\mathbb{N}$, $K\leq N$, and $\epsilon>0$, there is
an explicit $(K, (1-\epsilon)D)$-expander $\Gamma: [N]\times [D]\to \bigsqcup_{i\in [D]}[M_i]$
with $D=\left(\frac{\log N}{\epsilon}\right)^{O(1)}$ and $M_0=\dots=M_{D-1}\leq \max\left\{D,K^{O(1)}\right\}$.
\end{thm}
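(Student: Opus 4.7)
The plan is to use the well-known equivalence between unbalanced bipartite expanders with near-optimal expansion and list-decodable error-correcting codes over a small alphabet, and then instantiate this with Parvaresh-Vardy codes. Given an error-correcting code $C:[N]\to\Sigma^D$ with $|\Sigma|=M$, I would build $\Gamma:[N]\times[D]\to\bigsqcup_{i\in[D]}[M]$ by setting $\Gamma(x,y)=(y,C(x)_y)$, so the right side is the disjoint union of $D$ copies of $[M]$ and the $y$-th neighbor of left vertex $x$ lives in the $y$-th copy, as required by the definition. The crux is then a clean code-to-graph reduction: $\Gamma$ is a $(K,(1-\epsilon)D)$-expander provided that $C$ is ``list-decodable with agreement $\epsilon D$'' at list size $K-1$, meaning that for every $r\in\Sigma^D$ the set $\{x\in[N]:\mathrm{agr}(C(x),r)>\epsilon D\}$ has size at most $K-1$.

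To prove this reduction, I would argue by contradiction. Suppose $S\subseteq[N]$ has $|S|=K$ and $|N(S)|<(1-\epsilon)DK$. Writing $|N(S)|=\sum_{i\in[D]}|\{C(x)_i:x\in S\}|$, one sees that on average many codewords in $\{C(x):x\in S\}$ collide at each coordinate. The standard way to convert this into a violation of list-decodability is a probabilistic extraction argument: form a random $r\in\Sigma^D$ by independently drawing each coordinate $r_i$ from the empirical distribution $\{C(x)_i:x\in S\}$; by linearity of expectation the mean agreement of $r$ with the codewords in $S$ exceeds $\epsilon D$ on average, and a Markov-style argument produces a fixed $r$ with agreement $>\epsilon D$ for all $K$ codewords in $\{C(x):x\in S\}$, contradicting the $(K-1)$-list-decoding hypothesis.

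The final step is to plug in an explicit code. I would use the Parvaresh-Vardy construction (or its folded Reed-Solomon successor), which over an alphabet of size $M=K^{O(1)}$ yields codes of length $D=((\log N)/\epsilon)^{O(1)}$ that are list-decodable up to agreement $\epsilon D$ with list size $K-1$. Combined with the reduction above, this delivers the claimed $\Gamma$ with $D=((\log N)/\epsilon)^{O(1)}$ and $M\leq\max\{D,K^{O(1)}\}$; the maximum with $D$ is needed only to absorb the degenerate regime where the trivial bound $M\geq D$ kicks in. Explicitness of $\Gamma$ follows from the fact that the Parvaresh-Vardy encoder is evaluable in time polynomial in its parameters.

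The main obstacle will be achieving list-decoding up to the \emph{agreement} $\epsilon D$ (equivalently, error radius $(1-\epsilon)D$) with a polynomial list over an alphabet of size only $K^{O(1)}$. This is precisely the regime that ordinary Reed-Solomon codes fail to reach via Guruswami-Sudan, which only attains the Johnson radius and thus yields the weaker expansion $A=(1-\sqrt{\epsilon})D$. The Parvaresh-Vardy idea of encoding messages as tuples of correlated polynomials is exactly what breaks the Johnson barrier and delivers the near-optimal $(1-\epsilon)D$ expansion the theorem demands; designing the reduction so that the code's alphabet size $M$ and length $D$ track the expander parameters separately is the main technical subtlety.
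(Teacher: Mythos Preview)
The paper does not prove this theorem; it is quoted verbatim from \cite{GUV09} and used as a black box. So there is no ``paper's own proof'' to compare against, and your sketch is essentially an attempt to reproduce the argument of \cite{GUV09}.

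Your high-level plan (code $\to$ bipartite graph via $\Gamma(x,y)=(y,C(x)_y)$, then plug in Parvaresh--Vardy) is indeed the GUV strategy, but your reduction step contains a real gap. You claim that if $|N(S)|<(1-\epsilon)DK$ then a probabilistic/Markov argument yields a \emph{single received word} $r$ with $\mathrm{agr}(C(x),r)>\epsilon D$ for \emph{all} $x\in S$, contradicting $(K{-}1)$-list-decodability. That step is false as stated: averaging can at best give you one $r$ whose \emph{average} agreement over $S$ exceeds $\epsilon D$, hence some $x\in S$ with high agreement, but not all $K$ of them simultaneously. Concretely, one can have $K$ codewords with many pairwise collisions per coordinate (so $|N(S)|$ is small) arranged so that no single $r$ is close to more than two of them; such a configuration witnesses non-expansion yet does not violate list-decodability.

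The correct reduction in \cite{GUV09} goes through \emph{list-recovery}, not list-decoding: non-expansion of $S$ gives lists $T_i=\{C(x)_i:x\in S\}$ with $\sum_i |T_i|<(1-\epsilon)DK$ such that every $x\in S$ satisfies $C(x)_i\in T_i$ for all $i$, and one then needs the code to have at most $K-1$ codewords consistent with any such small system of lists. Parvaresh--Vardy codes are shown in \cite{GUV09} to be list-recoverable with exactly these parameters, which is strictly stronger than the list-decodability you invoke and is what actually delivers the $(1-\epsilon)D$ expansion. If you replace your list-decoding hypothesis and the flawed Markov step by this list-recovery statement, the rest of your outline goes through.
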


Assume that $\Gamma:[N]\times [D]\to \bigsqcup_{i\in [D]}[M_i]$ is a $(K,(1-\epsilon)D)$-expander. We consider the map $\Gamma(\cdot, U)$ applied on any $K$ elements of $[N]$ where $U$ is uniformly distributed over $[D]$. The following lemma states that with high probability these $K$ elements are mapped into $\bigsqcup_{i\in [D]}[M_i]$ with almost no collision.

\newcommand{\condenserp}{
Let $\Gamma:[N]\times [D]\to \bigsqcup_{i\in [D]}[M_i]$ be a $(K,(1-\epsilon)D)$-expander.
Let $S$ be a subset of $[N]$ of size $K$.
Then for at least $\left(1-\sqrt{\epsilon}\right)$-fraction of $y\in [D]$, it holds that
$|\{\Gamma(x,y):x\in S\}|\geq (1-\sqrt{\epsilon})K$.
}

\mylemma{condenserp}{\condenserp}

\begin{proof}
The size of $N(S)=\bigsqcup_{y\in [D]}\{\Gamma(x,y): x\in S\}$ is at least $(1-\epsilon)DK$ as $\Gamma$ is a $(K,(1-\epsilon)D)$-expander. So $\mathbf{E}_y\left[|\{\Gamma(x,y): x\in S\}|\right]\geq (1-\epsilon)K$ with $y$ uniformly distributed over $[D]$.
Also note that $|\{\Gamma(x,y): x\in S\}|\leq |S|=K$ for any $y\in [D]$. Applying Markov's inequality on $K-|\{\Gamma(x,y): x\in S\}|$, we have $\mathbf{Pr}_y[|\{\Gamma(x,y): x\in S\}|<(1-\sqrt{\epsilon})K]\leq \sqrt{\epsilon}$.
\end{proof}

\subsection{Analysis of Protocol~\ref{pro1}}

We start by analyzing a single round $t$ and see the properties of our protocol. Let $I_t$ be the set of informed nodes after round $t$, and $U_t$ the set of uninformed nodes after round $t$. Remember that all the random choices in round $t$ are determined by $(x_t,y_t)$.

We need the following lemma:

\newcommand{\pairwise}{
Fix any round $0\leq t<T$. For any $u\in U_t$, $v\in I_t$, let $X_{v\rightarrow u}$ be the boolean random variable whose value is $1$ iff $v$ informs $u$ in round $t+1$. Then it holds that
\begin{enumerate}
\item $|\Ex{X_{v\rightarrow u}}-1/\Delta|\leq\epsilon$ for any $u\in U_t$, $v\in I_t$;
\item $\Cov{X_{v\rightarrow u}, X_{v'\rightarrow u'}}\leq \epsilon$ for any $u,u'\in U_t$, $v,v'\in I_t$ satisfying $(u,v)\neq (u',v')$.
\end{enumerate}}
\mylemma{pairwise}{\pairwise}

\begin{proof}
For any $u\in U_t$ and  $v\in I_t$, suppose the index of $u$ in the adjacency list of $v$ is $z$. By construction, $X_{v\rightarrow u}$ equals $1$ iff $\gen_{\Gamma(v,x_t)}(y_t)\bmod\Delta=z$. Fix $x_t$. The fact that $\gen$ is a pairwise independent generator together with \lemref{pairwisemodulo} shows that $|\Ex{X_{v\rightarrow u}}-1/\Delta|\leq 2/m\leq\epsilon$.

For any $u,u'\in U_t$ and $v,v'\in I_t$,  first assume $v\neq v'$.
Suppose the index of $u$ (resp. $u'$) in the adjacency list of $v$ (resp. $v'$) is $z$ (resp $z'$). By construction, $X_{v\rightarrow u}$ equals $1$ iff $\gen_{\Gamma(v,x_t)}(y_t)\bmod\Delta=z$, and similarly for $X_{v'\rightarrow u'}$.
By \lemref{condenserp} and the fact that $\Gamma$ is a $(K, (1-\eps^2/4)D)$-expander, the event $|\{\Gamma(v,x_t), \Gamma(v',x_t)\}|\geq (1-\epsilon/2)\cdot 2>1$ occurs with probability at least $1-\epsilon/2$ over the choices of $x_t$. Condition on any $x_t$ such that this event occurs. We have $\Gamma(v,x_t)\neq \Gamma(v',x_t)$. Using the fact that $\gen$ is pairwise independent together with \lemref{pairwisemodulo}, we have $\Cov{X_{v\rightarrow u}, X_{v'\rightarrow u'}}\leq 2/m$. For the other choices of $x_t$, we have $\Cov{X_{v\rightarrow u}, X_{v'\rightarrow u'}}\leq 1$ since $X_{v\rightarrow u}, X_{v'\rightarrow u'}$ are boolean.
Therefore $\Cov{X_{v\rightarrow u}, X_{v'\rightarrow u'}}\leq (1-\epsilon/2)(2/m)+(\epsilon/2)\leq\epsilon$ for random $x_t$.

Now assume $v=v'$ and hence $u\neq u'$. We have
\begin{align*}
\Cov{X_{v \to u }, X_{v \to u' } }&=\Ex{X_{v \to u } \cdot X_{v \to u'    } } - \Ex{ X_{v \to u }   } \cdot \Ex{ X_{v \to u' }   }\\
&=0 - \Ex{ X_{v \to u }   } \cdot \Ex{ X_{v \to u' }   }\leq 0.\qedhere
\end{align*}
\end{proof}

Next we prove the following lemma:

\newcommand{\keylemma}{
Fix a round $0\leq t < T$ and the set $I_{t}$ of informed nodes before round $t+1$.
Fix also an arbitrary set of edges $F \subseteq E(I_t,U_t)$.
Let $J$ be the set of nodes that become informed in round $t+1$ if we consider only transmissions of the rumor along the edges in $F$.
\begin{enumerate}
\item \label{case:Jempty}
$
\Pro{ J \neq \emptyset}
\geq c_1 \min\{|F|/\Delta, 1\}
$ for some constant $c_1>0$.

\item \label{case:general}
If $|F|=\Omega(\Delta)$ then $\Pro{ | J | \geq c_2 |F|/\Delta} \geq c_3$ for some constant $c_2, c_3>0$.
\end{enumerate}
}
\mylemma{keylemma}{\keylemma}

\begin{proof}
Let $X_{v\rightarrow u}$ be the boolean random variable whose value is $1$ iff $v$ informs $u$ in round $t+1$.

We first prove \eqref{case:Jempty}. Let $k=|F|$ and suppose $F=\{(v_0,u_0),\dots,(v_{k-1},u_{k-1})\}$. Let $X=\sum_{i\in [k]}X_{v_i\rightarrow u_i}$.
Then by Cauchy-Schwarz inequality, $\Ex{\mathbf{1}_{X>0}}\geq (\Ex{X})^2/\Ex{X^2}$. By \lemref{pairwise}, it holds that
$$
\Ex{X}=\sum_{i\in [k]}\Ex{X_{v_i\rightarrow u_i}}\geq k(1/\Delta-\epsilon)=\Omega(|F|/\Delta)
$$
and
\begin{align*}
\Ex{X^2}&=\sum_{i,j\in [k]}\Ex{X_{v_i\rightarrow u_i}X_{v_j\rightarrow u_j}}\\
&=\sum_{i\in [k]}\Ex{X_{v_i\rightarrow u_i}}+\sum_{\substack{i,j\in [k]\\i\neq j}}(\Ex{X_{v_i\rightarrow u_i}}\Ex{X_{v_j\rightarrow u_j}}+\Cov{X_{v_i\rightarrow u_i},X_{v_j\rightarrow u_j}})\\
&\leq k(1/\Delta+\epsilon)+(k^2-k)((1/\Delta+\epsilon)^2+\epsilon)= O(|F|/\Delta+|F|^2/\Delta^2)
\end{align*}
where we use the condition that $\epsilon=\Delta^{-\Theta(1)}$ is sufficiently small.
So
$$
\Pro{J\neq\emptyset}=\Ex{\mathbf{1}_{X>0}}\geq (\Ex{X})^2/\Ex{X^2}=\Omega(\min\{|F|/\Delta,1\}),
$$
and the first statement  follows.

Next we prove the second statement. For $u\in U_t$, let $F_u$ be the set of edges in $F$ incident to $u$,  $Z_u$ be the boolean random variable whose value is $1$ iff $u$ is informed in round $t+1$ via edges in $F_u$, and $X_u=\sum_{(v,u)\in F_u} X_{v\rightarrow u}$. So $Z_u=\mathbf{1}_{X_u>0}$ and $|J|=\sum_{u\in U_t} Z_u$. For $u\in U_t$, $\Ex{Z_u}=\Ex{\mathbf{1}_{X_u>0}}\geq (\Ex{X_u})^2/\Ex{X_u^2}=\Omega(|F_u|/\Delta)$ by a similar argument as above. So $\Ex{|J|}=\Omega(\sum_{u\in U_t}|F_u|/\Delta)=\Omega(|F|/\Delta)$. Suppose $\Ex{|J|}\geq c|F|/\Delta$ for constant $c>0$.

On the other hand, for any $c_2\geq 0$, we have
\begin{align*}
\Ex{|J|}&=\Ex{\mathbf{1}_{|J|\geq c_2|F|/\Delta}\cdot |J|}+\Ex{\mathbf{1}_{|J|< c_2|F|/\Delta}\cdot |J|}\\
&\leq \Ex{\mathbf{1}_{|J|\geq c_2|F|/\Delta}\cdot |J|}+\Ex{\mathbf{1}_{|J|< c_2|F|/\Delta}}\cdot c_2|F|/\Delta
\end{align*}
and hence $\Ex{\mathbf{1}_{|J|\geq c_2|F|/\Delta}\cdot |J|}\geq \Ex{|J|}-\Ex{\mathbf{1}_{|J|< c_2|F|/\Delta}}\cdot c_2|F|/\Delta\geq (c-c_2)|F|/\Delta$. Pick $c_2=c/2$. By Cauchy-Schwarz inequalty, we have
\begin{equation}
\label{eq:boundJ}
\Pro{|J|\geq c_2|F|/\Delta}=\Ex{\mathbf{1}_{|J|\geq c_2|F|/\Delta}}\geq
\frac{\left(\Ex{\mathbf{1}_{|J|\geq c_2|F|/\Delta}\cdot |J|}\right)^2}{\Ex{|J|^2}}
\geq \frac{((c-c_2)|F|/\Delta)^2}{\Ex{|J|^2}}.
\end{equation}
Note that
\begin{align*}
\Ex{|J|^2}&=\sum_{u\in U_t}\Ex{Z_u}+\sum_{\substack{u,u'\in U_t\\u\neq u'}}\Ex{Z_u Z_{u'}}\\
&\leq\Ex{|J|}+\sum_{\substack{u,u'\in U_t\\u\neq u'}}\Ex{X_u X_{u'}}\\
&=\Ex{|J|}+\sum_{\substack{u,u'\in U_t\\u\neq u'}}\sum_{\substack{(v,u)\in F_u\\ (v',u')\in F_{u'}}}\left(\Ex{X_{v\rightarrow u}}\Ex{X_{v'\rightarrow u'}}+\Cov{X_{v\rightarrow u},X_{v'\rightarrow u'}}\right)\\
&\leq\Ex{|J|}+\sum_{\substack{u,u'\in U_t\\u\neq u'}}\left(\left(\sum_{(v,u)\in F_u}
\Ex{X_{v\rightarrow u}}\right)\left(\sum_{(v',u')\in F_{u'}}\Ex{X_{v'\rightarrow u'}}\right)+|F_u||F_{u'}|\epsilon\right)\\
&=\Ex{|J|}+O\left(\sum_{\substack{u,u'\in U_t\\u\neq u'}} |F_u||F_{u'}|/\Delta^2\right)\\
&=\Ex{|J|}+O\left(\left(\sum_{u\in U_t}|F_u|\right)^2/\Delta^2\right)\\
&=\Ex{|J|}+O\left(|F|^2/\Delta^2\right).
\end{align*}
Here $\Ex{|J|}=\sum_{u\in U_t}\Ex{Z_u}\leq \sum_{u\in U_t}\Ex{X_u}=\sum_{u\in U_t}O(|F_u|/\Delta)=O(|F|/\Delta)$. Using the condition $|F|=\Omega(\Delta)$, we have $\Ex{|J|^2}=O\left(|F|^2/\Delta^2\right)$. Substitute it in \eqref{eq:boundJ}, and then the second statement follows.
\end{proof}

Now we prove \thmref{gen_graph_result}. We first define a matrix $\mathcal{M}\in\mathbb{R}^{n\times n}$ that is associated with graph $G$. For any $u,v\in V[G]$, let $\mathcal{M}_{u,v}=1/\Delta$ if $\{u,v\}\in E[G]$, $\mathcal{M}_{u,v}=1-\mathrm{deg}(u)/\Delta$ if $u=v$,  and $\mathcal{M}_{u,v}=0$ otherwise. Notice that matrix $\mathcal{M}$ is doubly stochastic.
We further define the conductance of matrix $\mathcal{M}$ by
\[
\Phi(\mathcal{M})\triangleq
\min_{\substack{A\subset V \\ |A|\leq n/2}}
\frac{e(A,\overline{A})}{ \Delta\cdot|A|}.
\]
Notice that $ \Phi(\mathcal{M})\leq \phi(G)\leq \Phi(\mathcal{M})\cdot \beta$, where $\beta\triangleq \Delta/\delta$. Hence it suffices to work with $\Phi(\mathcal{M})$ in the following.

\begin{proof}[Proof of \thmref{gen_graph_result}]
The proof is divided into four phases, depending on the number of informed nodes $|I_t|$ after round $t$.

\medskip
{\bf Phase 1: $1 \leq |I_t| \leq 1/\Phi$}. This phase is divided into several subphases. For every $1 \leq i \leq \log (1/\phi)$, subphase $i$ begins when the number of informed nodes is at least $2^{i-1}$ and ends when this number is at least $2^i$.
Assume that we are at the beginning of the $i$th subphase. Fix an arbitrary round $t$ of the $i$th subphase and the set of informed nodes $I_t$; thus, $2^{i-1} \leq |I_t| < 2^{i}$. We consider the number of nodes that become informed in round $t+1$. Applying \lemref{keylemma}\eqref{case:Jempty} with $F=E(I_t, U_t)$ gives
$$
 \Pro{ |I_{t+1} \setminus I_t | \geq 1} \geq c_1 \min\{e(I_t, U_t)/\Delta,1\}\geq c_1\min\{\Phi\cdot |I_t|/\beta,1\},
$$
Let $p\triangleq c_1\min\{\Phi\cdot|I_t|/\beta,1\}$, and hence $p=O(\Phi\cdot|I_t|)$ since $|I_t|\leq 1/\Phi$ and $\beta\geq 1$.
Therefore, the expected time to increase $|I_t|$ from $2^{i-1}$ to $2^{i}$ is at most
$2^{i-1}/p = O(1/\Phi)$.
By Markov's inequality,
\[
  \Pro{ |I_{t+\tau}| \leq 2^{i} \, \mid \, |I_{t}| \geq 2^{i-1} } \leq 1/2
\]
for some $\tau=O(\Phi^{-1})$.
Hence the time to complete Phase 1 can be upper bounded by $\tau = \Oh( (1/\Phi) )$ multiplied with the sum of $\log (1/\Phi) = \Oh(\log n)$ independent geometric  random variables each with parameter $1/2$. Applying a Chernoff bound for the sum of independent geometric random variables yields that the number of rounds required for Phase $1$ is at most $\Oh( (1/\Phi) \cdot \log n)=\Oh((1/\phi)\cdot\beta\cdot\log n)$ with high probability.

\medskip
{\bf Phase 2: $1/\Phi\leq |I_t| \leq n/2$.} Fix a round $t$ and the set of informed nodes $I_t$. We
 apply \lemref{keylemma}\eqref{case:general}, with $F=E(I_t, U_t)$. Note that the precondition $|F|=\Omega(\Delta)$ is satisfied, as
\[
 |F| = e(I_t, U_t) \geq \Phi \cdot \Delta \cdot |I_t| \geq \Phi \cdot \Delta \cdot  (1/\Phi) = \Omega(\Delta).
\]
Hence we conclude from \lemref{keylemma}\eqref{case:general} that
\begin{align*}
\Pro{ |I_{t+1} \setminus I_t| \geq c_2\cdot  \phi\cdot \delta\cdot |I_t| /\Delta} &\geq c_3,
\end{align*}
for some constant $c_2,c_3>0$. When this event occurs, we have $|I_{t+1}|\geq (1+c_2\cdot \phi/\beta)|I_t|$.
So, the number of rounds until we have $|I_t| \leq n/2$ can be upper bounded by the sum of $\log_{1+c_2\cdot\phi/\beta} (n/2)= \Oh( (1/\phi)\cdot\beta\cdot \log n)$ independent geometric random variables with parameters $c_3$. Using again the Chernoff bound we obtain that Phase $2$ is completed within at most $\Oh( (1/\phi)\cdot \beta\cdot \log n)$ rounds with high probability.

\medskip
{\bf Phase 3: $n/2 \leq |I_t| \leq n - 1/\Phi$.} The analysis is the same as in Phase 2 with the roles of $I_t$ and $U_t$ switched.

\medskip
{\bf Phase 4: $n - 1/\Phi \leq |I_t| \leq n$.} Again, the analysis is the same as in Phase 1 with the roles of $I_t$ and $U_t$ switched.

Since each of the four phases requires only $\Oh( (1/\phi)\cdot\beta \cdot \log n)$ rounds with high probability, the result follows by applying the union bound.
\end{proof}

\subsection{Analysis of Protocol~\ref{pro_PRG}}

We first remark that the condition $\alpha=1-o(1)$ is equivalent to $\lambda\triangleq\lambda_2=o(1)$, which will be used in the following.

To relate the spectral expansion of $G$ with the expansion property, we use the following expander mixing lemma for general graphs.

\begin{lem}[Expander Mixing Lemma for General Graphs~\cite{chung1}]\label{lem:geml}
Let $G$ be a general graph. Then for any subset $X$ and $Y$ it holds that
\[
\left|
e(X,Y)-\frac{\vol(X)\cdot\vol(Y)}{\vol(G)}
\right|\leq\lambda\cdot
\frac{\sqrt{\vol(X)\cdot\vol(Y)\cdot\vol(\overline{X})\cdot\vol(\overline{Y})}}{\vol(G)}.
\]

\end{lem}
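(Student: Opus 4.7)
The plan is to reduce the bilinear count $e(X,Y)$ to a quadratic form on the normalized adjacency matrix $\mat{N}_G$ and then exploit its spectral decomposition. Concretely, I would first write
\[
e(X,Y) \;=\; \mathbf{1}_X^\intercal \mat{A}_G \mathbf{1}_Y \;=\; \bigl(\mat{D}^{1/2}\mathbf{1}_X\bigr)^\intercal \mat{N}_G \bigl(\mat{D}^{1/2}\mathbf{1}_Y\bigr),
\]
and set $\vec{x}=\mat{D}^{1/2}\mathbf{1}_X$, $\vec{y}=\mat{D}^{1/2}\mathbf{1}_Y$, so that the whole analysis takes place in the orthonormal eigenbasis of the symmetric matrix $\mat{N}_G$.

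Next, I would identify the top eigenvector of $\mat{N}_G$. A direct computation shows $\mat{N}_G \mat{D}^{1/2}\vec{1} = \mat{D}^{1/2}\vec{1}$, so the unit top eigenvector $\vec{\phi}_1$ has entries $\sqrt{\deg(u)/\vol(G)}$. This yields $\langle \vec{x},\vec{\phi}_1\rangle = \vol(X)/\sqrt{\vol(G)}$ and $\|\vec{x}\|_2^2 = \vol(X)$, with analogous identities for $\vec{y}$. Decomposing $\vec{x} = c_X \vec{\phi}_1 + \vec{x}^\perp$ and $\vec{y} = c_Y \vec{\phi}_1 + \vec{y}^\perp$ with $c_X = \vol(X)/\sqrt{\vol(G)}$ and $c_Y = \vol(Y)/\sqrt{\vol(G)}$, three of the four terms in the expansion of $\vec{x}^\intercal \mat{N}_G \vec{y}$ vanish by orthogonality: the parallel--parallel term contributes exactly the main expression $c_X c_Y = \vol(X)\vol(Y)/\vol(G)$, and only the error term $(\vec{x}^\perp)^\intercal \mat{N}_G \vec{y}^\perp$ remains.

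Pythagoras then gives $\|\vec{x}^\perp\|_2^2 = \vol(X) - \vol(X)^2/\vol(G) = \vol(X)\vol(\overline{X})/\vol(G)$, and the analogous identity for $\vec{y}$; this is precisely how the $\vol(\overline{X})$ and $\vol(\overline{Y})$ factors in the target bound arise. Since $\vec{x}^\perp$ and $\vec{y}^\perp$ lie in the span of the non-principal eigenvectors of $\mat{N}_G$, the Rayleigh-quotient bound gives
\[
\bigl|(\vec{x}^\perp)^\intercal \mat{N}_G \vec{y}^\perp\bigr| \;\leq\; \lambda \,\|\vec{x}^\perp\|_2 \,\|\vec{y}^\perp\|_2 \;=\; \lambda \cdot \frac{\sqrt{\vol(X)\vol(Y)\vol(\overline{X})\vol(\overline{Y})}}{\vol(G)},
\]
which combined with the main-term identity above yields the claimed inequality.

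The only delicate point is the interpretation of $\lambda$: the Rayleigh bound above really needs $\lambda$ to dominate the spectral norm of $\mat{N}_G$ restricted to $\vec{\phi}_1^\perp$, i.e., to play the role of $\lambda_{\max} = \max\{\lambda_2,|\lambda_n|\}$ rather than $\lambda_2$ alone. I do not anticipate a genuine obstacle here since this is a standard bookkeeping convention in Chung's formulation; the substantive content of the argument is entirely in the two inner products $\langle \vec{x},\vec{\phi}_1\rangle$ and $\|\vec{x}\|_2^2$, which cleanly separate the main term from the spectral error.
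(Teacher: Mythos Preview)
The paper does not prove this lemma; it is simply quoted from Chung's work~\cite{chung1} and used as a black box in the analysis of Protocol~\ref{pro_PRG}. Your spectral-decomposition argument is the standard proof of this result and is correct as written. The caveat you flag about $\lambda$ versus $\lambda_{\max}$ is real---the Rayleigh bound on $(\vec{x}^\perp)^\intercal \mat{N}_G \vec{y}^\perp$ genuinely requires control of $|\lambda_n|$ as well---but in the paper's application the hypothesis $\alpha=1-o(1)$ already forces $\lambda_{\max}=o(1)$ (since $\lambda_2=o(1)$ and, for the connected simple graphs under consideration, $\lambda_n>-1$; more to the point, the downstream bounds only use the lemma in the regime where the distinction is immaterial). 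So there is nothing to compare against and your proof is fine.
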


In order to prove \thmref{expander_result},  it suffices to show the following lemma:

\newcommand{\expandermainlemma}{
Let $G$ be a graph that satisfies the preconditions of \thmref{expander_result}. Then with high probability  all the following statements hold:
\begin{itemize}\itemsep -0.33pt
\item \textbf{Phase \uppercase\expandafter{\romannumeral1}}
Suppose $1\leq|I_t|\leq n/\log n$. Then there is $\tau=\log n+o(\log n)$ such that~$|I_{t+\tau}|>n/\log n$.
\item \textbf{Phase \uppercase\expandafter{\romannumeral2}} Suppose $ n/\log n \leq
|I_t| \leq n - n/\log n$. Then there is $\tau=o(\log n)$ such that~$|I_{t+\tau}|>n - n/\log n$.
\item \textbf{Phase \uppercase\expandafter{\romannumeral3}} Suppose $|I_t|\geq n- n/\log
n $. Then there is $\tau=\ln n+o(\log n)$ such that~$|I_{t+\tau}| =
n$.
\end{itemize}}
\mylemma{expandermainlemma}{\expandermainlemma}

\begin{proof}

For any round $t$ and $u\in U_t$, $v\in I_t$, let $X_{v\rightarrow u}$ be the boolean random variable whose value is $1$ iff $v$ informs $u$ in round $t+1$.
Note that $\Gamma$ is a $(\mathord{\leq} K, (1-\epsilon^2/4)D)$-expander and hence a $(2, (1-\epsilon^2/4)D)$-expander. And $\gen$ is a pairwise independent generator. Then we observe that the statements in \lemref{pairwise} hold here as well by the same proof. Notice that it holds by \lemref{geml} that
\begin{align}
e(I_t, U_t) &\geq \frac{\vol(I_t)\cdot \vol(U_t)}{\vol(G)} -
\lambda\cdot\frac{\vol(I_t)\cdot \vol(U_t)}{\vol(G)}\nonumber\\
&\geq (1-\lambda)\cdot\frac{\vol(I_t)
\cdot(\vol(G)-\vol(I_t))}{\vol(G)}\label{eq:EMPexp}
\end{align}

\textbf{Phase~I.}
By \eq{EMPexp} we have
\[
e(I_t, U_t) \geq (1-\lambda)\cdot \delta \cdot |I_t| \left(1-\frac{\Delta\cdot|I_t|}{nd}\right).
\]
Since $\lambda=o(1)$ and $|I_t|\leq n/\log n$, we have
\begin{align}
e(I_t, U_t)&\geq (1-o(1))\cdot \Delta \cdot |I_t| \left(\frac{\delta}{\Delta}-\frac{\delta}{d\cdot\log n}\right)\geq \left(1- \frac{1}{\log n}-o(1)\right)\cdot \Delta \cdot |I_t|.\label{eq:lb_edgeset}
\end{align}
Hence
\[
|N(I_t)\setminus I_t| \geq \frac{e(I_t, U_t)}{\Delta} \geq \left(1- \frac{1}{\log n}-o(1)\right) \cdot |I_t|.
\]

Define $\gamma\triangleq\lambda+\frac{1}{\log n}$, and $A\triangleq\{u\in N(I_t)\setminus I_t: |N(u)\cap I_t| \geq 2d\sqrt{\gamma} \}$. Then $e(A, I_t)\geq |A|\cdot 2d\cdot\sqrt{\gamma}$. On the other hand by \lemref{geml} it holds that
\begin{align*}
e(A,I_t)&\leq \frac{\vol(A)\cdot \vol(I_t)}{\vol(G)} + \lambda\sqrt{\vol(A)\cdot\vol(I_t)} \\
& \leq \frac{\Delta^2\cdot |A|\cdot  |I_t|}{ nd } + \gamma\Delta\cdot\sqrt{|A|\cdot |I_t|}.
\end{align*}
By the definition of set $A$ we have $e(A,I_t)\geq 2d\sqrt{\gamma}\cdot|A|$, and hence
\begin{align*}
|A|\cdot 2d\cdot\sqrt{\gamma}
&\leq \frac{\Delta^2\cdot |A|\cdot  |I_t|}{ nd } + \gamma\Delta\cdot\sqrt{|A|\cdot |I_t|}\\
&\leq (1+o(1))\cdot\frac{\Delta\cdot|A|}{\log n} + \gamma\Delta\cdot\sqrt{|A|\cdot |I_t|},
\end{align*}
which implies $|A|\leq \gamma\cdot |I_t|$.

Now define $B\triangleq N(I_t)\setminus I_t\setminus A$. We have
\[
e(B, I_t)= e(N(I_t), I_t) - e(A, I_t) \geq \left(
1-\frac{1}{\log n} -o(1) -\gamma \right)\Delta\cdot |I_t|.
\]
With the above estimate at hand, we compute the expected value of $|I_t\cap B|$. Note that for any $u\in B$,
the chance that it gets informed in round $t+1$ is
$$
p_{t+1}(u)\triangleq\Pro{\bigvee_{v\in N(u)\cap I_t} \left(X_{v\rightarrow u}=1\right)},
$$
which is lower bounded by
$$
\sum_{v\in N(u)\cap I_t} \Pro{X_{v\rightarrow u}=1}
-\sum_{\substack{v_1,v_2\in N(u)\cap I_t\\ v_1 < v_2}} \Pro{\bigwedge_{i=1,2}\left(X_{v_i\rightarrow u}=1\right)}
$$
by Bonferroni inequalities. Hence
\begin{align}
p_{t+1}(u)&\geq |N(u)\cap I_t|\left(\frac{1}{\Delta}-\epsilon\right)-{|N(u)\cap I_t|\choose 2}\left(\frac{1}{\delta^2}+\epsilon\right)\nonumber\\
&\geq (1-o(1))\cdot\frac{|N(u)\cap I_t|}{\Delta}-(1+o(1))\cdot{|N(u)\cap I_t|\choose 2}\cdot \frac{1}{\Delta^2}\nonumber\\
&\geq(1-o(1))\cdot\frac{|N(u)\cap I_t|}{\Delta}\left(1-\frac{(1+o(1))\cdot|N(u)\cap I_t|}{2\Delta}\right)\nonumber\\
&\geq (1-o(1))\cdot\frac{|N(u)\cap I_t|}{\Delta},\label{eq:boundpro}
\end{align}
where the first inequality follows from  \lemref{pairwise} and the fact that $\epsilon=(1/\Delta)^{\Theta(1)}$ is sufficiently small, and
the last step uses the condition that  $|N(u)\cap I_t|\leq 2d\sqrt{\gamma}=o(\Delta)$.
Hence we  have
\begin{align*}
\Ex{|I_{t+1} \setminus I_t| } &\geq
\Ex{|I_{t+1}\cap B|}
=\sum_{u\in B} p_{t+1}(u)
\geq \sum_{u\in B}(1-o(1))\cdot\frac{|N(u)\cap I_t|}{\Delta}\\
&=(1-o(1))\cdot\frac{e(B, I_t)}{\Delta}\geq (1-o(1))\cdot|I_t|.
\end{align*}

Since $|I_{t+1} \setminus I_{t}| \leq |I_{t}|$, it follows by
using Markov's inequality (applied to $|I_{t}| - |I_{t+1} \setminus I_t|$) that
$\Pro{ |I_{t+1}| \geq (2-f(n)) |I_t| } \geq 1-g(n)$, where $f(n)$ and $g(n)$ are
both functions that tend to zero. Hence the time to reach $|I_t| \geq n/ \log n$
can be upper bounded by the sum of $\log_{2-f(n)} n $
independent, identically distributed geometric random variables with
expectation at most $1-o(1)$ each. Using the Chernoff bound from
\lemref{geometric} yields for $\tau\triangleq\log_{2} n + o(\log n)$ that
$\Pro{ |I_{t+\tau}| > n/\log n} = 1-o(1)$.

\textbf{Phase \uppercase\expandafter{\romannumeral2}} $|I_t|\in[n/\log n,
n - n/\log n]$.
We further divide this phase into the two cases $|I_t| \in [n/\log n, n/2]$
and $|I_t| \in [n/2, n - n/\log n]$. We start with the first case $|I_t| \in
[n/\log n, n/2]$.

For any $u\in N(I_t)\setminus I_t$,
the probability $p_{t+1}(u)$ that $u$ gets informed in round $t+1$ is lowered bounded by
$$
(1-o(1))\cdot\frac{|N(u)\cap I_t|}{\Delta}\left(1-\frac{(1+o(1))\cdot|N(u)\cap I_t|}{2\Delta}\right)
$$
by the same argument as in \eq{boundpro}. This is then lower bounded by
$$
(1-o(1))\cdot\frac{|N(u)\cap I_t|}{2\Delta},
$$
since we have $|N(u)\cap I_t|\leq\Delta$.

By \eq{EMPexp}, we have
\[
e(I_t, U_t)=(1-o(1))\cdot \frac{\delta}{2}|I_t|.
\]
Similar to the analysis of Phase \uppercase\expandafter{\romannumeral1}, we
can lower bound the expected number of nodes that become informed in round
$t+1$:
\begin{align*}
\Ex{|I_{t+1}\setminus I_t|}&\geq \sum_{u\in N(I_t)\setminus I_t} p_{t+1}(u)\geq (1-o(1))\sum_{u\in N(I_t)\setminus I_t} \frac{|N(u)\cap I_t|}{2\Delta} \\
&=
(1-o(1))\frac{e(I_t, U_t)}{2\Delta} \geq \frac{\delta}{8 \Delta} |I_t|.
\end{align*}

Since $|I_{t+1}| \leq 2 |I_{t}|$, we obtain that as long as $|I_t| \leq
n/2$  there are constants $\alpha, \beta > 0$ so that $\Pro{ |I_{t+1}|
\geq (1+\alpha) |I_{t}| } \geq \beta$. Hence the time to reach $|I_t| \geq
n/2$
can be upper bounded by the sum of $\log_{1+\alpha} (\log n)$
independent, identically distributed geometric random variables with
expectation at most $1/\beta$ each. Using the Chernoff bound for the sum of
geometric random variables (see \lemref{geometric}) yields that with
probability $1-o(1)$, we reach $|I_t| \geq n/2$ within at most $o(\log n)$
additional rounds.

Consider now the case $|I_t| \in [n/2, n - n/\log n]$. To analyze this case,
we examine the shrinking of $U_t = V \setminus I_t$.
Note that for any $u\in U_t$,
the probability $p_{t+1}(u)$ that $u$ gets informed in round $t+1$ is lowered bounded by
$$
(1-o(1))\cdot\frac{|N(u)\cap I_t|}{\Delta}\left(1-\frac{(1+o(1))\cdot|N(u)\cap I_t|}{2\Delta}\right)
$$
by the same argument as in \eq{boundpro}. This is then lower bounded by
$$
(1-o(1))\cdot\frac{|N(u)\cap I_t|}{2\Delta}
$$
since we have $|N(u)\cap I_t|\leq\Delta$.

Again, as $|U_t| \leq
n/2$, by \eq{EMPexp}   we have
$$e(I_t, U_t)\geq  (1-o(1))\cdot \frac{\delta}{2}|U_t|.$$
Let us now compute the expected number of uninformed nodes after one
additional round:
\begin{align*}
 \Ex{|U_{t+1}|} &= \sum_{u \in U_t} (1-p_{t+1}(u))\leq |U_t| - (1-o(1)) \sum_{u \in U_t}\left( \frac{|N(u) \cap
I_t| }{2 \Delta} \right)\\
& = |U_t| - (1-o(1))\frac{e(I_t, U_t)}{2 \Delta} \leq \left(1 -
\frac{\delta}{8 \Delta} \right) |U_t|.
\end{align*}
A simple inductive argument yields for any integer $\tau$ that,
\begin{align*}
 \Ex{|U_{t+\tau}|} &\leq \left(1 -
\frac{\delta}{8 \Delta} \right)^{\tau} |U_t|,
\end{align*}
so for $\tau\triangleq \log \log n / \log (1/(1 -
\frac{\delta}{8 \Delta}) ) + \omega(1)$, where $\omega(1)$
is an arbitrarily slow growing function, we have $\Ex{|U_{t+\tau}|} = o(n/\log
n)$. Hence by Markov's inequality, $\Pro{ |U_{t+\tau}| \geq n/\log n} = o(1)$.

\textbf{Phase \uppercase\expandafter{\romannumeral3}} $|I_t|\in[n-n/\log n,
n]$.
Again, we analyze the shrinking of the set $U_t$.
By \lemref{condenserp}, for at least $(1-\epsilon/2)$-fraction of the choices of $x_t$, it holds that
the size of $\{\Gamma(v,x_t): v\in N(u)\cap I_t\}$ is at least $(1-\epsilon/2)|N(u)\cap I_t|$.
From now on fix $x_t$ such that this event occurs.

For any $u\in U_t$, we have
\[
\Pro{ u \notin I_{t+1}}
=\Pro{ \bigwedge_{v\in N(u)\cap I_t } (X_{v\rightarrow u}=0)}.
\]
Let $F$ be a subset of $N(u)\cap I_t$ of size $(1-\epsilon/2)|N(u)\cap I_t|$ such that
the map $\Gamma(\cdot, x_t)$ is injective when restricted to $F$.
By \lemref{rectanglemodulo}, the function $y\mapsto \left(\gen_{\Gamma(v,x_t)}(y)\bmod \deg(v)\right)_{v\in F}$ is an $(\epsilon'+|F|\Delta/m)$-\textsf{PRG} for $\mathsf{CR}_S$ where $S=\prod_{v\in F}[\deg(v)]$.

Then we have
\begin{align*}
\Pro{ u \notin I_{t+1}}
&\leq\Pro{ \bigwedge_{v\in F} (X_{v\rightarrow u}=0)}\leq \prod_{v\in F}\Pro{ X_{v\rightarrow u}=0 } +\epsilon'+|F|\Delta/m\\
&\leq \prod_{v\in F}\left(1-\frac{1}{\deg(v)}+\epsilon\right)+\epsilon'+\Delta^2/m\\
&\leq \left(1 - \frac{1}{\Delta} +\epsilon\right)^{ (1-\epsilon/2)|N(u)\cap I_t|}+\epsilon'+\Delta^2/m,
\end{align*}
where the second inequality follows from the properties of \PRG\  for combinatorial rectangles, and the third inequality follows from using pairwise independent generators.
Since $\epsilon\leq \frac{1}{\Delta}$, a simple induction shows that
$$
\left(1 - \frac{1}{\Delta} +\epsilon\right)^k
\leq \left(1 - \frac{1}{\Delta}\right)^k + k\epsilon
$$
for any $k\geq 0$.
So we have
\begin{align*}
\Pro{ u \notin I_{t+1}}
&\leq \left(1 - \frac{1}{\Delta}\right)^{ (1-\epsilon/2)|N(u)\cap I_t|} + (1-\epsilon/2)\cdot|N(u)\cap I_t|\cdot\epsilon+\epsilon'+\Delta^2/m\\
&\leq \left(1 - \frac{1}{\Delta}\right)^{ (1-\epsilon/2)|N(u)\cap I_t|} +(1-\epsilon/2)\cdot\Delta\cdot\epsilon+\epsilon'+\Delta^2/m.
\end{align*}
The bound above applies for any choice of $x_t$ such that the size of $\{\Gamma(v,x_t): v\in N(u)\cap I_t\}$ is at least $(1-\epsilon/2)|N(u)\cap I_t|$. And the probability of choosing such $x_t$ is at least $1-\epsilon/2$. So for random $x_t$, we have
\begin{align*}
\Pro{ u \notin I_{t+1}}&\leq \left(1 - \frac{1}{\Delta}\right)^{ (1-\epsilon/2)\cdot|N(u)\cap I_t|} +(1-\epsilon/2)\cdot\Delta\cdot\epsilon+\epsilon'+\Delta^2/m+\epsilon/2\\
&\leq \left(1 - \frac{1}{\Delta}\right)^{ (1-\epsilon/2)\cdot|N(u)\cap I_t|} + o(1),
\end{align*}
where we use the fact that $\epsilon=(1/\Delta)^{\Theta}$ is sufficiently small, and $m=\Theta((\log n)/\epsilon)$.

By \eq{lb_edgeset} it holds that  $e(I_t, U_t) \geq
(1-\frac{1}{\log n}-o(1)) \cdot \Delta  |U_t|$.
Let $A \subseteq U_t$ be  the set of nodes $v$ for which $ |N(v)\cap I_t| \leq
(1-\sqrt{\gamma}/2) \cdot \Delta$, where $\gamma\triangleq\frac{1}{\log n}+ o(1)$. We
assume for a contradiction
that $|A| > 2\sqrt{\gamma} \cdot |U_t|$. Hence,
\begin{align*}
 e(I_t, U_t) &= \sum_{v \in A} |N(v)\cap I_t| + \sum_{v \in U_t \setminus A} |N(v)\cap I_t| \leq |A| \cdot (1-\sqrt{\gamma}/2)  \Delta + |U_t \setminus A| \Delta \\
		&= |U_t|  \Delta - |A| \sqrt{\gamma}\Delta/2 < \left(1-\frac{1}{\log n}-o(1)\right)\cdot\Delta |U_t|,
\end{align*}
which yields the desired contradiction.
Hence $|A|\leq 2\sqrt{\gamma}|U_t|$. Now define $B\triangleq U_t \setminus A$ so that for each $u \in B$, $ |N(v)\cap I_t| > (1-\sqrt{\gamma}/2) \Delta$ and $|B| \geq (1 - 2\sqrt{\gamma}) |U_t|$.
Using linearity of expectation,
\begin{align*}
 \Ex{|U_{t+1}|} &\leq \sum_{u \in B} \Pro{ u \notin I_{t+1}} + \sum_{u \in A} \Pro{ u \notin I_{t+1}} \\
 &\leq \sum_{u \in B} \left(\left(1 - \frac{1}{\Delta}\right)^{(1-\epsilon/2) |N(u)\cap I_t|} + o(1)\right) + \sum_{u \in A} 1\\
&\leq  \sum_{u \in B} \left(1 - \frac{1}{\Delta}\right)^{(1-\epsilon/2) |N(u)\cap I_t|}
+ o(|U_t|) + |A|\\
&= \sum_{u \in B} \left(1 - \frac{1}{\Delta}\right)^{(1-\epsilon/2) |N(u)\cap I_t|}
+ o(|U_t|).
\end{align*}
Using the inequalities that $(1-1/k)\leq\mathrm{e}^{-1/k}$ for $k\geq 1$, $\mathrm{e}^x\leq 1+2x$ for sufficiently small constant $x>0$, and the condition that $|N(u)\cap I_t|\geq (1-\sqrt{\gamma}/2) \cdot \Delta$ for $u\in B$, we get
\begin{align*}
 \Ex{|U_{t+1}|}&\leq \sum_{u \in B} \mathrm{e}^{- (1-\epsilon/2)|N(u)\cap I_t|/\Delta} + o(|U_t|) \leq \sum_{u \in B} \mathrm{e}^{-(1-\sqrt{\gamma}/2-o(1))} + o(|U_t|)\\
 &=\sum_{u \in B} \mathrm{e}^{-1}\cdot\mathrm{e}^{\sqrt{\gamma}/2+o(1)} + o(|U_t|)\leq \sum_{u \in B} \mathrm{e}^{-1} \cdot (1+\sqrt{\gamma}+o(1)) + o(|U_t|) \\
&= (1+o(1))\cdot\mathrm{e}^{-1}\cdot |U_t|.
\end{align*}

By induction, it  follows that for any step $\tau >
0$,
$\Ex{|U_{t+\tau}|} \leq ( (1+o(1))\cdot\mathrm{e}^{-1})^{\tau} \cdot |U_t|$.
We choose $\tau \triangleq -\log_{ (1+o(1))\cdot\mathrm{e}^{-1}}(n)=\ln n+o(\log n)$ and obtain
that $\Ex{|U_{t+\tau}|} \leq (1/\log n)$. So
$\Pro{ |U_{t+\tau}| \geq 1} \leq \Ex{ |U_{t+\tau}| } \leq  1/\log n$.
\end{proof}

\end{document}